\newcommand{\fullv}[1]{#1}
\newcommand{\shortv}[1]{}
\newcommand{\osc}{\mathit{cons}}
\newcommand{\se}{\mathit{se}}
\newcommand{\F}{\mathcal{F}}
\newcommand{\commentout}[1]{}
\declaretheorem{theorem}
\declaretheorem{lemma}
\newcommand{\vsigma}{\vec{\sigma}}
\newcommand{\R}{\mathcal{R}}
\newcommand{\I}{\mathcal{I}}
\newcommand{\true}{\mathit{true}}
\newcommand{\false}{\mathit{false}}
\begin{document}

\shortv{



\CopyrightYear{2016}
\setcopyright{acmlicensed}
\conferenceinfo{PODC'16}{July 25--28, 2016, Chicago, IL, USA}
\isbn{978-1-4503-3964-3/16/07}
\acmPrice{\$15.00}
\doi{http://dx.doi.org/10.1145/2933057.2933088}

}

\shortv{
\numberofauthors{2} 
\author{
\alignauthor
Joseph Y. Halpern\\
       \affaddr{Cornell University}\\
       \email{halpern@cs.cornell.edu}
\alignauthor
Xavier Vila\c{c}a\\
       \affaddr{INESC-ID}
       \affaddr{Instituto Superior T\'{e}cnico}
       \affaddr{Universidade de Lisboa}\\
       \email{xvilaca@gsd.inesc-id.pt}
}
}

\title{Rational Consensus\shortv{: Extended Abstract}}

\fullv{
\author{Joseph Y. Halpern%
  \thanks{Supported in part by NSF grants 
IIS-0911036 and CCF-1214844, and by AFOSR grant
FA9550-12-1-0040,
and ARO grant W911NF-14-1-0017.}\\
Cornell University\\
halpern@cs.cornell.edu 
\and Xavier Vila\c{c}a%
\thanks{Supported in part by Funda\c c\~{a}o para a Ci\^{e}ncia e Tecnologia (FCT) 
through projects with references PTDC/EEI-SCR/1741/2014 (Abyss), UID/CEC/50021/2013,
and ERC-2012-StG-307732, and through the PhD grant SFRH/BD/79822/2011.}
\\
INESC-ID Lisboa\\
xvilaca@gsd.inesc-id.pt}
}

\maketitle

\fullv{
\thispagestyle{empty}}

\begin{abstract}
We provide a game-theoretic analysis of consensus,
assuming that processes are controlled by rational agents and may fail 
by crashing.  We consider agents that \emph{care only about consensus}:
that is, (a) an agent's utility 
depends only on the consensus value achieved (and not, for example, on
the number of messages the agent sends) and  (b) agents strictly prefer
reaching consensus to not reaching consensus.
We show that, under these assumptions, there is no 
\emph{ex post Nash Equilibrium}, even with only one failure. 
Roughly speaking, this means that there must always exist a
\emph{failure pattern} (a description of who fails, when they fail, and
which agents they do not send messages to in the round that they fail)
and initial preferences  for which an agent can gain by deviating.
On the other hand, if we assume that there is a distribution $\pi$ 
on the failure patterns and initial preferences, then under minimal
assumptions on $\pi$, there is a Nash equilibrium that tolerates $f$
failures (i.e., $\pi$ puts probability 1 on there being at most $f$
failures) if $f+1 < n$ (where $n$ is the total number of agents).
Moreover, we show that a slight extension of the Nash equilibrium
strategy is also a \emph{sequential} equilibrium (under the same
assumptions about the distribution $\pi$).
\end{abstract}

\shortv{
\commentout{
\begin{CCSXML}
<ccs2012>
 <concept>
  <concept_id>10010520.10010553.10010562</concept_id>
  <concept_desc>Computer systems organization~Embedded systems</concept_desc>
  <concept_significance>500</concept_significance>
 </concept>
 <concept>
  <concept_id>10010520.10010575.10010755</concept_id>
  <concept_desc>Computer systems organization~Redundancy</concept_desc>
  <concept_significance>300</concept_significance>
 </concept>
 <concept>
  <concept_id>10010520.10010553.10010554</concept_id>
  <concept_desc>Computer systems organization~Robotics</concept_desc>
  <concept_significance>100</concept_significance>
 </concept>
 <concept>
  <concept_id>10003033.10003083.10003095</concept_id>
  <concept_desc>Networks~Network reliability</concept_desc>
  <concept_significance>100</concept_significance>
 </concept>
</ccs2012>  
\end{CCSXML}

\ccsdesc[500]{Computer systems organization~Embedded systems}
\ccsdesc[300]{Computer systems organization~Redundancy}
\ccsdesc{Computer systems organization~Robotics}
\ccsdesc[100]{Networks~Network reliability}

\printccsdesc

\terms{Theory}

}}

\shortv{\keywords{Consensus; Game Theory; Crash Failures}}

\fullv{
\vspace{1in}


\newpage
\pagestyle{plain}
\setcounter{page}{1}
}

\section{Introduction}
Consensus is a fundamental problem in distributed computing;
it plays a key role in state machine replication,
transaction commitment, 
and many other tasks where agreement among processes is required. 
It is well known that consensus cannot be 
deterministically
achieved in asynchronous
systems \cite{FLP}, but can be achieved in synchronous systems even if
we allow Byzantine failures (see, e.g., \cite{Lyn97}).  The assumption
in all these solutions is that the reason that processes do not follow
the protocol is that they have been taken over by some adversary.
\shortv{\vfill\eject}

There has been a great deal of interest recently in viewing at least
some of the processes as being under the control of rational agents, who try to 
influence outcomes in a way that promotes their self interest. 
Halpern and Teague \citeyear{HT04} were perhaps the first to do this.
Their focus was on secret sharing and multiparty computation.
Following \shortcite{ADH13,AGLS14,BCZ12,GKTZ12}, we are interested in
applying these 
ideas to standard problems in game theory.
And like \shortcite{ADGH06,Alvisi05,BCZ12}, we are interested in what happens
when there is a mix of rational and faulty agents.  For the purposes
of this paper, we restrict to crash failures.  As we shall see, a
number of subtle issues arise even in this relatively simple setting.

We focus on the 
\emph{fair}
consensus problem,
where fairness means that the input of every agent is selected with equal probability.
Fairness seems critical in applications where we do not want
agents to be able to influence an outcome unduly.
For instance, when agents must decide whether to 
commit or abort a transaction, it is useful to ensure that
the outcome reflects the preferences of the agents, so that if 
a majority of agents prefers a particular outcome, it is selected with
higher probability.
Abraham, Dolev and Halpern \citeyear{ADH13} 
present a protocol for
fair 
leader election that even 
tolerates
coalitions of rational agents.
That is, in equilibrium, a
leader is elected, and
each agent is elected with equal probability.
Fair
leader election can be
used to solve 
fair
consensus (for example, once a leader is elected, the
consensus value can be taken to be the leader's value).
%
However, the protocol of \cite{ADH13} assumes that there are no
faulty agents.
Groce et al. \citeyear{GKTZ12} directly provide protocols for consensus with
rational agents, but again, they do not consider faulty agents and do
not require fairness.
Afek et al.~\citeyear{AGLS14}
and Bei, Chen, and Zhang~\citeyear{BCZ12}
provide 
protocols for consensus with crash failures and rational agents.
However, Afek et al.'s 
protocol works only under
strong
assumptions about  agents' preferences, such as
an agent having a strict preference for outcomes where it learns the
input of other agents,
while Bei, Chen, and Zhang  require that their protocol be robust to deviations
(that is, it achieves agreement even if rational agents deviate),
a requirement that we view as unreasonably strong (see
Section~\ref{sec:results}).
Neither of these protocols satisfy the fairness requirement.
Moreover, the protocol proposed by Afek et al. is not even
an equilibrium if some agent knows the input of other agents.
As we show, this is not an accident.
\commentout{
In this paper we address consensus in synchronous systems, where
solutions are known for different fault models, and study the
feasibility of deriving efficient solutions in the presence of
collusive rational behavior and crashes.  This is motivated by the
observation that, in many scenarios, realistic threats lie somewhere
between the two most common models of faults, namely, crash and
Byzantine faults. 
On one hand, solutions with crash faults assume only that $n>f$\,\cite{},
where $n$ is the total number of agents and $f$ the maximum number
of faults that may occur. 
On the other hand, it is possible to tolerate Byzantine faults, which
cover arbitrary deviations 
from the protocol, but in this case we need $n>3f$\,\cite{SD},
a price that is too high when Byzantine faults are caused by agents
with rational intent. 
This is because rational agents are driven by the goal of utility maximization.
By aligning this goal with a correct execution of the protocol, we can persuade rational agents to behave as if they were correct.
This opens the door for more efficient solutions than Byzantine consensus when only crashes cannot be avoided.
In particular, we expect to find solutions for values of
$n$ lower than $3f +1$. With this in mind, we use Game Theory to study
conditions under which we can devise equilibrium solutions to consensus that tolerate 
crashes and cope with rational behavior. More precisely, we are
interested in $(k,f)$-robust 
protocols\,\cite{Abraham:06}, which provide the guarantee that no
coalition of at most $k$ agents can increase the utility of any of its
members by jointly deviating, regardless of the inputs of consensus
and the crash pattern of up to $f$ faulty agents.
}

To explain our result, we need to briefly recall the standard notion
of \emph{ex post Nash equilibrium}.  In a setting where we have an
adversary, a protocol is an ex post equilibrium if no agent has any
incentive to deviate no matter what the adversary does.  Formally,
``no matter what the adversary does'' is captured by saying that even if
we fix the adversary's choice (so that the agents essentially know
what the adversary does), agents have no incentive to deviate.
Abraham, Dolev, and Halpern \citeyear{ADH13} provide protocols for
leader election (and hence consensus) that achieve ex post Nash
equilibrium if there are no failures.  Here, we show 
that even in synchronous systems, 
there
is no consensus protocol that is an ex post Nash equilibrium if there
can be even one crash failure.  

In the case of crash failures,  the adversary can be viewed as
choosing two things: the
\emph{failure pattern}---which agents fail, when they fail, and which
other agents they send a message to in the round that they fail, and 
the \emph{initial configuration}---what the initial preference of
each of the agents is. 
Roughly speaking, the reason that we cannot obtain an ex post Nash
equilibrium is that if the failure pattern and initial
configuration have a
specific form, a rational agent $i$ can take advantage of knowing this 
to increase the probability of
obtaining consensus on its preferred 
value.

There might seem to be an inconsistency here.  It is well known that
we can achieve consensus in synchronous systems with crash failures,
so it seems that we shouldn't have any difficulty dealing with 
one possibly faulty agent and one rational agent who does not follow
the protocol.  After all, we can view a
rational agent who deviates from the protocol as a faulty agent.
But there is no contradiction.  When the agent deviates from the
purported equilibrium, consensus is still reached, just on a different
value.   
That is, a rational agent may want to deviate so as to bias the
decision, although a consensus is still reached.

To get around our impossibility result, rather than trying to achieve
ex post Nash equilibrium, we assume that there is some distribution
$\pi$ on \emph{contexts}: pairs $(F,\vec{v})$ consisting of a failure
pattern $F$ and an initial configuration $\vec{v}$.  We show that 
under
appropriate
assumptions about $\pi$, if agents \emph{care only
  about consensus}---specifically, if (a) an agent's utility 
depends only on the consensus value achieved (and not, for example, on
the number of messages the agent sends) and (b) agents strictly prefer
reaching consensus to not reaching consensus---then there is a
Nash equilibrium that tolerates up to $f$ failures,  
as long as $f+1 < n$, 
where $n$ is the total number of agents.
Specifically, we make two assumptions about $\pi$, namely,
we assume that $\pi$
\emph{supports reachability} and is \emph{uniform}.
Roughly speaking, we say that $\pi$ supports reachability if it attributes small
probability to particular 
failure patterns that prevent information from one agent reaching an
agent that has not crashed by the end of the protocol;
we say that $\pi$ is uniform if it attributes equal probability to
equivalent failures of different agents. 
We believe that these assumptions apply in many practical systems; we
discuss this further in Section~\ref{sec:discussion}.

Our Nash equilibrium strategy relies on ``threats''; the threat that
there will be no consensus if an agent deviates (and is caught).
There might be some concern that these are empty threats, which will
never be carried out.  The notion of \emph{sequential
  equilibrium} \cite{KW82} is intended to deal with empty
threats.  Roughly speaking, a strategy is a sequential equilibrium if
all agents are best responding to what the others are doing even off
the equilibrium path.  We
generalize
sequential equilibrium to our
setting, where there might be failures, and show that 
the strategy that gives a Nash equilibrium can be slightly extended to
give a sequential equilibrium that tolerates up to $f$ failures.

\commentout{
The situation changes significantly if we assume that agents' utilities
are such that there is a small negative utility for sending a
message.  In this case, we can show that under some reasonable
assumptions on the distribution $\pi$, there will again be no
equilibrium tolerating failures.  

While charging for messages may make sense in a mobile setting, where
there are concerns about energy, in most settings where consensus is
currently applied, we can treat message costs as negligible.  However,
the assumption that agents strictly prefer to reach consensus to not
reaching  consensus may not be so reasonable.  In realistic systems,
consensus is not run just once, but often.  Each time that consensus
is run, only some of the agents may really care about the outcome;
other agents may be indifferent (and also indifferent to whether or
not a consensus is actually achieved).  Moreover, there may be a small
cost to participating in the consensus protocol.  Clearly, in a
one-shot consensus where some agents are indifferent  to the outcome
and there is a cost to participating, the agents who are indifferent
will not participate.  This may cause problems if a quorum (i.e., a
certain minimum participation level) is needed for consensus to
succeed.  We show that by using ideas of \emph{scrip systems}
\cite{KFH15}, we can find an equilibrium in a setting where we have
repeated consensus attempts, at least the initiator of each
consensus cares about the outcome, and each agent is equally likely to
want to initiate a consensus attempt at any time.  Intuitively, 
scrip serves as a bookkeeping mechanism.  Agents get paid scrip (i.e.,
tokens, which have no intrinsic value) to participate in a consensus
attempt, and if they initiate a consensus attempt, they must pay
others.  Thus, agents are motivated to participate in a consensus
attempt where they are indifferent to the outcome so that they can pay
others to participate when they do care about the outcome.  As we
show, our solution to this problem has the added benefit of also dealing
with the case where agents have a cost for sending messages.

To summarize, the paper makes the following contributions:
\begin{itemize}
\item We show that an ex post Nash equilibrium for
consensus does not exist, even in synchronous systems with at most one
failure.
\item We show that if we make reasonable assumptions about agents'
  beliefs regarding failure patterns and agents care only about
  consensus, then we can obtain a 
  $k$-resilient equilibrium in the presence of up to $f$ crash
  failures if $k+f < n$.
\item If agents either have a negative utility due to sending messages or
  agents can have a   participation cost that exceeds the utility of
  achieving consensus, then if the agents'
  probability $\pi$ on failure patterns satisfies reasonable
  assumptions, there is no Nash equilibrium in the
  presence of failures.  
\item If we consider repeated consensus attempts and assume that
  with reasonable probability each agent will have a utility of
  achieving consensus that exceeds the participation cost, 
then under reasonable assumptions about $\pi$, there is an $\epsilon$-Nash
equilibrium, where $\epsilon$ can be made arbitrarily small.
Moreover, this result holds even if the agents' utility 
functions also ascribe a cost to sending messages.
\end{itemize}

\commentout{
The remainder of the paper is organized as follows. In 
Section~\ref{sec:model}, we introduce our model of one-shot consensus
and the notion of $\pi$-Nashness. 
In Section~\ref{sec:impossible}, we characterize utilities for which no $\pi$-Nash solutions exist.
Then, in Section~\ref{sec:one-shot}, we present a solution that is $\pi$-Nash when agents do not incur costs.
Section~\ref{sec:scrips} describes our results for scrip systems. We conclude the paper in Section~\ref{sec:conc}.
}
}

The rest of this paper is organized as follows.  In
Section~\ref{sec:model} we discuss the model that we are using.  Our
main technical results on Nash equilibrium and sequential equilibrium are given 
in Section~\ref{sec:results}.
We conclude with
some discussion of the assumptions in
Section~\ref{sec:discussion}.

\commentout{
\section{Related Work}
An extensive literature has addressed the problem of consensus in the
presence of failures, 
assuming that nonfaulty agents follow the protocol\,\cite{??}.
To the best of our knowledge, Grocer et al.\,\cite{??} are the only authors 
that consider rational behaviour in this line of work.
They determine bounds on the maximum number $f$ of faulty agents
that can be tolerated in a solution to consensus,
under the assumption that faulty agents are controlled
by a rational adversary, whose goal is to maximize a utility function,
whereas nonfaulty agents follow the protocol.
The authors do not perform a game theoretical analysis.
In addition, unlike~\cite{??}, we do not assume an upper bound
on the number of rational agents and we tolerate crashes.

More recently, multiple authors have taken a game theoretical approach to solve
distributed problems such as secret-sharing and multi-party computation\,\cite{??},
network formation\,\cite{??}, leader election\,\cite{??}, among others\,\cite{??}.
In particular, Abraham et al.\,\cite{??} proposed a protocol
for fair leader election in an asynchronous system,
which is resilient to collusion among rational agents.
To the best of our knowledge, this is the only work that proposes a solution that can be used
to solve fair consensus. Unfortunately, their solution does not tolerate crashes.
In~\cite{Afek,Wei}, ex post Nash equilibria solutions to rational consensus with crashes
are proposed that do not satisfy fairness. As argued in~\cite{Wei},
these solutions are dictatorial, in the sense that, for every failure pattern,
there is an agent whose value is always chosen as the consensus value,
regardless of the values of other agents.
Avoiding a dictatorship is often an important requirement in practice,
and has been an important research topic in social choice theory\,\cite{??}.
Our solution to consensus satisfies fairness, which is 
a natural requirement for avoiding dictatorships.
As we show, any fair solution to consensus with crashes cannot 
be an ex post Nash equilibrium, thus techniques proposed in existing
literature cannot be applied to solve fair consensus.
Another limitation of existing work is that they make strong assumptions about the utility.
In~\cite{??}, it is assumed that agents prefer to know
the value decided by some other agent in consensus before deciding,
so they cannot learn the inputs of other agents beforehand;
in~\cite{??}, agents are assumed to be averse
to the risk of not reaching consensus, hence they avoid that outcome at all costs.
Our results hold for any utility function, even if agents know
all the inputs beforehand,
as long as agents care only about consensus.
Specifically, agents prefer reaching consensus to any other outcome,
but after deciding on a value and reaching consensus,
they are indifferent regarding whether other agents decided the same value,
Furthermore, agents are not risk-averse to consensus not being reached,
so unlike~\cite{??} they may deviate
to increase the likelihood of their value being decided,
even if with that they risk a consensus not being reached with positive probability.
Our work is also the first to study sequential equilibria in consensus with crashes.
}

\section{Model}
\label{sec:model}
We consider a synchronous message-passing system with $n$ agents
and reliable communication 
channels between each pair of agents. Time is divided into
synchronous rounds.  
Each round is divided into a \emph{send phase}, where agents send
messages to other agents, a \emph{receive phase}, where
agents receive messages sent by other agents in the send
phase of that round, and an \emph{update phase}, where agents
update the value of variables based on what they have sent and received.
We denote by $N$ the set of agents and assume that they have
commonly-known identifiers in 
$\{0, \ldots, n-1\}$.
Round $m$ takes place between time $m$ and time $m+1$.

We now formalize the notion of run.  We take a \emph{round-$m$
  history} for agent 
$i$ to be a sequence of form 
$(v, t_1, \ldots, t_{m-1})$,
where $v$ is agent $i$'s initial preference and $t_j$ has the form
 $(s_j,r_j,d_j)$, where $s_j$ is the set of messages that $i$
sent in round $j$ tagged by who they were sent to, $r_j$ is the set of
messages that $i$ received in round
$j$
tagged by who they were sent by, and
$d_j \in \{\lambda\} \cup V$
is $i$'s decision (where $\lambda$
denotes that no decision has been made yet
and $V$ is the set of decision values).  
A \emph{global (round-$m$) history} has the form $(h_1, \ldots,
h_n)$ where $h_i$ is a round-$m$ history,
if $j$ receives a message ${\bf m}$ from $i$ in
round $m'$ of $h_j$, then $i$ sends ${\bf m}$ to $j$ in round $m'$ in
history $h_i$.  A \emph{run} $r$ is a function from 
time (which ranges over the natural numbers) to global
histories such that (a) $r(m)$ is a global round-$m$ history and (b)
if $m < m'$, then for each agent $i$, $i$'s history in $r(m)$ is a
prefix of $i$'s history in $r(m')$.

Agents are either correct or faulty in a run.
An agent fails only by crashing. If it crashes in round $m$ of run
$r$, then it may send a message to some subset of agents in round $m$,
but from then on, it sends no further messages.  We assume that all
messages sent are received in the round in which they are sent.
Thus, we take 
a \emph{failure} {\bf f} of agent $i$ to be a tuple $(i,m,A)$, where $m$ is a
round number (intuitively, the round at which $i$ crashes) and $A$ is
a set of agents (intuitively, the set of agents $j$ to whom $i$ can
send a message before it fails). 
We assume 
that
if $m >1$, then $A$
is non-empty,
so that $i$ 
sends a message to at
least one agent in round $m$ if $i$ fails in round $m$.
(Intuitively, if $m > 1$, we are identifying the failure pattern where
$i$ crashes in round $m$ and sends no message with the failure pattern
where $i$ crashes in round $m-1$ and sends messages to all the agents.)
A \emph{failure pattern} 
$F$
is a set of failures of distinct agents $i$.
A run $r$ \emph{has context $(F,\vec{v})$} if (a) $\vec{v}$
describes the initial preferences of the agents in $r$, (b) if 
$(i,m,A) \in F$, then $i$ sends all messages according to its protocol
in each round $m' < m$,  sends no messages in each round $m' > m$, and
sends messages according to its protocol only to the agents in $A$ in
round $m$, and (c) all messages sent in
$r$ are received in the round that they are sent. Let $\R(F,\vec{v})$
consist of all runs $r$ that have context $(F,\vec{v})$. Let $\R(F)$
consist of all runs that have $F$ as the set of failures.

In the consensus problem, we assume that each agent $i$ has an
initial preference $v_i$ in some set $V$.
For ease of exposition, we take $V = \{0,1\}$. (Our results
can easily be extended to deal with larger sets of possible values.)
A protocol achieves consensus if it satisfies the following
properties~\cite{FLP}: 
\begin{itemize}
  \item \textbf{Agreement}: No two correct agents decide different values.
  \item \textbf{Termination}: Every correct agent eventually decides.
  \item \textbf{Integrity}: All agents decide at most once.
  \item \textbf{Validity}: If an agent decides $v$, then $v$ was the
    initial preference of some agent. 
\end{itemize}

We are interested in one other property: fairness.  
Note that, once we fix a context, a protocol for the agents generates
a probability on \emph{runs},
and hence on outcomes, in the obvious way. 
Fairness just says that each agent has probability 
at least $1/n$ of having its
value be the consensus value, no matter what the context.
More precisely, we have the following condition:
\begin{itemize}
 \item \textbf{Fairness}: For each context $(F,\vec{v})$,
if $c$ of the nonfaulty agents in $F$ have
   initial preference $v$, then the probability of $v$ being the consensus
   decision 
conditional on $\R(F,\vec{v})$ 
is at least $c/n$.
\end{itemize}

It is straightforward to view a consensus problem as a game once we
associate a utility function $u_i$ with each agent $i$, where $u_i$
maps each outcome to a utility for $i$.  
Technically, it is an  \emph{extensive-form Bayesian game}.  
In a
Bayesian game, 
agents have \emph{types}, which encode private information.  In
consensus, an agent's type is its initial preference.  
A strategy for
agent $i$ in this game is just a protocol: a function from information
sets to actions. 
As usual, we view an extensive-form game as being
defined by a game tree, with the nodes where an agent $i$ moves into
\emph{information sets} where, 
intuitively, two nodes are in the same information set of 
agent $i$
if $i$ has the same information at both. In our setting, the nodes in a
game tree correspond to global histories, and agent $i$'s
information set
at a global history is determined by $i$'s history in that global
history; that is, we can take $i$'s information set at a global
history $h$ to consist of all global histories where $i$'s history is
the same as it is at $h$.  Thus, we identify an information set $I_i$ for
agent $i$ with a history $h_i$ for agent $i$.  If $I_i$ is the
information set associated with history $h_i$,
we denote by $\R(I_i)$ the set of runs $r$ where $i$ has history $h_i$
in $r(m)$.  

In game theory, a \emph{strategy} for agent $i$ is a function that
associates with each information set $I_i$ for agent $i$ a distribution
over the actions that $i$ can take at $I_i$.  In distributed computing,
a protocol for agent $i$ is a function that associates with each
history $h_i$ for agent $i$ a distribution over the actions that $i$
can take at $h_i$.  Since we are identifying histories for agent $i$
with information sets, it is clear that a protocol for agent $i$ can
be identified with a strategy for agent $i$.  
In consensus, the actions involve sending messages and deciding on values.
We assume that there is a special value $\bot$ that an agent
can decide on.  By deciding on $\bot$, an agent guarantees that there
is no consensus.  If we assume that an agent prefers to reach
consensus on some value to not reaching consensus at all, 
in
the language of Ben Porath \citeyear{Bp03},
this means that each agent has a \emph{punishment strategy}.

We next want to define an appropriate solution concept for our
setting. The standard approach is to say that an equilibrium is a
\emph{strategy profile} (i.e., a tuple of strategies, one for each agent)
where no agent can do better by deviating.
``Doing better'' is typically taken to mean ``gets a higher expected utility''.
However, if we do not have a probability on contexts, we
cannot compute an agent's expected utility.  
We thus consider two families of solution concepts.  In the first, 
we take ``doing better'' to mean that, for each fixed  context, no
agent can do better by deviating.  Once we fix the context, the
strategy profile generates a probability distribution on runs, and we
can compute the expected utility.  In the second  approach we assume a
distribution on contexts.

A strategy profile $\vsigma$
is an \emph{$\epsilon$--$f$-Nash equilibrium} if, for each fixed context 
$(F, \vec{v})$ where there are at most $f$ faulty agents in $F$, 
and all agents $i$, there is no
strategy $\sigma_i'$ for agent $i$ such that 
$i$ can improve its expected utility by more than
$\epsilon$. Formally,  
if $u_i(\vec{\tau} \mid
\R(F,\vec{v}))$ denotes $i$'s expected utility if strategy profile
$\vec{\tau}$ is played, conditional on the run
being in $\R(F,\vec{v})$, we require that for all 
strategies
$\sigma_i'$ for $i$, 
$u_i((\sigma_i',\vec{\sigma}_{-i}) \mid \R(F,\vec{v})) \le 
u_i(\vec{\sigma} \mid \R(F,\vec{v}))+\epsilon$.   
An $f$-Nash equilibrium is a $0$--$f$-Nash equilibrium.
The notion of
$f$-Nash equilibrium
extends the notion of ex post Nash equilibrium by
allowing 
up to $f$ faulty agents;
a $0$-Nash
equilibrium is an ex post Nash equilibrium.%
\footnote{This definition is in the spirit of the notion of
\emph{$(k,t)$}-robustness as defined by Abraham et al. \cite{ADGH06}, 
where coalitions of size $k$ are allowed in addition to $t$ ``faulty''
agents, but
here we restrict the behavior of the faulty agents to crash failures
rather than allowing the faulty agents to follow an arbitrary protocol,
and take $k=1$.
We also allow the deviating agents to fail (this assumption has no
impact on our results).}  

Given a distribution $\pi$ on contexts and a strategy profile $\vsigma$, 
$\pi$ and $\vsigma$ determine a
probability on runs denoted $\pi_{\vsigma}$ in the obvious way.
We say that $\vsigma$
is an \emph{$\epsilon$--$\pi$-Nash equilibrium} if, for all agents $i$
and all strategies $\sigma_i'$ for $i$, we have 
$u_i(\sigma_i',\vsigma_{-i}) \le u_i(\vsigma) +\epsilon$, where
now the expectation is taken with respect to
the probability $\pi_{\vsigma}$.
A $\pi$-Nash equilibrium is a 0--$\pi$-Nash equilibrium.
If $\pi$ puts probability 1 on there being no failures, then
we get the standard notion of ($\epsilon$-) Nash equilibrium.

\commentout{
Now, we extend the definition of sequential equilibrium to our setting.
Roughly speaking, a strategy profile $\vsigma$ is a sequential equilibrium if it
provides to each agent $i$ a best response strategy conditioning on the runs $\R(I_i)$ 
for every information set $I_i$. To compute the expected utility of $i$ at $I_i$,
we cannot condition on agents following $\vsigma$ if $I_i$ is inconsistent with
$\vsigma$ ($\pi_{\vsigma}(\R(I_i)) = 0$). Instead,
a belief system $\mu$ is used to define a probability distribution
over the global 
histories in $I_i$. In order for $\mu$ to be well defined, it must 
be consistent with $\vsigma$ and $\pi$.

Formally, let $\mu_{I_i}(h)$ be the probability of history $h \in I_i$.
We say that \emph{$\mu$ is consistent with $\vsigma$ and $\pi$} if there exists
a sequence of completely mixed strategy profiles $\vsigma^1,\vsigma^2\,\ldots$
(give positive positive probability to every action at every information set)
converging to $\vsigma$ such that
$$\mu_{I_i}(h) = \lim_{m \to \infty} \frac{\pi_{\vsigma^m}(h)}{\pi_{\vsigma^m}(I_i)}.$$
Notice that $\mu_{I_i}$ and $\vsigma$ define a probability distribution over runs in $\R(I_i)$.
Let $\mu_{I_i,\vsigma}$ denote this probability distribution.
We say that $\vsigma$ is a $\pi$-sequential equilibrium if there exists a 
belief system $\mu$ consistent with $\vsigma$ and $\pi$
such that, for every agent $i$,
information set $I_i$, and strategy $\sigma_i'$,
$u_i((\sigma_i,\vsigma_{-i}) \mid \R(I_i)) \ge u_i((\sigma_i',\vsigma_{-i}) \mid \R(I_i))$,
where now the expected utility is taken with respect to $\mu_{I_i,\vsigma}$.
}

\section{Possibility and Impossibility Results for Consensus}\label{sec:results}
In this section, we consider the consensus problem from a
game-theoretic viewpoint.
We focus on the case where agents care only about consensus, since
this type of utility function seems to capture many situations of interest.  
For the rest of this section, let $\beta_{0i}$ be $i$'s
utility if its initial preference is decided, let $\beta_{1i}$ be $i$'s
utility if there is consensus, but not on $i$'s initial preference, and let
$\beta_{2i}$ be $i$'s utility if there is no consensus.  The
assumption that agents 
care 
only about consensus means that, for all $i$,
$\beta_{0i} > \beta_{1i} > \beta_{2i}$. 

Note that although we assume that agents prefer consensus to no
consensus, unlike Bei, Chen, and Zhang.~\citeyear{BCZ12}, we do not require that
our algorithms guarantee consensus
when rational agents deviate.
Our algorithm does guarantee that
there will be consensus if there are no 
deviations.
On the other hand,
we allow for the possibility that a deviation by a rational agent will
result in there being no consensus. For example, suppose that a
rational agent pretends to fail in a setting where there is a bound
$f$ on the number of crash failures. That means that if $f$ other agents
actually do crash, then some agent will detect that $f+1$ agents seem
to have crashed.   Our algorithm requires that if an agent detects
such an inconsistency, then it aborts.  If the probability that $f$
agents actually crash is low, in our framework, a rational agent may
decide that it is worth the risk of pretending to crash if the
potential gain is sufficiently large.  Bei, Chen, and Zhang would not permit
this, since they require consensus even if rational agents deviate
from the algorithm.  This requirement thus severely limits the
possible deviations.

\commentout{
We first show that there can be no deterministic $(k,f)$-robust equilibrium that solves
consensus if $k \ge 1$, $f \ge 1$, and agents care only about consensus,
even if fairness is not a requirement.

\begin{theorem}
\label{theo:imp-det}
If $\vsigma$ solves consensus
and is deterministic,
agents care only about consensus, and $f \ge 1$, then $\vsigma$ is
not $f$-robust. 
\end{theorem}
\begin{proof}
Suppose that $\vsigma$ solves consensus.
By Validity, if all initial preferences are $0$, the decision must be $0$
(no matter who crashes), and 
similarly if all initial preferences are 1.
It follows that there must be some \emph{minimal} initial
configuration $\vec{v}$ needed to decide 1 with one failure, where we
take an initial configuration $\vec{v}$ to be minimal (for protocol
$\vsigma$) if there is some 
failure pattern $F^*$ 
with at most one failure
such that in context $(F^*,\vec{v})$, the consensus with
$\vsigma$ is 1 with probability $1$, but if $A$ is the set of agents
with initial preference 1 in $\vec{v}$, 
and $\vec{v}'$ is the result of switching the initial preference of some
agent in $A$ to 0, then 
the decision will be 0 with probability 1 in every context
$(F,\vec{v}')$ where $F$ has 
at most one failure.   We can assume without loss of generality that
some agent $i$ fails in $F^*$ (since if no agent fails, this is equivalent
to $i$ failing in the round where the decision is made and sending a
message to all other agents before it fails).

We further assume that $F^*$ is \emph{minimal with respect to
$i$ and $\vec{v}$}.   To make this notion of minimality precise, 
put an ordering on failure patterns where only $i$ crashes by taking
$F < F'$ if either $i$ crashes in an 
earlier round in $F$ than in $F'$, or $i$ crashes in the same round
$m$ in $F$ and $F'$, but $i$ sends messages to fewer agents in round
$m$ in $F$ than in $F'$.  We take $F^*$ to be the minimal failure
pattern with respect to this ordering $<$  such that in $(F^*,\vec{v})$,
$P$ reaches consensus on 1 with probability $1$, but for all
initial configurations $\vec{v}'$ such that fewer agents have an
initial preference 1, $\vsigma$ reaches consensus on 0 with probability 1 in
$(F^*,\vec{v}')$.  We claim that, without loss of generality, we can
assume that $i$ sends a message to some agent $j$ in the round in
$F^*$ in which $i$ fails.
To see this, note that if $i$ crashes in the first
round and sends no messages, then this is indistinguishable to all
agents from the initial configuration where $i$ starts with a 0, and,
by assumption, the decision in that case must be 0 with probability 1.
Thus, $F^*$ cannot be the failure pattern where $i$ fails in
round 1 and sends no messages.
we have assumed that $i$ sends at
least one message before crashing (recall that we identify an agent
crashing at round $m > 0$ and sending no messages with the agent
crashing at round $m-1$ and sending to all agents).

Let $j$ be an agent such that $i$ sends a message in the round that $i$
crashes in $F^*$ to $j$.
We now claim that $j$ has incentive to deviate. Suppose that $j$
prefers a value of 0 to a value of 1. Then $j$ can act as if $i$ did
not send a message to $j$ in round $m$. By the minimality assumption,
the outcome will then be 0 rather than 1.  On the other hand, if $j$
prefers a value of 1, then if $F^-$ is the failure pattern just like $F^*$
except that $i$ does not send a message to $j$ in round $m$ (which
means that the outcome will be a decision of 0), then in the context
$(F^-,\vec{v})$, $j$ will deviate by acting as if $i$ did send the
message that it would have sent in context $(F^*,\vec{v})$. 
(Note that $j$ can easily determine what this message is since $\vsigma$ is deterministic.)
Thus, there is a context in which $j$ can 
gain by deviating no matter what its preferences are.
\end{proof}


Theorem~\ref{theo:imp-fair} shows a similar impossibility result for when
fairness is a requirement, which holds even for a non-deterministic $\vsigma$.
}

\subsection{An Impossibility Result}

We start by showing that there is no 
fair consensus protocol
that is an $f$-Nash equilibrium.

\begin{theorem}
\label{theo:imp-fair}
If $\vsigma$ solves fair consensus,
agents care only about consensus, and $f \ge 1$, then $\vsigma$ is not 
an $f$-Nash equilibrium
\end{theorem}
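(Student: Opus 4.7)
The plan is to assume, for contradiction, that $\vsigma$ solves fair consensus and is an $f$-Nash equilibrium with $f \ge 1$, and then exhibit a context $(F^*, \vec{v})$ with $|F^*| = 1$ and an agent who strictly profits by deviating. Let $\vec{v}_1 = (1, 0, \ldots, 0)$, so only agent $1$ initially prefers $1$. In the no-failure context, fairness yields $p(1 \mid F^\emptyset, \vec{v}_1) \ge 1/n$ and $p(0 \mid F^\emptyset, \vec{v}_1) \ge (n-1)/n$, which sum to $1$ (since $\vsigma$ achieves consensus), forcing $p(1 \mid F^\emptyset, \vec{v}_1) = 1/n$. Next consider $F_{1c}$, where agent $1$ crashes in round $1$ without sending anything: the nonfaulty agents' views in $(F_{1c}, \vec{v}_1)$ are identical to those in $(F_{1c}, \vec{0})$, and validity in the latter forces $0$ to be decided with probability $1$, so $p(1 \mid F_{1c}, \vec{v}_1) = 0$.

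I would then arrange the failure patterns in which only agent $1$ crashes into a chain $F_0 = F_{1c}, F_1, \ldots, F_K = F^\emptyset$ (assuming, as one may, that $\vsigma$ terminates in a bounded number of rounds) such that successive $F_k, F_{k+1}$ differ by exactly one additional message of agent $1$: either agent $1$ sends to one more recipient in its crashing round, or, using the paper's identification of ``crashes in round $m$ sending nothing'' with ``crashes in round $m-1$ sending to everyone,'' the crash round advances by one. Since $p(1 \mid F_0, \vec{v}_1) = 0$ while $p(1 \mid F_K, \vec{v}_1) = 1/n$, there must be adjacent indices where the probability strictly increases; write $F^- := F_k$ and $F^* := F_{k+1}$ with $p^- := p(1 \mid F^-, \vec{v}_1) < p^* := p(1 \mid F^*, \vec{v}_1)$, and let the extra message in $F^*$ be from agent $1$ to some agent $j$ in round $m$. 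Since $j \ne 1$, agent $j$'s initial preference in $\vec{v}_1$ is $0$, and $|F^*| = 1 \le f$.

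I would then show that agent $j$ has a strictly profitable deviation $\sigma_j'$ in context $(F^*, \vec{v}_1)$: namely, $\sigma_j'$ plays $\vsigma_j$ except that it silently discards any round-$m$ message received from agent $1$. A cascade induction on rounds shows that $(\sigma_j', \vsigma_{-j})$ in $(F^*, \vec{v}_1)$ induces the same distribution over global histories as $\vsigma$ in $(F^-, \vec{v}_1)$. Up through round $m-1$ the two executions are identical; at round $m$, no agent $k \ne j$ can distinguish them because agent $1$'s round-$m$ sends agree in $F^*$ and $F^-$ on every recipient except $j$, and $j$'s round-$m$ send depends only on its round-$(m-1)$ state; and after round $m$, agent $1$ is crashed and therefore cannot later leak the inconsistency between its actual partial send and $j$'s reported non-receipt, so the coincidence of views propagates to the end (with $j$'s discarded message playing the role of its non-existent analogue in $F^-$). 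The induced decision distribution is therefore exactly $p^-$, giving $j$ a utility gain of $(\beta_{0j} - \beta_{1j})(p^* - p^-) > 0$, contradicting the $f$-Nash hypothesis.

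The main obstacle is the cascade-indistinguishability step, which hinges on the fact that agent $1$'s crash by round $m+1$ makes its partial round-$m$ send invisible to every agent that did not receive it directly, so no downstream message from agent $1$ can expose $j$'s deviation; the chain-of-failure-patterns construction and the uses of fairness and validity are then routine. A minor auxiliary point is that one must choose $\vec{v}_1$ so that the recipient $j$ of the pivotal message prefers $0$; our choice puts all the $1$-preferences on agent $1$, ensuring this automatically.
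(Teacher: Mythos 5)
Your proof is correct and follows essentially the same route as the paper's: both locate a single pivotal round-$m$ message from the crashing agent $1$ to some agent $j$ whose suppression strictly lowers the probability of deciding $1$, and have $j$ (who prefers $0$) profitably pretend not to receive it --- the paper finds this message via a minimal failure pattern in a partial order (so the probability drops to $0$), while you find it via adjacent elements of a chain (so it merely drops strictly, which suffices). The only soft spot is your assumption that $\vsigma$ terminates in a bounded number of rounds, which need not hold for a randomized protocol; but the chain need only end at a pattern where $1$ is decided with positive probability, e.g.\ agent $1$ crashing after a round $T$ by which the decision $1$ has been reached with positive probability in the failure-free context, which is exactly the paper's construction of $F^2$.
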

\begin{proof}
Consider the initial configuration $\vec{v}$ where all agents but $i$
have initial preference 0 and $i$ has initial preference 1.
If $F^1$ 
is the failure pattern where no agent fails, 
by Fairness, the agents must decide 1 with positive probability in
context $(\vec{v},F^1)$.  
It follows that there must be a failure pattern $F^2$ where only agent $i$
fails but the agents  decide 1 with positive probability in
context $(\vec{v},F^2)$. (In $F^2$, $i$ fails only after a decision has
been made in $F^1$.)
If $F^0$ is the failure pattern where only $i$ fails, and $i$ fails
immediately, before sending any messages, then it is clear that no
agents can distinguish this context from one where all agents have
initial preference 0, so all agents must decide 0, by the Validity
requirement.

Put a partial order $\le$ on failure patterns where only $i$ crashes by taking
$F \le F'$ if either $i$ crashes in an 
earlier round in $F$ than in $F'$, or $i$ crashes in the same round
$m$ in both $F$ and $F'$, but the set of agents to whom $i$ sends a
message in $F$ is a subset of the set of agents to whom $i$ sends a
message in $F'$.
Clearly $F^0 < F^2$.  
Thus, there exists a minimal failure pattern $F^*$ such that $F^0 <
F^* \le F^2$, only $i$ fails in $F^*$,
the consensus is on 1 with positive probability in context
$(\vec{v},F^*)$,  the consensus is 0 with probability 1 in all contexts
$(F,\vec{v})$
where only agent $i$ fails in $F$ and $F < F^*$.  
We can assume without loss of generality that $i$ sends a message to
some agent $j$ in the round $m$ in which $i$ fails.
To see this, note that if $i$ crashes in the first
round then $i$ must send a message to some agent (otherwise $F^* =
F^0$ and the decision is 0 with probability 1).
And if $i$ crashes in round $m > 1$, we have assumed that $i$ sends at
least one message before crashing (recall that we identify an agent
crashing at round $m > 1$ and sending no messages with the agent
crashing at round $m-1$ and sending to all agents).

Now suppose that an agent $j$ that receives a message from $i$ in
round $m$ pretends not to receive that message.
This makes the situation indistinguishable from
the context $(F,\vec{v})$ where $F$ is just like $F^*$ except that $i$
does not send a message to $j$ in round $m$.  Since 
$F^0 \le F < F^*$, 
the decision must
be 0 with probability 1 in context $(\vec{v},F)$.
Since $j$ has initial preference $0$ in $\vec{v}$, $j$ can increase its
expected utility by this pretense, so $\vec{\sigma}$ is not 
an $f$-Nash equilibrium.
\end{proof}

\subsection{Obtaining a $\pi$-Nash equilibrium}

We now prove a positive result.  If we are willing 
to assume
that there is a 
distribution $\pi$ on contexts with some reasonable properties, then
we can get a 
fair
$\pi$-Nash equilibrium. But, as we show below, there are
some subtle problems in doing this.

Before discussing these problems, it is useful to recall some results
from social choice theory. Consider a setting with $n$ agents where
each has a preference order (i.e., a total order) over some set $O$ of
outcomes.
A \emph{social-choice function}
is a (possibly randomized) function that maps a profile of preference orders
to an outcome. For example, we can consider
agents trying to elect a leader, where each agent has a preference order
over the candidates; the social-choice function chooses a leader as a
function of the expressed preferences. A
social-choice function is \emph{incentive compatible} if no agent can
do better by lying about its preferences.  The
well-known \emph{Gibbard-Satterthwaite} theorem
\cite{Gibbard73,Satter75} says that if there are at least three
possible outcomes, then the only incentive-compatible deterministic social-choice
function $f$ is a \emph{dictatorship}; i.e., the function $f$ just
chooses a player $i$ and takes the outcome to be $i$'s most-preferred
candidate, ignoring all other agents' preferences.  Gibbard
\citeyear{Gibbard77} extends this result to show that if there are at
least three outcomes, then the only randomized incentive-compatible
social-choice function 
is a \emph{random dictatorship}, which essentially amounts to choosing
some player $i$ according to some probability distribution and then
choosing $i$'s value. 

Bei, Chen, and Zehang \citeyear{BCZ12} point out that a strategy
profile that solves consensus can be viewed as a social-choice
function: agents have preferences over three outcomes, 0, 1,
and $\bot$, and the consensus value (or $\bot$, if there is no
consensus) can be viewed as the outcome chosen by the function.    A 
strategy profile that is a Nash equilibrium is clearly incentive-compatible;
no agent has an incentive to lie about its preferences.  Thus, it
follows from Gibbard's \citeyear{Gibbard77} result 
that a solution to rational consensus must
be a randomized
dictatorship. And, indeed, our protocols can be viewed as
implementing a randomized dictatorship: one agent is chosen at random,
and its value becomes the consensus value.  However, implementing such a
randomized dictatorship in our
setting is nontrivial because of the possibility of failures.%
\footnote{We remark that Theorem 1 of Bei, Chen, and Zhang
  \citeyear{BCZ12} claims  that, given a fixed failure pattern, a
  strategy profile for  consensus that is a Nash equilibrium must
  implement a dictatorship, rather than randomized dictatorship.
  While this is true if we restrict to deterministic strategies,
  neither we nor Bei, Chen, and Zhang do so.  We have not checked
  carefully whether results of Bei, Chen, and Zhang that depend on
  their Theorem 1 continue to hold once we allow for randomized
  dictatorships.}

\fullv{\subsubsection{A naive protocol}}
  \shortv{\paragraph{\rm {\bf A naive protocol}}}

We start with a protocol that, while not solving the problem, has many
of the essential features of our solution, and also helps to point out
the subtleties.
Consider the following slight variant of one of the early protocols
for consensus \cite{DS1}: In round 1, each agent $i$ broadcasts a tuple
$(i, v_i,x_{i0}, \ldots, x_{if})$, where $v_i$ is $i$'s initial preference, and
$x_{it}$ is a random element in $\{0, \ldots, n-t\}$.  For round
$2, \ldots, f+1$, each agent $i$ broadcasts all the tuples $(j,v_j,\vec{x}_j)$
that $i$ received and did not already forward in earlier rounds.  
At the end of round $f+1$, each agent checks for consistency;
specifically, it checks 
that it has received tuples from at least $n-f$ agents and that 
it has not received distinct tuples claimed to have been sent by
some agent $j$.  If $i$  
detects an inconsistency, then $i$ decides $\bot$.  Otherwise,
suppose that $i$ received tuples from $n-t$ agents. Then $i$ computes the
sum mod $n-t$ of the values $x_{jt}$ for each agent $j$ from which it
received a tuple.  If the sum is $S$, then $i$ decides on the value 
of the agent with the $(S+1)$st highest id among the $n-t$ agents from
which it received tuples.
(Here is where we are implementing the random dictatorship.)
Note that the random value $x_{jt}$ is used by $i$ in
computing the consensus value if exactly $t$ faulty agents are discovered;
the remaining random values sent by agent $j$ in the first round are
discarded.    

It is straightforward to check that if all nonfaulty agents follow
this protocol, then they will all agree on the set of tuples received
(see the proof of Theorem~\ref{theorem:corr} for an argument similar
in spirit),  
and so will choose the same decision value, and each agent whose value
is considered has an equal chance of having their value determine the
outcome.  But this will not be in general a $\pi$-Nash equilibrium if
$\pi$ \emph{allows up to $f$ failures}, that is, $\pi$ puts 
probability 0 on all failure patterns that have more than $f$
failures and $f \ge 2$.

Consider a distribution $\pi$
that puts positive probability on all contexts with at most $f$ failures,
and
an initial configuration where
agent 1 prefers 1, but all other agents prefer 0.
Agent 1 follows the protocol in the first round,
and receives a message from all the other agents. We claim that agent 1
may have an incentive to pretend to fail (without sending any
messages) at this point.  Agent 1 can gain by doing this if one of the
other agents, say agent 2, crashed in the first round and sent a message only to
agent 1.  In this case, if 1 pretends to crash, no other agent will
learn 2's initial preference, so 1's initial preference will have a somewhat
higher probability 
(at least $\frac{1}{n-1} - \frac{1}{n}$) 
of becoming the
consensus decision.  Of course, there is a risk in pretending to
crash: if $f$ agents really do crash, then an inconsistency will be
detected, and the decision will be $\bot$.  
Let $\alpha_{<f}$ be the probability
of there being fewer than $f$ failures and at least one 
agent crashing in the first round who does not send to any agent other
than 1 (this is the probability that 1 gains some utility by its
action); let $\alpha_{=f}$ be the probability of there being $f$ crashes
other than 1 (this is an upper bound on the probability that 1 loses
utility by its action).  Then 1's expected gain by deviating is at
least 
$$(\beta_{0i}-\beta_{1i})\left(\frac{1}{n-1} - \frac{1}{n}\right)\alpha_{<f} - (\beta_{0i} - \beta_{2i})\alpha_{=f}.$$

This is a small quantity. However, if $f$ is reasonably large and
failures are unlikely, we would expect $\alpha_{=f}$ to be much smaller
than $\alpha_{<f}$, so as the number $f$ of failures that the
protocol is designed to handle increases, deviating becomes more and
more likely to produce a (small) gain.

\commentout{
The second problem we must deal with is the following: 
Suppose that $f=4$, $k=2$, and $n = 8$.  Agents 1 and 2 form the coalition.
Agents 1 and 2 have initial value 1; agents 3--8 have initial value
0.  In round 1, agents 6--8 fail and send a message to no one.  In
round 2, agent 1 pretends to fail and sends a message to no one.  In
round 3, agent 2 claims that it heard from 1 that it heard agents 6--8
that they all had initial value 1 and that 1 had initial value 1.
Note that agent 2 can do this safely, without fear that an
inconsistency will be detected, because it knows at the beginning of
round 3 that no one heard from agents 
6--8.  Clearly this lie improves the expected outcome from 2's
perspective.  (In fact, it is not hard to show that with 2's lie, 1
will be decided with probability $5/8$; without the lie, 1 will be
decided with probability 
$2/8$.)

In the
next section, we improve the protocol slightly to turn it into a
$\pi$-Nash equilibrium, assuming that $\pi$ satisfies some
minimal assumptions.  
}

\fullv{\subsubsection{A $\pi$-Nash equilibrium}}
\shortv{\paragraph{\rm {\bf A $\pi$-Nash equilibrium}}}
There are 
three
problems with the preceding protocol.  The first is
that, even if 1 pretends to 
fail, 1's value will be considered a potential consensus value,
since everyone received the value before 1 failed.  
This means that there is little downside in pretending to fail.
Roughly speaking,
we deal with this problem by taking into consideration only the values
of nonfaulty agents when deciding on a consensus value.
The 
second
problem is that since agents learn the random values
$(x_{i0}, \ldots, x_{if})$ that will be used in determining the consensus 
value in round 1, they may be able to guess with high probability 
the value that will be decided on at a point when they can still
influence the outcome.  To address this problem, agents do not send
these random values in the first round; instead, they 
use secret sharing \cite{shamir}, so as to allow the 
nonfaulty agents to reconstruct these random values
when they need to decide on the consensus value.  This prevents agents
from being able to guess with high probability what the decision will
be too early. 
The 
third
problem is that in some cases agents can safely lie about
the messages sent by other agents
(e.g., $i$ can pretend that another agent did not crash).
We could solve this by assuming
that messages can be signed using unforgeable signatures.  We do not
need this or any other cryptographic assumption.  Instead, we use some
randomization to ensure 
that if an agent lies about a message that was sent, it will be caught
with high probability.

Thus, in our algorithm, an agent $i$ generates random numbers for two
reasons.  The first is that it generates $f+1$ random numbers $(x_{i0},
\ldots, x_{if})$, where $x_{it}$ is used in choosing the consensus
value if there are exactly $t$ faulty agents discovered, and then, as
we suggested above, shares them using secret sharing,
so that the numbers can be reconstructed at the appropriate time (see below). 
The second is that it
generates $n-1$ additional random numbers,
denoted $z_{ij}^m[i]$, one for each agent $j \ne i$,
in each 
round $m$,
and sends 
them
to $j$ in round $m$. 
Then if agent $j$
claims that it got a message in round $m$ from $i$, it will have to 
also provide $z_{ij}^m[i]$ as proof.  

In more detail, we proceed as follows.  Initially, each agent $i$ generates
a random tuple $(x_{i0}, \ldots, x_{if})$, where 
$x_{it}$ is in $\{0, \ldots, n-t\}$.
  It then computes $f+1$  
random polynomials $q_{i0}, \ldots, q_{if}$, each of degree 
$1$,
such that
$q_{it}(0) = x_{it}$.  It then sends $(q_{i0}(j), \ldots, q_{if}(j))$
to agent $j$. The upshot of this is that 
no agent
will be able to compute $x_{it}$ given this information 
(since one point
on a degree-1 polynomial $q_{it}$ gives no information
regarding $q_{it}(0)$). In addition, in round 
1, each agent $i$
sends $v_i$ to each agent $j$, just as in the naive algorithm;
it also generates the random number $z_{ij}^1[i]$ and a special random
number $z$, and sends each agent the vector $z_{ij}^1 = (z_{ij}^1[1],
\ldots, z_{ij}^1[n])$, where 
$z_{ij}^1[j'] = 0$ for $j' \ne i,j$
and $z_{ij}^1[j] = z$.
(As we said, these random numbers form a ``signature''; their role
will become clearer in the proof.)
Finally, in round 1, agent $i$ sends a status report $SR^1_i$; we
discuss this in more detail below.
In the receive phase of round 1, agent $i$ adds all the values 
received from other agents to the set $ST_i$.  

In round $m$ with $2 \le m \le f$, $i$ again sends 
a status report $SR_i^m$
and a vector $z_{ij}^m$.
For each agent $j$, $SR_i^m[j]$ is a tuple of the form $(m,x)$,
where $m$ is the first round that $i$ knows that $j$ crashed ($m =
\infty$ if $i$ believes that $j$ has not yet crashed),
and $x$ is either the vector $z_{ji}^{m-1}$ of random values sent by $j$ in $m-1$ (if $i$ believes that $j$ has not yet crashed)
or an agent that told $i$ that $j$ crashed in round $m$.
The tuple $z_{ij}^m$ is computed by setting $z_{ij}^m[l]$ for $l \ne i,j$ to be
$z_{li}^{m-1}[l]$, the random 
number
sent by $l$ in the previous round
(this will be used to prove that $i$ really got a message from $l$ in
the previous round---it is our replacement for unforgeable
signatures); again, $z_{ij}^m[i]$ is a random value 
generated by $i$.
In round $f+1$, $i$ also sends $j$ the secret shares $y_{li}^t$ it
received in round 1 from each agent $l$ (i.e., the value $q_l^t(i)$
that it received from $l$, assuming that $l$ did not lie).  
This enables
$j$ to compute the polynomials $q_{it}$, and hence the secret
$q_{it}(0) = x_{it}$ for $0 \le t \le f$. 

\commentout{
a status report $ST^m_i$, where for each agent $j$, $ST^m_i[j]$ indicates
whether $j$ has crashed (according to $i$'s information), 
if $i$ thinks that $j$ has crashed, the round $m'$ that this happened,
the identity of an agent that reported this crash,
and the vector $z_{ji}^{m-1}$ of values received in the last round from $j$.
Agents never report crashes that occur in round $0$.

We make sure that (1) agents can retrieve 
$x_i[t]$ iff they gather the value $y_{ij}[t]$ of at least $k+1$ different agents $j$
and (2) an agent $j$ does not learn the complete vector $z_{ij}^m$
until round $m$, 
even if $i$ and $j$ are in the same coalition and predetermine some of these 
random numbers prior to the execution of the protocol. 
We satisfy these two requirements as follows.
$y_{ij}[t]$ is generated using the Shamir's secret sharing
scheme\,\cite{shamir}. 
$i$ generates a polynomial $p^{t}_i$ of degree $k$ with random coefficients
such that $p^{t}_i(0) = x_i[t]$ and $y_{ij}[t] = f_i^t(j)$.
After round $0$, agents from a coalition $K$ know at most $k$ different values from $p_i^t$ if $i \notin K$,
so they are unable to retrieve $x_i[t]$. At the end, every agent broadcasts
$y_{ij}[t]$; since there are at least $k+1$ nonfaulty agents,
all agents are able to interpolate $p_i^t$ and to retrieve $x_i[t]$.
$z_{ij}^m$ is constructed from random numbers sent in round $m-1$.
Specifically, $z_{ij}^m[i]$ is a random number in $\{0,\ldots ,n-1\}$,
and, for every $l \neq i$, $z_{ij}^m[l] = z_{li}^{m-1}[l]$,
where $z_{ij}^0$ is a vector of random numbers also in $\{0,\ldots ,n-1\}$.
This scheme is resilient to any coalition $K$ if $n> f+k$, even if agents in $K$
predetermine all their random numbers in round $0$:
for every round $m \ge 1$ and two agents $i,j \in K$,
there exists a nonfaulty agent $l \notin K$ such that $y_{ij}^{m}[l] = y_{li}^{m-1}[l]$
is only received by $i$ in $m-1$ and is unknown to $j$ until $i$ sends it at $m$.
}

If $i$ detects an inconsistency in round $m \le f+1$, then
$i$ decides $\bot$, where
$i$ \emph{detects an inconsistency in round $m$} if 
the messages received by $i$ are inconsistent with all agents
following the protocol 
except that up to $f$ agents may crash.
\shortv{(In the full paper
\cite{HV16}, we
  give an exhaustive list of all the ways 
  that $i$ can detect an inconsistency.)}\fullv{This can happen if 
\begin{enumerate}
  \item $j$ sends incorrectly formatted messages;

\item $m=2$ and
agents $j'$ and $j'' \ne i$ disagree about the random values $z_{jj'}^1[j']$ and $z_{jj''}^1[j'']$ sent by $j$
in round $1$;

\item $m>2$ and some agent $j' \neq j$ reports that $j$ sent 
a value $z_{jj'}^{m-1}[i]$ in round $m-1$ 
different from the value $z_{ij}^{m-2}[i]$ sent by $i$ to $j$ in round
$m-2$; 

\item $m=f+1$ and it is not possible to interpolate a polynomial
  $q_j^t$ through the shares $y_{ji}^t$ received by $i$ from $j$ in 
round 1 and
  the values $y_{jl}^t$ received 
  from $l \ne j$ in round $f+1$.
\item some agent $j'$ sends $i$ a status report in round $m$ that says that
$j$ crashed in some round $m'$ and either $i$ receives a message
  from $j$ in round $m'' > m'$ or some agent $j''$ sends $i$ a status
    report saying that it received a message from $j$ in a round $m'' > m'$;
\item for some agents $j$, $j'$, and $j''$,
$j$ sends $i$ a status report in round $m$ that says
that $j''$ crashed in round $m'$ and that $j'$ reported this,
but $j'$ sends $i$ a status report in round $m$ that says that 
$j''$ did not crash before round $m'' > m'$;
\item 
for some agents $j$, $j'$, and $j''$, $j$ sends $i$ a status report in
round $m$ that says that $j'$ did not
crash by round $m-1$ and $j''$ crashed in some round $m' < m$, while 
$j'$ sends $i$ a status report in round $m-1$ saying 
that $j''$ crashed in round $m'' < m'$
(so either $j$ ignored the report about $j''$ sent by $j'$ or $j'$
lied to $j$);
\item more than $f$ crashes are detected by $i$ by round $m$ (i.e.,
  $f$ or more agents have not sent 
  messages to $i$ 
 or were reported to crash 
  in some round up to
  and including $m$).
\end{enumerate}
}
\commentout{
receives two distinct tuples of the form $(j, v_j)$ for some agent $j$
at or before round $m$ (that is, there is disagreement about what
$j$'s initial preference is);
(b) $i$ does not receive messages from more than $f$ agents in round $m$;
or (c) $i$ observes  \emph{inconsistent status reports} in round $m$,
in that
(i) two agents report in their round $m$ messages that some agent $j$ sent
different 
random numbers in round $m-1$,
or (ii) a crash of $j$ is reported that cannot be explained by a
failure pattern consistent with the received messages. 
This can happen if
(1) some agent $j'$ sends a status report that says that
$j$ crashed in some round $m'$ and some agent $j''$ sends a status
report saying that it received a message from $j$ in a round $m'' > m'$,
(2) for some agents $j$, $l$, and $j'$,
$ST_j^m[l]$ says that $l$ crashed in round $m'$ and that $j'$ reported this,
but $ST_{j'}^{m-1}[l]$ says that $l$ crashed in round $m'' > m'$,
or (3) for some agents $j$, $l$, and $j'$,
$ST_j^m[l]$ says that $l$ crashed in round $m'$,
but $ST_{j'}^{m-1}[l]$ and $ST_{j'}^{m}[l]$ say that $l$ crashed in round $m'' < m'$.
}

\commentout{
Moreover, the sequence of status reports of $j$ sent by an agent $i$
must satisfy the following structure for some rounds $m$ and $m^* >
m$: 
(1) $i$ must report that $j$ does not crash in every round $m'$
message $m' \le m$; 
(2) $i$ must report in every round $m+1 \leq m' \le m^*$ message that
that $j$ crashes in round $m$; 
(3) $i$ may report in every round $m' > m^*$ message that $j$ crashes in round $m-1$,
but only if it is possible that $i$ learned of the crash for the first
time in round $m^*$. 
In a failure pattern consistent with this scenario,
there must exist a sequence of faulty agents $p^{m+1},p^{m+2},\ldots,p^{m^*}$,
such that $p^{m+1}$ observes the crash of $j$ in $m$ and reports this in round $m+1$ message to $p^{m+2}$ but not to nonfaulty agents,
$p^{m+2}$ sends round $m+2$ message to $p^{m+3}$ but not to nonfaulty agents, and so on, until $p^{m^*}$ sends the report to $i$.
}

If agent $i$ does not detect an inconsistency at some round $m \le
f+1$, $i$ proceeds as follows in round $f+1$.
For each round $1 \le m \le f+1$ in a run $r$, agent
$i$ computes $NC_m(r)$, the set of agents that it believes
did not crash up to and including round $m$.
Take $NC_0(r) = N$ (the set of all agents).   
Say that round $m$ 
in run $r$ 
\emph{seems clean} if $NC_{m-1}(r) = NC_{m}(r)$.
As we show (Theorem~\ref{theorem:corr}), if no inconsistency is 
detected in run $r$, then  
there must be a round in $r$ that seems clean. Moreover, we show that if
$m^*$ is the first round 
in $r$
that seems clean to a nonfaulty agent $i$, then all the nonfaulty
agents agree that $m^*$ is the first round that seems clean in $r$, and they
agree on 
the initial preference of all agents in $NC_{m^*}(r)$, and the random numbers
sent by these agents in round $1$ messages
in run $r$.
The agents then use these random numbers to choose an agent $j$
among the agents in $NC_{m^*}(r)$ and take $v_j$ to be the consensus value.

The pseudocode for the strategy (protocol) $\vsigma^{\osc}$ that
implements this idea is given in Figure~\ref{fig1}.
\shortv{We discuss the algorithm in more detail in the full paper.}


\fullv{
\commentout{
We
think of round $m$ as happening between time $m$ and time $m+1$, so,
conceptually, round 0 happens before the algorithm starts, between
time $-1$ and time $0$.
}
Lines~\ref{line:initbegin}--\ref{line:initend} initialize the values of
$ST$ and
$SR^1[j]$,
as well as the random numbers required in 
round $1$; that is,
$i$ generates $x_i[t]$ and the corresponding polynomial $q_i^{t}$ used
for secret sharing for 
$0 \le t \le f$, 
and random vectors $(z_{ij}^1[1], \ldots, z_{ij}^1[n])$ for $j \ne i$,
where $z_{ij}^1[l] \in \{0, \ldots, n-1\}$.
In phase $1$ (the ``sending'' phase) of round $m$,
$i$ sends $SR_i^m$
and 
$z_{ij}^m$.
If $m=1$, then $i$ also sends $v_i$
and $(y_{ij}^0, \ldots, y_{ij}^f)$ to $j$,
where $y_{ij}^t = q_i^t(j)$; that is, $y_{ij}^t$ is $j$'s share of the
secret $x_i^t$. 
Finally, if $m=f+1$, instead of sending $z_{ij}^m$ to $j$,
$i$ sends all the shares $y_{li}^t$ it has received from other agents,
so that all agents can compute the secret
(lines~\ref{line:phase1}-\ref{line:phase1:end}).   
In phase 2  (the ``receive'' phase)
of round $m$, $i$ processes all the messages received and
keeps track of all agents who have crashed
(lines~\ref{line:phase2}-\ref{line:phase2:end}). 
If $i$ receives a round $m$ message from $j$, then $i$ 
adds $(j,v_j)$ to $ST_i$ if $m=1$,
includes in $SR_i^m[j]$ the vector $z$ sent by $j$ to $i$, and updates the
status report $SR_i^m[l]$ of each agent $l$. Specifically, 
if $j$ reports that $j'$ crashed in a round $m'$ and 
$i$ earlier considered it possible that $j'$ was still nonfaulty at
round $m'$, then
$i$ includes in $SR_i^m[l]$ the fact that $j'$ crashed and 
that $j$ is an agent that reported this fact
(lines~\ref{line:update-status}-\ref{line:update-status:end}); 
if $i$ does not receive a round $m$ message from $j$ and $i$ believed that $j$ did not crash before,
then $i$ marks $j$ as crashed (line~\ref{line:register-crash}). 
In phase 3 (the ``update'' phase) of round $m \le f$, $i$ 
generates the random value $z^{m+1}_{ij}[i]$ for the next round.
If $i$ detects an inconsistency, then $i$ decides $\bot$
(line~\ref{line:punish}); if no inconsistency is
detected by the end of round $f+1$, then $i$ decides on a value
(lines~\ref{line:decide}-\ref{line:decide:end})
by computing the set $NC_{m'}$ for every round $m'$, determining
the earliest round $m^*$ that seems clean ($NC_{m^*} = NC_{m^*-1}$), 
computing a random number $S \in \{0, \ldots, n-t-1\}$, where $t$ 
is the number of crashes that occurred before $m^*$, by summing the
random numbers $x_j[t]$ of $j \in NC_{m^*}$ (computed by
interpolating the polynomials), 
and deciding on the
value of the agent in $NC_{m^*}$ with the $(S+1)$st highest id.
}

\fullv{\begin{figure}[t]}
\shortv{\begin{figure}[e]}
\begin{algorithm}[H]
\caption{$\sigma_i^{\osc}(v_i)$: $i$'s consensus protocol with initial value $v_i$}
\label{fig1}
{
\scriptsize
\begin{algorithmic}[1]
\State{decided $\gets$ $\false$}\label{line:initbegin}
\State{$ST_i$ $\gets$ $\{(i,v_i)\}$}
\State{$z$ $\gets$ random in $\{ 0\, \ldots \, n-1\}$}
\ForAll{$j \neq i$}
	\State{$SR_i^1[j]$ $\gets$ $(\infty,\bot)$}\Comment{All agents are initially active}
	\ForAll{$l \neq j,i$}
	
		\State{$z_{ij}^1[l]$ $\gets$ $0$}
	\EndFor
	\State{$z^1_{ij}[i]$ $\gets$ random in $\{ 0\, \ldots \, n-1\}$}\Comment{Random number to be used by $j$ in round $2$}
	\State{$z^1_{ij}[j]$ $\gets$ $z$}\Comment{Proves that $i$ sends round 1 message to $j$}
\EndFor

\ForAll{$0 \leq t \leq f$}
	\State{$x_{i}[t]$ $\gets$ random in
          $\{0,\ldots,n-t-1\}$}\Comment{A random number for each
          possible value of $t$} 
	  \State{$q_i^{t}$ $\gets$ random polynomial of degree $1$ with $q_i^{t}(0) = x_i[t]$} 
	\ForAll{$j \ne i$}
		\State{$y_{ij}[t]$ $\gets$ $q_i^{t}(j)$}
	\EndFor
\EndFor \label{line:initend}

\Statex

\ForAll{round $1 \leq m \leq f+1$ such that $\neg \mbox{decided}$}

\Phase{1}{send phase}\label{line:phase1}
	\ForAll{$j \neq i$}
		\If{$m = 1$}
		Send $\langle v_i,SR_i^m,(y_{ij}^0,\ldots,
                y_{ij}^f),z^m_{ij}\rangle$ to $j$ 	    	\EndIf
	  	\If{$2 \le m \le f$}
	    	 	Send $\langle SR_i^m,z^m_{ij}\rangle$ to $j$   	\EndIf
                                                \If{$m=f+1$}
	   		Send $\langle SR_i^m,(y^0_{li}, \ldots,
			     y^f_{li})_{l \neq j}\rangle$ to $j$ \EndIf
	\EndFor
\EndPhase\label{line:phase1:end}

\Statex

\Phase{2}{receive phase}\label{line:phase2}
	\State{$SR_i^{m+1}$ $\gets$ $SR_i^m$}
	\ForAll{$j \neq i$}
		\If{receive valid message from $j$}
			\If{$m=1$}
				$ST_i$ $\gets$ $ST_i  \cup \{(j,v_j)\}$\label{line:msg}\Comment{$ST_i$ contains all the values that $i$ has seen}
			\EndIf
			\State{$SR_i^{m+1}[j] \gets (\infty,z^m_{ji})$}\Comment{Note that $j$ is still active}
             \ForAll{$l \ne i,j$}
				\If{$SR_j^m[l] = (m',j')$ and $SR_i^m[l] = (m'',j'')$ and $m' < m''$}\label{line:update-status}
					\State{$SR_i^{m+1}[l] \gets (m',j)$}
                                        \Comment{$l$   crashed earlier than previously thought}
					\State{$z_{il}^{m+1}[j] \gets \bot$}
			    \ElsIf{$SR_j^m[l] = (\infty,z_j^m)$} 
					\State{$z_{il}^{m+1}[j] = z_{ji}^m[j]$}
				\EndIf\label{line:update-status:end}
			\EndFor\label{line:msg:end}
	    \ElsIf{$SR_i^{m+1}[j] = (\infty,z')$ for some $z'$}
 			\State{$SR_i^{m+1}[j] \gets	(m,i)$}\label{line:register-crash}\Comment{$i$ detects a crash of $j$}
             \ForAll{$l \ne i$}
				\State{$z_{il}^{m+1}[j] \gets \bot$}
             \EndFor
		\EndIf
	\EndFor
\EndPhase\label{line:phase2:end}

\Statex
	
\Phase{3}{update phase}\label{line:phase3}
	\If{an inconsistency is detected}
			\State{\textbf{Decide}($\bot$)}\label{line:punish}\Comment{Punishment}
			\State{decided $\gets \true$}
  	\ElsIf{$m \le f$}
		\ForAll{$j \neq i$}
			\State{$z^{m+1}_{ij}[i]$ $\gets$ random in $\{0,\ldots,n-1\}$}
		\EndFor
	\ElsIf{decided = $\false$} \label{line:decide}
\State{$NC_0 = N$}
		\ForAll{$1 \leq m' \leq f+1$}
			\State{$NC_{m'}$ $\gets$ $\{j \in N -\{i\}	 \mid \forall m'' \le m',
			 l (SR_i^{f+2}[j] \neq  (m'',l)) \cup \{i\}$}\Comment{Agents that did not crash up to round $m'$} 
		\EndFor	
		\State{$m^*$ $\gets$ first round $m'$ such that $NC_{m'} = NC_{m'-1}$}\Comment{First round that seems clean}
		\State{$t$ $\gets$ $n - |NC_{m^*}|$}\Comment{Number of crashes prior to $m^*$}
		\ForAll{$j \in NC_{m^*}$}
			\State{$q_j^{t}$ $\gets$ unique polynomial interpolating 
			the values $y_{jl}^t$ received}\Comment{otherwise, an inconsistency was detected}
			\State{$x_j[t] \gets q_j^{t}(0)$}
		\EndFor
		\State{$S \gets \sum_{j \in NC_{m^*}}x_{j}[t] \mbox{ mod  } (n-t)$}\Comment{Calculate a random number in $0,\ldots,n-t-1$}
		\State{\textbf{Decide}($v_j$), where $j$ is the $(S+1)$st highest id in $NC_{m^*}$}
	\EndIf\label{line:decide:end}
\EndPhase \label{line:phase3:end}
\EndFor
\end{algorithmic}
}
\end{algorithm}
\end{figure}

We now prove that $\vsigma^{\osc}$ gives a $\pi$-Nash
equilibrium, under reasonable assumptions about $\pi$. We first prove
that the protocol satisfies all the properties of fair consensus
without making any assumptions about $\pi$.

\begin{theorem}
\label{theorem:corr}
$\vsigma^{\osc}$ solves fair consensus if at most $f$
agents crash, 
$f+1 < n$, 
and all the remaining agents follow the protocol.
\end{theorem}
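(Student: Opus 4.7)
The plan is to verify each of the four consensus properties plus Fairness in turn, under the assumption that all nonfaulty agents run $\sigma_i^{\osc}$ and at most $f$ agents crash. Termination and Integrity are essentially immediate from the structure of $\vsigma^{\osc}$: every nonfaulty agent runs exactly $f+1$ rounds and then either decides $\bot$ on line~\ref{line:punish} or a value in $V$ on line~\ref{line:decide}, and each agent decides only once. The substantive content lies in Agreement, Validity, and Fairness, all three of which will follow once I establish that the nonfaulty agents converge on the same first-clean-round $m^*$, the same set $NC_{m^*}$, the same initial preferences $v_j$ for $j\in NC_{m^*}$, and the same reconstructed secrets $x_j[t]$.

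The first step is to argue that no nonfaulty agent ever detects an inconsistency. Each enumerated form of inconsistency can only arise from contradictory status reports, fabricated receipts (blocked by mismatched $z^m_{ij}$ signatures), or more than $f$ observed crashes; each of these is ruled out when at most $f$ agents actually crash and every nonfaulty agent writes down honest $SR_i^m$ and $z^m_{ij}$ values. Hence every nonfaulty agent reaches line~\ref{line:decide}. Next I will use a pigeonhole argument: because there are $f+1$ rounds but at most $f$ crashes, some round $m' \le f+1$ has no new crash registered by any given nonfaulty agent, so $m^*$ is well defined.

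The main obstacle is showing that the first clean round $m^*$ and the set $NC_{m^*}$ are identical across all nonfaulty agents, and that for each $j\in NC_{m^*}$ the nonfaulty agents agree on $v_j$ and on the vector of shares $(y_{jl}^t)_l$. This is the classical \emph{clean-round} lemma, adapted to our setting with explicit status reports. I would prove that if some round $m^*$ seems clean to a nonfaulty agent $i$, then every nonfaulty agent $i'$ has, by the end of round $m^*$, received (directly or through the $SR_\cdot^{m^*}$ relayed in round $m^*$) every message that $i$ received in the first $m^*$ rounds about the initial values and crashes of agents in $NC_{m^*}$; otherwise the $SR$ chain would reveal a later-discovered crash at $i'$, contradicting cleanness at $i$ after one more round of exchange, or would be flagged as inconsistent via the $z^m$ signatures. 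Combined with a symmetric argument showing that $i'$ cannot learn of more crashes than $i$ by round $m^*$, this yields agreement on $NC_{m^*}$ and on $\{v_j : j\in NC_{m^*}\}$.

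Once that is in hand, the remaining pieces fall out cleanly. Because $f+1 < n$, at least two agents (and at least $n-f \ge 2$ nonfaulty ones) lie in $NC_{m^*}$, so each degree-$1$ polynomial $q_j^t$ for $j\in NC_{m^*}$ can be interpolated from the shares $y_{jl}^t$ broadcast in round $f+1$, giving every nonfaulty agent the same $x_j[t]$ and hence the same $S$ on line~\ref{line:decide:end}; Agreement follows. Validity holds because the decided value is $v_j$ for some $j\in NC_{m^*}$, and $j$ sent its initial preference $v_j$ in round~$1$. For Fairness, at least one nonfaulty $j^*\in NC_{m^*}$ contributes an $x_{j^*}[t]$ drawn uniformly from $\{0,\dots,n-t-1\}$ independently of all other $x_j[t]$, so $S\bmod(n-t)$ is uniform and each agent in $NC_{m^*}$ is selected with probability $1/|NC_{m^*}|\ge 1/n$; therefore if $c$ nonfaulty agents share preference $v$, the probability that $v$ is the decision is at least $c/n$.
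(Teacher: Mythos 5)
Your proposal is correct and follows essentially the same route as the paper's proof: no inconsistency is detected, a pigeonhole argument gives a round with no new crash, the nonfaulty agents are shown to agree on the first seemingly clean round $m^*$, on $NC_{m^*}$, on the round-1 values, and on the interpolated secrets (using $n-f\ge 2$), from which Agreement, Validity, Termination, Integrity, and Fairness all follow. The only (harmless) difference is that you justify uniformity of $S$ slightly more explicitly via the independent uniform contribution of a nonfaulty agent, whereas the paper asserts the $1/|NC_{m^*}|$ selection probability directly.
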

\shortv{
The proof of this theorem (and all others) is in the full paper.
}
\fullv{
\begin{proof}
Consider a run $r$ where all agents follow $\vsigma^{\osc}$  and
at most $f$ agents crash.  It is easy to see that no inconsistency is
detected in $r$.
Since an agent crashes in at most one round and there are at most $f$
faulty agents, 
there must exist a round $1 \le m \leq f+1$ when no agent crashes. Let
$m^*$ be the first such round.
We prove that for all nonfaulty agents $i$ and $j$, $NC_m^i(r) =
NC_m^j(r)$ for all $m \leq m^*$ (where $NC_m^i(r)$ denotes $i$'s
version of $NC_m(r)$ in run $r$, and similarly for $j$). To see this,
fix two nonfaulty agents $i$ and $j$. Agent
$i$ adds agent $l$ to $NC_m^i(r)$ iff $i$ receives a message from $l$ in
every round $m' \le m$ 
of run $r$,
and $i$ receives no status report indicating that $l$ crashed in some
round $m' \le m$.
If $m < m^*$, then it must be the case that $j$ also received a
message from $l$ in every round $m' < m$ of $r$
and neither received nor sent a status
report indicating that $l$ crashed in a round $m' \le m$;
otherwise
$j$ would have learned about this crash by round $m$
and would have told $i$ by round $f+1$ that $l$ was faulty (since $j$
is nonfaulty). 
Thus, $l \in NC_m^j(r)$.
If $m=m^*$, then $l$ sends a round $m'$ message 
to all agents for all $m' < m^*$; and since no
agents fail in round $m^*$, by assumption,
we again have $l \in NC_m^j(r)$.
Thus, $NC_m^i(r) \subseteq NC_m^j(r)$; similar arguments give the
opposite inclusion.

Note that since no agent crashes in round $m^*$, it is easy to see
that we must have $NC^i_{m^*}(r) = NC^i_{m^*-1}(r)$ for all nonfaulty
agents $i$,
so round $m^*$ 
seems
clean.
With these observations, we can now prove that $\vsigma^{\osc}$ satisfies
each requirement of Fair Consensus 
in $r$.

\textbf{Validity}: Since no inconsistency is detected, every agent $i$
decides a value different from $\bot$ in $r$. 
Agent
$i$ always finds some round $m^*$ that seems clean, computes a nonempty set $NC_{m^*}(r)$, which includes at least $i$,
and knows the random numbers sent by these agents in round $m^*$.
Since $ST_i$ contains only initial preferences, $i$ decides the initial preference of some agent in $NC_{m^*}(r)$.

\textbf{Termination and Integrity}: Every agent either crashes before deciding
or decides exactly once at the end of round $f+1$.

\textbf{Agreement}: We have shown that all nonfaulty
agents $i$ and $j$ agree on $NC_{m}(r)$ for all $m \leq m^*$.  
We thus omit the superscripts $i$ and $j$ on $NC_m(r)$ from here on in.
Given this, they agree on whether each round $m \leq m^*$ seems clean
and thus agree that some $\overline{m} \leq m^*$ is the first round that
seems clean in $r$. 
Moreover, $i$ and $j$ receive identical round 
$1$ messages
from the agents in $NC_{\overline{m}}(r)$.
It follows that $i$ adds a tuple $(l,v_l)$ to $ST_i$ for $l \in
NC_{\overline{m}}(r)$ iff $j$ adds that tuple to $ST_j$.
Suppose that $|NC_{\overline{m}}(r)| = n-t$.  Since $NC_{\overline{m}}(r)$
must include all the nonfaulty agents, we must have $t \le f$.
Clearly, if  $l \in NC_{\overline{m}}(r)$, then $i$ and $j$ 
must receive the values $y_{li}^t$ and $y_{lj}^t$ 
in round $1$ messages sent by $l$.
Agents $i$ and $j$ also receive $y_{ll'}^t$ from each nonfaulty agent
$l'$. Since there are at least 
$n-f \ge 2$ nonfaulty agents, 
and $l$ follows $\sigma_l^{\osc}$, $i$ and $j$ will be able to
interpolate the polynomial $q_l^t$, and compute $x_l[t] = q_l^t(0)$.
Consequently, $i$ and $j$ agree on the information relevant to the
consensus decision, so must decide on the same value.

\textbf{Fairness}: The probability of the initial preference of each agent
in $NC_{\overline{m}}(r)$ being decided is $1/|NC_{\overline{m}}(r)|$.
Since $|NC_{\overline{m}}(r)| \leq n$, if $c$ nonfaulty agents in $NC_{\overline{m}}$ initially have preference $v$,
then the probability of $v$ being decided is at least
$c/|NC_{\overline{m}(r)}| \geq c/n$. 
Since $NC_{\overline{m}}(r)$ contains all the nonfaulty agents, Fairness holds.
\end{proof}
}

It remains to show that $\vsigma^{\osc}$ is a $\pi$-Nash equilibrium.
We show that $\vsigma^{\osc}$ is a $\pi$-Nash equilibrium under
appropriate assumptions about $\pi$. Specifically, we assume that $\pi$ 
\emph{supports reachability} and is \emph{uniform}, notions that we now define.
The reachability assumption has three parts. The first two parts consider
how likely it is that some
information that an agent $j$ has will reach an agent that will decide
on a value; the third part is quite similar, and considers how
likely it is that a nonfaulty agent becomes aware that an agent $j$
failed in round $m$. Of course, the answer to these questions depends
in part on whether agents are supposed to send messages in every round
(as is the case with $\vsigma^{\osc}$).  In the formal definition, we
implicitly assume that this is the case.  (So, effectively, the
reachability assumption is appropriate only for protocols where
agents send messages in every round.)  
Given 
agents $i$ and $j \neq i$,
a round-$m$ information set $I_i$ for $i$, 
a failure pattern $F$   
compatible with $I_i$, in that $\R(F) \cap \R(I_i) \ne \emptyset$, 
and $m' \ge m$, say that \emph{a nonfaulty agent
$l \ne i$
  is reachable from $j$
  without $i$
  between rounds $m'$ and $f+1$
  given $F$} 
if there is a sequence $j_{m'}, \ldots, j_{f+1}$ of agents
different from $i$
such that $j=j_{m'}$, for $m'' 
  = \{m', \ldots, f\}$,  
$j_{m''}$
has not failed prior to
  round $m''$ 
  according to $F$, and either 
does not fail in round $m''$
or, if $m'' <
  f+1$, $j_{m''}$ fails in round $m''$ but 
  sends a message to $j_{m''+1}$ before failing (i.e., if 
   $(j_{m''},m'',A) \in F$, then $j_{m''+1} \in A$),
  and $l = j_{f+1}$.

Note that if $j$ is nonfaulty according to $F$, then a nonfaulty agent 
is certainly reachable from $j$
without $i$
between rounds $m'$ and $f+1$; just take $j_{m'} =
\cdots = j_{f+1} = j$.  But even if $j$ fails in round $m'$ according
to $F$, as long $j$ can send a message to a nonfaulty agent other than
$i$, or there
is an appropriate chain of agents, then a nonfaulty agent is
reachable from $j$ without $i$ by round $f+1$.
The probability of there being a failure pattern for which
a nonfaulty agent 
is reachable from $j$ without $i$ depends in part on
how many agents are known to have failed in $I_i$; the more agents are
known not to have failed, the more likely we would expect 
a nonfaulty agent 
to be reachable from $j$ without $i$.

We also want this condition to hold even conditional on a set of
failure patterns, provided that the set of failure patterns does not
favor particular agents failing.  To make this precise, we need a few
more definitions.  
Say that an agent $j$ is \emph{known to be faulty in $I_i$} if
$j$ is faulty in all runs in $\R(I_i)$; thus, $j$ is known to be
faulty in $I_i$ 
if $j$ did not send a message to $i$ at round $m-1$ according to $I_i$.
Say that a set $\F$ of failure patterns \emph{satisfies the
  permutation assumption with respect to a set $F$ of failures and an
  information set $I_i$} if, for all permutations $g$ of the agents that
keep fixed the agents that fail in $F$ or are known to be faulty in
$I_i$, if $F' \in \F$, then so is
$g(F')$, where $g(F')$ is the failure pattern that results by
replacing each triple $(j,m'',A) \in F'$ by $(g(j),m'',g(A))$.
\emph{$\F$ satisfies the permutation assumption with respect to $I_i$} if $\F$
satisfies it with respect to the empty set of failures and $I_i$.  
Let $\R(\F) = \cup_{F \in \F} \R(F)$.

We say that 
$\pi$ \emph{supports reachability} if for all agents $i$, all time-$m$
information sets $I_i$ 
such that $M$ agents are not 
known to be faulty in $I_i$, 
failure pattern $F$,
and all sets $\F$ of failure 
patterns that satisfy the permutation assumption with respect to
$F$ and
$I_i$, we have that  
\begin{enumerate}
\item if $j \ne i$ is not known to be faulty in $I_i$
    and is not in $F$, then 
$$
\begin{array}{l}
\pi(\mbox{no nonfaulty agent 
$l\ne i$ 
    is reachable from $j$ without $i$}
\\ \ \ \ \mbox{between rounds $m$ and $f+1$}  \mid \R(I_i) \cap \R(\F) 
\fullv{\cap \R(F)) \le \frac{1}{2M};}
\shortv{\cap \R(F))\\ \le \frac{1}{2M};}
\end{array} 
$$ 
\item if $j \ne i$ is not known to be faulty in $I_i$
    and is not in $F$, then 
$$
\begin{array}{ll}
    \pi(\mbox{no nonfaulty agent $l \neq i$ is reachable from $j$ without $i$}\\ 
     \ \ \  \mbox{between rounds $m-1$ and $f+1$}
    \mid \R(I_i) \cap \R(\F) 
    \fullv{  \cap \R(F)) \le \frac{1}{2M};}
    \shortv{  \cap \R(F))\\ \le \frac{1}{2M};}
\end{array}
$$  
\item if a message from some agent $j$ not in $F$ was received up to and
  including round $m-2$ but not in round $m-1$,
  then 
$$\begin{array}{ll}\pi(\mbox{no nonfaulty agent 
$l\ne i$}
\fullv{\mbox{ is reachable from an agent $j' \ne i$ that did not receive a message}\\ 
  \ \ \  \mbox{from $j$ in round $m-1$ without $i$ between rounds $m$
    and $f+1$} \mid \R(I_i) \cap \R(\F) 
 \cap \R(F)) \le
  \frac{1}{2M}.}
\shortv{\mbox{ is reachable from an agent}\\ \ \ \ \mbox{$j' \ne i$ that did
    not receive a message  
    from $j$  in round}\\ \ \ \ \mbox{$m-1$ without $i$ between rounds
    $m$ and $f+1$} \\ 
    \ \ \ \ \mid \R(I_i) \cap \R(\F) 
 \cap \R(F)) \le
  \frac{1}{2M}.}
\end{array}$$  
\end{enumerate}
The first two
requirements essentially say that if $i$ hears from $j$ in round
$m-1$, then it is likely that other agents will hear from $j$ as
well in a way that affects the decision, even if $i$ does not forward
$j$'s information. That is, it is unlikely that $j$ will fail right
away, and do so in a way that prevents its information from having an
effect. Similarly, the third requirement says that if $i$ 
does not hear from $j$ in round $m-1$ (as reflected in $I_i$), then
it is likely that other agents will hear
that $j$ crashed at or before round $m-1$ even if $i$ does not report
this fact.


We next define the notion of uniformity.
Given two failure patterns $F^1$ and $F^2$, we say that
$F^1$ and $F^2$ are \emph{equivalent} if there is a permutation $g$ 
of the agents such that $F^2 = g(F^1)$.  
We say that $\pi$ is \emph{uniform}
if, for all equivalent failure patterns $F^1$ and $F^2$
and vectors $\vec{v}$ of initial preferences, 
we have $\pi(F^1,\vec{v}) = \pi(F^2,\vec{v})$.
Intuitively, if $\pi$ is uniform, then the probability of each failure
pattern depends only on the  
number of messages omitted by each agent in each round;
it does not depend on the identity of faulty agents.

\fullv{
The following lemma will prove useful in the argument, and shows where
the uniformity assumption comes into play.
Roughly speaking, the lemma says that if the agents run $\vsigma^{\osc}$, then
 each agent $i$'s expected value
of its initial preference being the consensus value is just its
current knowledge about the fraction of nonfaulty agents that have its
initial preference. The lemma's claim is somewhat stronger,
because it allows for expectations conditional on certain sets
of agents failing.

Before stating the lemma, we need some definitions.  
Let $\R(D_{\ge m})$ consist of all runs 
where a decision is made and the first 
round that seems clean
is $m' \ge m$.
A set $\F$ of failure patterns, a failure pattern $F$, 
a round-$m$ information set $I_i$ for $i$, 
and $m'\ge m$ are
\emph{compatible} if  (a) all the failures in $F$ happen
before round $m'$, 
(b) $m' \le f+1$,
and (c) $\F$ 
satisfies  the permutation assumption with respect to $I_i$ and $F$.
Given an agent $i$ and a run $r$ where consensus is reached,
let $nc(r)$ be the number of agents who apparently have not crashed in
the first round of $r$ that seems clean (i.e., if $m$ is the first
clean round in $r$, then $nc(r) =
|NC_{m}(r)|$), and let $ac(r)$ be the number of these agents in $r$ 
that have initial preference $1$.
Given an information set $I_i \in \I_i$ and  a failure pattern
$F$, 
let $A_F$ be the set of agents who are faulty in $F$; 
let $A$ consist of the agents known to be faulty in $I_i$;
let $n(I_i,F) = n - |A \cup A_F|$; and
let $a(I_i,F)$ be the 
agents not in $A \cup A_F$ that have 
initial preference 1. 
Note that $nc$ and $ac$ are random variables on runs (i.e.,
functions from runs to numbers); 
technically, $a(I_i,F)$ and $n(I_i,F)$ are also random variables on
runs, but $n(I_i,F)$ is constant on runs in $\R(I_i)$, while
$a(I_i,F)$ is constant on runs in $\R(I_i)$ if $m
\ge 2$, since then $I_i$ contains the initial values of nonfaulty agents.


\begin{lemma}
\label{lemma:tree}
If $i$ is an agent who is nonfaulty at the beginning of round 
$m
\le f+1$ and has information set  
$I_i$ (so that $I_i$ is a round-$m$ information set), $F$ is a
failure pattern, $m' \ge m$, $\F$ is a set of failure patterns
such that $\F$, $F$, $I_i$, and $m'$ are compatible, 
$\pi$ is a distribution that supports reachability and is uniform,
and $\pi_{{\vsigma}^{\osc}}(\R(I_i) \cap\R(F) \cap \R(\F)
\cap  \R(D_{\ge m'})) > 0$, then 
\begin{equation}\label{eq:ev}
E[ac/nc \mid \R(I_i) \cap \R(F)\cap \R(\F) \cap
  \R(D_{\ge m'})] = E[a(I_i,F)/n(I_i,F) \mid \R(I_i)], 
\end{equation}
where the expectation is taken
with respect to $\pi_{\vec{\sigma}^{\osc}}$.  
\end{lemma}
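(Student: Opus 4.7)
\medskip
\noindent\textbf{Proof plan.}
The plan is to use the uniformity of $\pi$ and the permutation-closure of $\F$ (with respect to $I_i$ and $F$) to exhibit an exchangeability among the agents not in $A\cup A_F$, and thereby reduce both sides of the equation to the same symmetric average.

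Since all agents run $\vsigma^{\osc}$, which is deterministic in how $i$ tracks crashes given the messages received, the first clean round $m^*$ and the set $NC_{m^*}$ computed by $i$ are determined by the failure pattern; conditional on $\R(F)$, call them $m^*(F)$ and $S(F)$. The compatibility condition (all failures in $F$ occur before $m'\le m^*$), combined with the status-report mechanism of $\vsigma^{\osc}$, forces $S(F)\subseteq N\setminus A$, so each $v_j$ for $j\in S(F)$ is well-defined under the conditioning. By linearity of expectation,
\[
E[ac/nc \mid \R(I_i)\cap\R(F)\cap\R(\F)\cap\R(D_{\ge m'})]
= \frac{1}{|S(F)|}\sum_{j\in S(F)} \pi(v_j=1 \mid \cdot).
\]

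The core step is to show that for any permutation $g$ of $N$ fixing $i$ and every agent in $A\cup A_F$ pointwise, the conditioning set is $g$-invariant as a set of runs: uniformity gives $\pi(F',\vec v)=\pi(g(F'),\vec v)$; the permutation assumption on $\F$ gives $g(\F)=\F$ and $g(F)=F$; $\R(I_i)$ is invariant because $I_i$ is $i$'s history and $g$ fixes $i$; and $\R(D_{\ge m'})$ is invariant because the cleanliness of rounds depends only on the structural pattern of failures. Consequently $\pi(v_j=1\mid\cdot)=\pi(v_{g(j)}=1\mid\cdot)$, and since such permutations act transitively on $N\setminus(A\cup A_F)$, this common value equals $E[a(I_i,F)/n(I_i,F)\mid\R(I_i)]$, yielding the claimed equality after averaging.

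The main obstacle will be the invariance of $\R(D_{\ge m'})$ and the identification $S(F)\subseteq N\setminus(A\cup A_F)$: while ``clean round'' is structural, it is defined through $i$'s view, so pathological failure patterns in which the crash of some agent fails to propagate to $i$ (or to other nonfaulty agents) could in principle make the conditioning asymmetric in agent identity. This is exactly where the reachability assumption is used, to bound by $1/(2M)$ the probability of such propagation failures uniformly across agents $j$. Combining this uniform bound with an induction on the round $m$ of $I_i$---invoking the lemma at strictly later rounds inside the inductive step---should upgrade the approximate exchangeability provided by reachability to the exact equality in the statement.
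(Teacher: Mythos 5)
Your overall instinct---reduce both sides to a symmetric average using uniformity and the permutation-closure of $\F$---matches the paper's strategy, but the execution has concrete problems. First, you treat $m^*$ and $S(F)=NC_{m^*}$ as determined by $F$ alone. They are not: the conditioning event ranges over failure patterns in $\F$ that may contain failures beyond those in $F$, so $nc$ and the membership of each agent in $NC_{m^*}$ are genuinely random. Your linearity step $E[ac/nc\mid\cdot]=\frac{1}{|S(F)|}\sum_{j\in S(F)}\pi(v_j=1\mid\cdot)$ is therefore invalid, and with $S(F)$ fixed the right-hand side would be a constant equal to the realized $ac/nc$, which equals $a(I_i,F)/n(I_i,F)$ only when no agent outside $A\cup A_F$ fails. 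The entire content of the lemma is that the \emph{extra} failures are exchangeable across agents and hence do not bias the fraction of agents with preference $1$ among the survivors. The paper handles this by partitioning on the number $f''$ of agents viewed as faulty at the first seemingly clean round, arguing via uniformity and the permutation assumption that every set $B\supseteq A\cup A_F$ of size $f''$ is equally likely to be that set, modelling the outcome as ``choose $B$, then choose a dictator $j\notin B$ uniformly,'' and then switching the order of the two choices to read off $a(I_i,F)/n(I_i,F)$. Relatedly, your key claim $\pi(v_j=1\mid\cdot)=\pi(v_{g(j)}=1\mid\cdot)$ permutes the wrong quantity: uniformity permutes failure patterns while holding $\vec v$ fixed, and for $m\ge 2$ the preferences are already determined by $I_i$, so each $\pi(v_j=1\mid\cdot)$ is $0$ or $1$ and these cannot all coincide. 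The symmetry you need is over which agents fail and which survivor is selected as dictator, not over the values.

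Second, the appeal to reachability is both unnecessary and unworkable here. The compatibility hypothesis (all failures in $F$ occur before $m'$) together with the conditioning on $\R(D_{\ge m'})$ already guarantees, deterministically, that every agent in $A\cup A_F$ is excluded from $NC_{m^*}$: an agent crashing in round $m_j<m'$ sends nothing in round $m_j+1$ and is therefore recorded as crashed by round $m_j+1\le m'\le m^*$. So the ``pathological propagation failures'' you worry about are ruled out by the conditioning, not merely bounded in probability; the paper's proof of this lemma uses only uniformity and the permutation assumption, and reachability enters only later, when the lemma is applied in the proof of Theorem~\ref{theo:osc-ne}. Moreover, since the lemma asserts an exact equality, a scheme that produces ``approximate exchangeability'' with $1/(2M)$ error terms and hopes to remove them by an induction on the round of $I_i$ cannot succeed: you give no mechanism by which the error vanishes, and no base case from which exactness could propagate.
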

\begin{proof}
Let $f' = |A \cup A_F| = n - n(I_i,F)$. For all $f''$ with $f' \le
f'' \le f$, 
let $\R_{f''}$ consists of all runs $r$ where agents are
using $\vec{\sigma}^{\osc}$ such that exactly $f''$
agents are viewed as faulty in the first round that seems clean.  
We claim that, for all $f''$, we have $$E[ac/nc \mid \R_{f''} \cap
\R(I_i) \cap \R(F) \cap \R(\F) \cap \R(D_{\ge m'})] =
E[a(I_i,F)/n(I_i,F) \mid \R(I_i) ).$$ 
Clearly, (\ref{eq:ev}) follows immediately from this claim.

We can calculate the relevant expectations using algebra,  
but there is an easier way to see that the claim holds.
First suppose that $m'> 1$ (so that 
$a(I_i,F)$ and $n(I_i,F)$ are constants on $\R(I_i)$).
If the
first clean round occurs at or after $m'$, then it is easy to see that all
the agents in $A \cup A_F$ will be viewed as faulty in that round (by
all nonfaulty agents), since all these agents fail before round $m'$.
Note that the set of agents viewed as 
faulty in the first clean round of run $r$ is completely determined
by the failure pattern in $r$.  Moreover, it easily follows from the
uniformity assumption, the fact that $\vec{\sigma}^{\osc}$ treats
agents uniformly, and the fact that $\F$ satisfies the permutation
assumption that each set $B$ of cardinality $f''$ that includes $A
\cup A_F$ 
is equally likely to be the set of agents viewed as faulty in the
first clean round of a run in $\R_{f''} \cap \R(I_i) \cap \R(F)
\cap \R(\F) \cap \R(D_{\ge m})$.

Consider the following experiment: choose a set $B$
of $f''$
agents containing
$A \cup A_F$ uniformly at random, and then choose one more
agent $j \notin B$ at random. 
Assign a pair $(B,j)$ value 1 if the agent $j$ chosen has
initial preference 1 in all runs of $I_i$;
otherwise, assign it value 0.  It is easy to see that the
expected value of a 
pair is precisely $E[ac/nc  \mid  \R_{f''} \cap \R(I_i) 
  \cap \R(F)   \cap \R(\F) \cap \R(D_{\ge m})]$.
The $f''$ agents  
in $B$ constitute the set of faulty agents.
The fact that $B$ is  chosen uniformly at
random (among sets of cardinality $f''$ containing $A \cup A_F$)
corresponds to the  
assumption that all choices of $B$ are equally likely.
The last agent chosen determines the
consensus value; as long as there is at least one nonfaulty agent, the
procedure used in runs of $\vec{\sigma}^{\osc}$ 
guarantees that all choices of $j$ are equally likely. 

Now switch the order that the choices are made:
we first choose a nonfaulty agent not in $A \cup A_F$ uniformly at random and
then choose $f''-|A \cup A_F|$ other agents not in $A \cup A_F$ who will
  fail uniformly at 
random. It is clear that there is a one-to-one correspondence between the
choices in the first experiment and the second experiment: in
corresponding choices, the same set
of
$f'' - f'$
agents fail and the same other agent is chosen to determine
the consensus value. 
Moreover, corresponding choices are equally likely. With the second
experiment, it is immediate that the expected value is
$a(I_i,F)/n(I_i,F)$. 

If $m' \le 1$, then the argument is the same, except that the value of
$(B,j)$ is chosen according to the distribution of initial preferences
of agents $j \notin B$ in runs where the faulty agents are exactly the
ones in $B$.
This concludes the proof.
\end{proof}
Theorem~\ref{theo:osc-ne} shows that $\vsigma^{\osc}$ is
a $\pi$-Nash equilibrium, 
as long as $f+1<n$ and 
$\pi$ supports reachability and is uniform.
}

\commentout{
Before proving the result, we need to make clear what agents in a
coalition $K$ know about each other.  Intuitively, we allow arbitrary
pre-play communication.  So we assume that if the deviating agents are
using some strategy $\sigma_K$, then this fact is common knowledge
among them, as are their initial preferences.  As we said, we assume
that these initial preferences are the same (otherwise the coalition
would not have formed).  This means that all agents in a coalition
have the same preferences.  Finally we assume that agents in $K$  can make
their random choices in advance, so we 
can assume that an agent $i\in K$ knows all the random choices of $j
\in K$.  But if $i \in K$ gets some random number $x$ chosen by $l \notin K$,
then $j \in K$ does not know $x$ unless $i$ tells $j$ about it, which
requires a round of communication.
}

\begin{theorem}
\label{theo:osc-ne}
If $f + 1 < n$,
$\pi$ is a distribution that supports reachability, is uniform, and allows up to
$f$ failures, and agents care only about consensus,
then $\vsigma^{\osc}$
is a $\pi$-Nash equilibrium. 
\end{theorem}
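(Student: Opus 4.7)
The plan is to show that, at every information set $I_i$ of a potentially deviating agent $i$, no deviation strategy can strictly exceed the expected utility of playing $\sigma_i^{\osc}$, where the expectation is taken with respect to $\pi_{\vec{\sigma}^{\osc}}$. Because agents care only about consensus, $i$'s expected utility decomposes into $\beta_{0i}$ times the probability that $v_i$ is decided, $\beta_{1i}$ times the probability some other value is decided, and $\beta_{2i}$ times the probability that $\bot$ is decided. By Lemma~\ref{lemma:tree}, along honest play $i$'s conditional probability that $v_i$ is chosen equals $a(I_i,F)/n(I_i,F)$ in expectation; so it suffices to rule out deviations that increase this probability without producing a compensating increase in the probability of $\bot$, using $\beta_{0i}>\beta_{1i}>\beta_{2i}$.

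I would partition deviations into four types: (i) lying about $v_i$ or about the secret-sharing polynomials $q_i^t$; (ii) fabricating or suppressing status-report entries or $z$-signatures; (iii) pretending to crash by withholding messages in some round $m\le f$; and (iv) deciding a value at round $f+1$ other than the one the protocol prescribes. Type (iv) is immediate, since the final information set determines the protocol decision. Type (i) lying about $v_i$ only relabels which value $i$ contributes, and tampering with $q_i^t$ risks the round-$f+1$ interpolation check and so raises $\Pr[\bot]$. The secret-sharing construction ensures that $i$ sees only one point of each other agent's degree-$1$ polynomial until round $f+1$, so $i$ cannot condition on the eventual consensus value early; the random $z$-vectors ensure any forged claim about the receipt of a round-$(m-1)$ message from some $j$ is caught with high probability (since the corresponding $z_{ji}^{m-1}$ entries are not known to $i$), again pushing $\Pr[\bot]$ up.

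The substantive work, and the main obstacle, lies in types (ii) and (iii) at intermediate rounds, where $i$ might hope to alter $NC_{m^*}$ in a favorable direction. Here I would combine the three clauses of the reachability assumption with Lemma~\ref{lemma:tree}. If $i$ suppresses a message from $j$, reachability gives that, with probability at least $1-1/(2M)$ conditional on $I_i$, $j$'s status (together with its secret shares) still propagates without going through $i$ to some nonfaulty agent by round $f+1$, so $NC_{m^*}$ is unchanged by $i$'s deviation. Conditional on the complementary small-probability event that $NC_{m^*}$ does change, I would apply Lemma~\ref{lemma:tree} to the set $\F$ of failure-pattern completions of $I_i$ compatible with the deviation; because $\pi$ is uniform and $\vec{\sigma}^{\osc}$ chooses the decider by a permutation-symmetric rule, the expected fraction of $v_i$-preferring agents in the altered $NC_{m^*}$ is still $a(I_i,F)/n(I_i,F)$, so there is no expected gain on this branch either. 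The delicate part will be bookkeeping the permutation assumption across multiple-round deviations: one has to show that the set of continuations consistent with $i$'s observed trace and deviation plan is itself invariant under the permutations of agents not known to be faulty in $I_i$, so that Lemma~\ref{lemma:tree} applies. Once this is in place, every type of deviation is weakly dominated in the no-detection case, and strictly dominated when the detection probability (yielding $\bot$, which is strictly worse) is taken into account, giving the $\pi$-Nash property.
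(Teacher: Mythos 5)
Your overall skeleton matches the paper's: a case analysis over deviation types, with Lemma~\ref{lemma:tree} supplying the ``honest'' benchmark $a(I_i,F)/n(I_i,F)$ and the reachability and uniformity assumptions handling the rest. Your treatment of the signature/status-report and polynomial deviations is essentially the paper's argument (forgeries must guess a uniformly random $z$-value, so they mostly produce $\bot$, which is strictly worse). But your handling of the crash-pretending deviation---which the paper identifies as the hardest case---contains a genuine gap. You claim that with probability at least $1-1/(2M)$ the deviation leaves $NC_{m^*}$ unchanged, and that on the complementary branch Lemma~\ref{lemma:tree} still yields expected fraction $a(I_i,F)/n(I_i,F)$. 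Both halves are wrong for this deviation. Pretending to crash is not an ``inconsistency'' that goes undetected with small probability: with \emph{high} probability the other agents simply observe that $i$ crashed, $i$ is removed from $NC_{m^*}$, and by Lemma~\ref{lemma:tree} (now with $i$ counted among the faulty) $i$'s conditional probability of being dictator drops to $(a-1)/(M-1) < a/M$. That is the \emph{cost} of the deviation, and it is a strict loss, not a wash. The \emph{benefit}, which your proposal never identifies, is that by going silent $i$ may suppress its private knowledge that some other agent $j^*$ crashed in an earlier round, causing an earlier round to seem clean to everyone else---a round in which $i$ is still in $NC$ and the agent pool is more favorable to $i$. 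On that branch $i$'s utility can be as high as $\beta_{0i}$, strictly above $a\beta_{0i}/M + (M-a)\beta_{1i}/M$, so Lemma~\ref{lemma:tree} does not neutralize it.

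The actual argument requires quantitatively balancing these two effects: the reachability assumption (its three clauses, each contributing a $1/(2M)$ bound) caps the probability of the favorable ``earlier clean round'' event at $1/M$, and the paper then checks that
\[
\frac{1}{M}\,\beta_{0i} \;+\; \frac{M-1}{M}\cdot\frac{a-1}{M-1}\,\beta_{0i} \;+\; \frac{M-1}{M}\cdot\frac{M-a}{M-1}\,\beta_{1i} \;=\; \frac{a}{M}\,\beta_{0i} + \frac{M-a}{M}\,\beta_{1i},
\]
i.e., the bound is exactly tight, so the deviation yields no gain. Without identifying this specific gain/loss mechanism and verifying this arithmetic, the claim that pretending to crash is unprofitable does not follow; your ``either undetected and harmless, or detected and punished with $\bot$'' template does not apply here because no $\bot$ is triggered on either branch (outside the measure-controlled event that $f+1$ crashes appear). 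You also correctly flag, but do not carry out, the induction that removes deviations one information set at a time while preserving the permutation assumption on the conditioning sets of failure patterns; that bookkeeping is what licenses applying Lemma~\ref{lemma:tree} on each branch and is a substantial part of the paper's proof.
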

\begin{proof}
Fix an agent $i$ and a strategy $\sigma_i$.
We must show that 
we have
\commentout{
\begin{equation}
\label{eq:one-shot}
u_{i^*}(\vsigma^{\osc}) 
\geq u_{i^*}(\vsigma_K,\vsigma^{\osc}_{-i}).
\end{equation}
}
\begin{equation}
\label{eq:one-shot}
u_{i}(\vsigma^{\osc}) \geq u_{i}(\sigma_i,\vsigma^{\osc}_{-i}).
\end{equation}

Suppose, by way of contradiction, that (\ref{eq:one-shot}) does not
hold. Then $i$ must deviate from $\sigma_i^{\osc}$ at some round $m$.
Consider all the ways that $i$ can deviate in round $m$ that
  can affect the outcome (we discuss what it means to affect the
    outcome shortly):


  \begin{enumerate}
\item 
\label{dev-crash}
$i$ pretends to crash; it does not send messages to some
  subset of agents in round $m$ (and then does not not send messages
  from then on).
  \item 
\label{dev-value}
$m=1$ and $i$ sends $(i,1-v_i)$ to some agent $j \neq i$ 
(i.e., $i$ lies about its initial preference to at least one agent).
\item 
\label{dev-message}
$i$ sends an incorrectly formatted message to $j \neq i$
(i.e., $i$ sends a message that is different in format from that
   required by $\vsigma^{\osc}$).

 \item 
\label{dev-pol} 
 $m= 1$
 and $i$ sends values $y_{ij}^t$ to an agent
$j \neq i$ such
   that there is no polynomial $q_i^t$ of degree $1$ that 
   interpolates them all or does not choose the polynomials $q_i^t$ at random.
\item \label{dev:random} $i$ does not choose some $z_{ij}^m$ 
appropriately (as specified by $\vsigma^{\osc}$).

\item 
\label{dev-decide}
$m < f+1$ and $i$ decides on a value in $\{0,1\}$ in round $m$ or
  $m=f+1$ and $i$ 
  decides on an incorrect value
  on the equilibrium path.
  
\item 
\label{dev-pol2}
$m = f+1$ and $i$ sends a value $y_{ji}^t$ to 
$j' \neq i$
  different from the 
  value $y_{ji}^t$ that $i$ received from $j$ in round 1.

\item 
\label{dev-crash-nocrash}
$i$ does not send a round $m'< m$ message to some
agent $j$ 
that $i$ does not know at round $m$ to have been faulty in
round $m'$,
and sends a round $m$ message to 
$j' \ne i$.


\commentout{
\item\label{initpreferencelie} $i$ lies about $j$'s initial preference
  to an agent  
$j' \notin K$; that is, $i$ sends a pair $(j,v_j)$ to $j'$ although
  there is no pair 
  $(j,v_j) \in ST_i$ or it does not send  a pair $(j,v_j)$ to $j'$ although
  there is such a pair in $ST_i$.
}

\item\label{dev-statuslie} $i$ lies about $j$'s status to
$j'\neq i$;
that is, $i$ sends
  $j'$ a status
  report $\overline{SR}_i^m$ such that $\overline{SR}_i^m[j] \ne SR_i^m[j]$.
\end{enumerate}
Note that in a deviation of type 8, we did not consider the case where
$i$ deviates by not sending a message to $j$ in round $m'$ and then
sending a message to $j'$ if $i$ knows that $j$ failed in round $m'$.
In this case, $i$'s deviation is undetectable, and will not affect the
outcome. Clearly if $i$ performs only such undetectable deviations,
then $\sigma_i$ is equivalent to $\sigma_i^{\osc}$, so we do not need
to worry about these deviations.



We consider these deviations one
by one, and show that none of them makes $i$ better off.  More
precisely, 
we show that if $\sigma_i$ involves only deviations
1--$d$ on the list above for appropriate choices of $d$, then
(\ref{eq:one-shot}) holds.
But even this ``brute force'' argument requires some care,
using a somewhat delicate induction on the number of
deviations that $i$ is better off not deviating.
\shortv{We sketch the argument for deviations of type 1 (which turns
  out to be the hardest case) here, leaving the remaining details to
  the full paper.}

\commentout{
Before proving this result, we need to show that, without loss of
generality, we can assume that $\sigma_i$ is permutation-invariant in
a precise sense.  Given a permutation $g$ of the agents, and a history
$h_i$ for player $i$, $g(h_i)$ is the 
history for $i$ that results by replacing each tuple $(s_i,r_i,d_i)$
by $(g(s_i),g(r_i),d_i)$---that is, we replace each agent $j$ mentioned in
$s_i$ and $r_i$ by $g(j)$.  A strategy $\sigma_i$ is
\emph{permutation-invariant} if, for all histories $h_i$, we have
$\sigma_i(g(h_i)) = 
  g(\sigma(h_i))$, where $g(\sigma(h_i))$ is the result of replacing
  each agent $j$ in the 
  action $\sigma(h_i)$ by $g(j)$; thus, for example, a 
  report $(m',l)$ mentioning a crash of $j$ in $m'$ 
  is replaced by
  a report $(m',g(l))$ mentioning a crash of $g(j)$ in $m'$,
  and messages sent to $j$ are now
  sent to $g(j)$.
    A strategy profile $\vsigma$ is permutation-invariant if,
  for all agents $j$, $\sigma_j$ is permutation-invariant.
  It is easy to see that 
$\vsigma^{\osc}$ is
  permutation-invariant. We claim that,
  without loss of generality, we 
  can restrict attention to deviating strategies that are
  permutation-invariant.  

  \begin{lemma}\label{lem:invariant} For all strategies $\sigma_i$,
    there exists a  
    permutation-invariant strategy $\sigma_i'$ such that
   $$u_i(\sigma_i', \vsigma^{\osc}_{-i}) \ge u_i(\sigma_i, \vsigma^{\osc}_{-i}).$$
  \end{lemma}

  \begin{proof} Suppose that $\sigma_i$ is not permutation invariant.  
Say that a history $h_i$ is \emph{permutation-invariant (with respect to
  $\sigma_i$)} if 
  $\sigma_i(g(h_i)) = g(\sigma_i(h_i))$
   for all permutations $g$.
We proceed by induction on the number of
histories (i.e., information sets) $h_i$ that are not
permutation-invariant with respect to $\sigma_i$.
If all histories are permutation-invariant, then we are done;
$\sigma_i$ must be permutation-invariant.  If there are
histories that are not permutation-invariant, let $h_i$ be a minimal
one; that is, a 
history $h_i$ such that all prefixes $h_i'$ of $h_i$ are permutation-invariant.
Say that a history $h_i'$ is \emph{permutation-equivalent} to $h_i$ if $h_i'
= g(h_i)$ for some permutation $g$.
Let $I_{h_i}$ be the information set corresponding to $h_i$.
Choose $h^*_i$ permutation-equivalent
to $h_i$ such that $u_i((\sigma_i, \vsigma_{-i}^{\osc})\mid \R(I_{h^*_i}))$
is maximum among all histories $h_i'$ permutation-equivalent to $h_i$.
Suppose that $h^*_i = g^*(h_i)$.  We now
define a strategy $\sigma_i'$ that agrees with $\sigma_i$ except
possibly on $h_i$ and histories that follow $h_i$ (i.e., all
information sets that follow $I_{h_i}$). On such histories $h_i'$, we
define $\sigma'_i(h_i') = (g^*)^{-1}(\sigma(g^*(h_i')))$ (so that, in
particular, $\sigma_i'(h_i) = (g^*)^{-1}(\sigma(h^*_i))$. Since
$\vsigma^{\osc}$ is permutation-invariant, it is easy to check that
 $u_i((\sigma_i', \vsigma_{-i}^{\osc})\mid \R(I_{h_i})) = 
u_i((\sigma_i, \vsigma_{-i}^{\osc})\mid \R(I_{h^*_i})) \ge
 u_i((\sigma_i, \vsigma_{-i}^{\osc})\mid \R(I_{h_i}))$. We can now repeat
this process to modify $\sigma_i$ at and below all histories
permutation-equivalent to $h_i$.  The result is a strategy $\sigma_i^*$ 
with more permutation-invariant histories
than $\sigma_i$
such that 
$u_i(\sigma_i^*, \vsigma_{-i}^{\osc} ) \ge
 u_i(\sigma_i, \vsigma_{-i}^{\osc})$.  The result now follows from
the induction hypothesis.
\end{proof}
}

\fullv{
We now prove (\ref{eq:one-shot}).
We start with the first type of deviation; that is,
}
\shortv{So}
suppose that
$\sigma_i$ involves only $i$ pretending to crash and
that if $I_i^*$ is a time-$m^*$ information set for $i$,
$\F$ is a set of failure patterns
that satisfies the permutation assumption relative to $I_i^*$,
$\pi_{\vsigma^{\osc}}(\R(I_i^*) \cap \R(\F)) > 0$, and
either there are no deviations in runs in $\R(I_i^*)$ or the first
deviation in a run in 
$\R(I_i^*)$
occurs at or after information set
$I_i^*$,
then
\begin{equation}
\label{eq:better0}
u_{i}(\vsigma^{\osc} \mid \R(I_i^*) \cap \R(\F)) \ge 
u_{i}((\sigma_{i},\vsigma_{-i}^{\osc}) \mid \R(I_i^*) \cap \R(\F)).
\end{equation}
(\ref{eq:one-shot}) clearly follows from (\ref{eq:better0}) by taking
$I_i^*$ to be the initial information set
and letting $\F$ be the set
of all failure patterns compatible with $I_i^*$.

Given a strategy profile $\vec{\sigma}$, let $\R(\vec{\sigma})$ denote
  the possible runs of $\vec{\sigma}$.  
If there are no runs in $\R(\sigma_i,\vsigma^\osc_{-i}) \cap \R(I^*_i)$
in which $i$ pretends to fail, then 
conditional on $\R(I^*_i)$, $\sigma_i$ and $\sigma_i^{osc}$ agree, so
(\ref{eq:better0}) holds.  
If there are runs in $\R(\sigma_i,\vsigma^\osc_{-i}) \cap \R(I_i^*)$
in which $i$ pretends to fail, then 
we proceed by induction on the number of information sets $I_i$ at or after
$I^*_i$ at which $i$ first pretends to crash such that
$\pi_{(\sigma_i,\vsigma^{\osc}_{-i})}(\R(I_i^*) \cap \R(\F) \cap \R(I_i)) > 0$.
Suppose that $i$ first pretends to crash at some information set $I_i$ that
comes at or after $I^*_i$
and 
$\pi_{(\sigma_i,\vsigma^{\osc}_{-i})}(\R(I_i^*) \cap \R(\F) \cap \R(I_i)) > 0$.
Thus, there are no runs in $\R(I_i)$ in which $i$ pretends to crash
prior to information set $I_i$.
Let $\sigma_i'$ be identical to
$\sigma_i$ except that $i$ does not pretend to fail at or after
$I_i$.
By 
(\ref{eq:better0}),
$$u_{i}(\vsigma^{\osc} \mid \R(I_i^*) \cap \R(\F)) \ge
u_{i}((\sigma_i',\vsigma_{-i}^{\osc}) \mid \R(I_i^*) \cap \R(\F)).$$ 
We now show that
\begin{equation}\label{eq:better}
  \begin{array}{ll}
u_{i}((\sigma_i',\vsigma_{-i}^{\osc}) \mid \R(I_i^*) \cap \R(\F))
    \shortv{\\}
\ge 
u_{i}((\sigma_{i},\vsigma_{-i}^{\osc}) \mid \R(I_i^*) \cap \R(\F)).
  \end{array}
  \end{equation}
(\ref{eq:better0}) follows immediately.

To prove (\ref{eq:better}),
since $\R(I_i^*)$ is the union of all the time-$m$ information sets
for $i$ that follow $I_i^*$, 
it suffices to prove that 
for all time-$m$
information sets $I_i'$ for $i$ that follow $I_i^*$, we have 
\begin{equation}\label{eq:better1}
   u_{i}((\sigma_i',\vsigma_{-i}^{\osc}) \mid \R(I_i')
\cap \R(\F)) \ge  
u_{i}((\sigma_i,\vsigma_{-i}^{\osc}) \mid \R(I_i') \cap
\R(\F)) 
\end{equation}
(provided, of course, that 
$\pi_{\vsigma^{\osc}}(\R(I_i') \cap \R(\F)) > 0$;
in the future, we take it for granted 
that the relevant results apply only if we are conditioning on a set
with positive measure).
(\ref{eq:better}) clearly follows from (\ref{eq:better1}),
since the time-$m$ information sets for $i$ partition 
$\R(I_i^*) \cap \R(\F)$.

\commentout{
If $I_i'$ is not permutation-equivalent to $I_i$ 
(i.e., there is no $g$ such that $I_i'$ corresponds to $g(h_i)$),
then 
(\ref{eq:better1}) holds trivially, since
  $\sigma_i'$ agrees with $\sigma_i$ at $I_i'$ and all subsequent
  information sets. 
  If $I_i'$ is permutation-equivalent to $I_i$, then since
  $(\sigma_i',\vsigma^{\osc}_{-i})$ is permutation-invariant and
  $\F$ satisfies the permutation assumption with respect to $I_i'$ 
  (because every agent known to be faulty in $I_i^*$ is also
  known to be faulty in $I_i$, and $\F$ satisfies the permutation
  assumption with respect to $I_i^*$), it follows that
  $i$'s expected utility with $(\sigma_i',\vsigma^{\osc}_{-i})$
  is the same conditional on $\R(I_i)$ as it is conditional on $\R(I_i')$.  
}
If $I_i'\ne I_i$, then   (\ref{eq:better1}) holds trivially,
since in that case   $\sigma_i'$ agrees with $\sigma_i$ at $I_i'$ and
all subsequent information sets. 
  Thus, it suffices to prove
  (\ref{eq:better1}) in the case that $I_i' = I_i$.
We can assume without loss of generality
that $i$'s actions at and after $I_i$ are deterministic. If $i$ is
better off by pretending to fail at $I_i$ 
with some probability, then $i$ is better off by 
pretending to fail at $I_i$ 
with probability 1.  
Note that (a) whether or not there is a seemingly clean round, (b) which is the first
seemingly clean round if there is one, and (c) which agents are considered
nonfaulty at that round are completely determined by the failure
pattern.
Specifically, a particular failure pattern $F' \in \F$
determines the first seemingly clean round $m^*$.
\commentout{
In runs of $(\sigma_i',\vsigma_{-i}^{\osc})$,
$F'$ also determines $NC_m$ for all $m \le m^*$.
so $F'$ determines the first round $m^1 \le m^*$
that seems clean. In runs of $(\sigma_i,\vsigma^{\osc}_{-i})$,
$m^*$ is also the first clean round; let $m^2$
be the first round that seems clean when consensus is reached,
which may be different from $m^1$.
Fix two runs $r \in \R(I_i) \cap \R(\F) \cap \R(F)$ and 
$r' \in \R(I_i) \cap \R(\F) \cap \R(F)$ of $(\sigma_i,\vsigma^{\osc}_{-i})$
and $(\sigma_i',\vsigma^{\osc}_{-i})$, respectively.
For some round $m'$, the sets
$NC_{m'}(r)$ and $NC_{m'}(r')$ may differ in the following scenarios:
(i) by pretending to fail, $i$ blocks a report of a crash of some agent $j \ne i$,
in which case $j \in NC_{m'}(r)$ and $j \notin NC_{m'}(r')$,
or (ii) if $m' \ge m$, $i \in NC_{m'}(r')$
and $i \notin NC_{m'}(r)$. More precisely, in (i),
$j$ omits round $m'$ messages to a set $J$ of agents,
no nonfaulty agent $l \ne i$ is reachable from some agent in $J$
between $m'+1$ and $f+1$;
in (ii), $i$ omits round $m'$ messages to a set $J$ of agents
and some nonfaulty agent $l \ne i$ is reachable between
$m'+1$ and $f+1$ from some agent in $J$.
Observe that, if $m^1 < m$, then
$NC_{m'}(r) = NC_{m'}(r')$ for every $m' \le m^1$, and in particular $m^2 = m^1$.
This is because neither (i) nor (ii) can happen for $m' \le m^1$ in $r$:
(ii) can only happen for $m' \ge m > m^1$;
(i) happens only if in $r'$ agent $i$ forwards a report of a crash of $j$ in round $m' < m^1$,
in a round $m'' \ge m $ message, but in this case no round between $m'+1$ and $m''$
would seem clean (there would be a chain of crashes),
which is a contradiction to $m^1$ being the first round that seems clean.
Therefore, in $r$, all the information used to reach consensus is the same as in $r'$.
}
We partition the set 
$\F$
into four sets, 
$\F_1, \ldots, \F_4$, 
and show that
conditional on $\R(I_i) \cap \R(F_j)$,
agent $i$ does at least as well by using $\sigma_i'$ as it does by
using $\sigma_i$, for 
$j = 1,\ldots, 4$.
\shortv{
  \begin{itemize}
    \item $\F_1$ consists of the failure patterns in $\F$
where with $(\sigma_i,\vsigma^\osc_{-i})$ an 
inconsistency is detected (because $f+1$ agents seem to fail).
\item $\F_2$ consists of the failure patterns 
$F' \in \F - \F_1$ such that in all runs 
$r'$
in $\R((\sigma_i',
\vsigma^\osc_{-i})) \cap \R(I_i) \cap \R(F')$, the first
seemingly clean round occurs at some round $m^* < m$. 
\item $\F_3$ consists of the failure patterns in $\F - \F_1$
that result in $m$ being 
  the first seemingly clean round with both
  $(\sigma_i,\vsigma^\osc_{-i})$ and
  $(\sigma_i',\vsigma^\osc_{-i})$.
  \item $\F_4$ consists of the failure patterns in $\F - \F_1$
where the first seemingly clean round $m^*$ with 
$(\sigma_i',\vsigma^\osc_{-i})$ comes at or after $m$ while
with $(\sigma_i,\vsigma^\osc_{-i})$, the first seemingly clean round $m^*$
comes strictly before $m$ or strictly after $m$.
  \end{itemize}
The arguments in the first three cases is relatively straightforward.
To show that, conditional on $\R(I_i) \cap \R(\F_4)$, $i$'s utility is
at least as large with $(\sigma_i',\vsigma_{-i}^{\osc})$ as with
$(\sigma_i,\vsigma_{-i}^{\osc})$ requires the reachability 
and uniformity assumptions.
We leave details to the full paper. \end{proof}}

\fullv{
$\F_1$ deals with a trivial
case; the 
remaining elements of the
partition consider the first seemingly clean round 
of $(\sigma_i',\vsigma^{\osc}_{-i})$ and
$(\sigma_i,\vsigma^{\osc}_{-i})$.
(\ref{eq:better1}) in the case that $I_i' = I_i$ clearly follows from this.

\commentout{
Intuitively, we first consider the failure patterns
where the crash of $i$ in $m$ is not noticed by agents that decide.
Then, we consider the failure patterns where $m^1 < m$,
and the failure patterns where $m^1 \ge m$ and (i) can never happen with $m' < m$.
Finally, we consider the remaining failure patterns, where $m^1 \ge m$
and (i) may happen with $m' < m$. 
}

\begin{enumerate}
  \item[(a)] $\F_1$ consists of the failure patterns in $\F$
where with $(\sigma_i,\vsigma^\osc_{-i})$ an 
inconsistency is detected (because $f+1$ agents seem to fail).
Clearly, conditional on
$\R(I_i) \cap \R(\F_1)$, $i$'s utility is at least as high
with $(\sigma_i',\vsigma_{-i}^{\osc})$ as with 
$(\sigma_i,\vsigma_{-i}^{\osc})$.  It may be that with some failure patterns in
$\F_1$, no inconsistency is detected if $i$ uses
$(\sigma_i',\vsigma^\osc_{-i})$. 
But if the failure pattern is such that an inconsistency is
detected with $\sigma_i'$, then an inconsistency is certainly
detected with $\sigma_i$.  Thus, in all the remaining runs, we
consider no inconsistency is detected with either $\sigma_i$ or $\sigma_i'$.

\item[(b)]
$\F_2$ consists of the failure patterns 
$F' \in \F - \F_1$ such that in all runs 
$r'$
in $\R((\sigma_i',
\vsigma^\osc_{-i})) \cap \R(I_i) \cap \R(F'))$, the first
clean round occurs at some round $m_1 < m$.  It is easy to check that in
a run $r$ of $\R((\sigma_i, \vsigma^\osc_{-i}))$ 
corresponding to
$r'$,
the first clean round also occurs at $m_1$, so that
all agents get the same utility at $r$ and $r'$. Thus, conditional on
$\R(I_i) \cap \R(\F_2)$, $i$'s utility is the same
with $(\sigma_i',\vsigma_{-i}^{\osc})$ and
$(\sigma_i,\vsigma_{-i}^{\osc})$.

\item[(c)] $\F_3$ consists of the failure patterns in $\F - \F_1$
that result in $m$ being 
  the first seemingly clean round with both
  $(\sigma_i,\vsigma^\osc_{-i})$ and
  $(\sigma_i',\vsigma^\osc_{-i})$.  This can happen in runs in
  $\R(\sigma_i,\vsigma_{-i}^{\osc})$ only if the fact
  that $i$ started pretending to fail at $I_i$ with $\sigma_i$ is not
  detected by any agent that does not crash (i.e., if no
  agent that decides is reachable from an agent that does not hear
  from $i$ in round $m$).  Conditional on
$\R(I_i) \cap \R(\F_3)$, $i$'s utility is the same
with $(\sigma_i',\vsigma_{-i}^{\osc})$ and
$(\sigma_i,\vsigma_{-i}^{\osc})$.

\item[(d)] $\F_4$ consists of the failure patterns in $\F - \F_1$
where the first seemingly clean round with 
$(\sigma_i',\vsigma^\osc_{-i})$ comes at or after $m$ while
with $(\sigma_i,\vsigma^\osc_{-i})$, the first clean round $m^*$
comes strictly before $m$ or strictly after $m$. 
Let $M$ be the number of 
  agents that are not known to be faulty in $I_i$, and let $a$ be the
  number of these that share $i$'s initial preference.
It is straightforward to check that $\F_4$ satisfies the permutation
assumption with respect to $I_i$, so 
by Lemma~\ref{lemma:tree},
conditional on $\R(I_i) \cap \R(\F_4)$, 
  $i$'s   expected utility with $(\sigma_i',\vsigma^\osc_{-i})$ is
  $\frac{a\beta_{0i}}{M} + 
  \frac{(M-a)\beta_{1i}}{M}$. 

  To compute $i$'s expected utility with $(\sigma_i,\vsigma_{-i}^{\osc})$,
we must first consider how we could have $m^*$ (the first seemingly
clean round) occur before round $m$.  This can 
happen if (and only if) $i$ first learns
 in round 
 $m''-1 \ge m-1$
  that some agent $j^*$ crashed in round $m' \le
 m-1$, no agent nonfaulty agent $j'$ (other than $i$) will learn
 that $j^*$ crashed in round $m'$ if $i$ pretends to crash,
 and, as a result, round 
$m'$ will seem clean to $j'$. This, in turn can happen if (and only
if) either (i) $m' = m-1$, 
and $i$ does not hear from $j^*$ for
the first time in round $m-1$, 
or (ii) $m' < m$,
$i$ did not hear from $j^*$ for the first time in round
$m'+1$, and there is a 
chain $j_1, \ldots, j_{m''-m'}$ of agents
that ``hides'' the fact that $j^*$ actually crashed
in round $m'$ from $i$ (and all other nonfaulty agents) until round $m''$: 
$j_1$ does not hear from $j^*$ in round $m'$;
for $h < m''-m$, $i$ does not hear from $j_h$ in round $m'+h$;
but $j_{h+1}$ hears from $j_h$ in round $m'+1$
(thus, $j_2$ hears that $j^*$ crashed in round
$m'$ from $j_1$ in round $m'+1$, $j_3$ hears about this from $j_2$
in round $m'+2$, and so on), $i$ hears from $j_{m''-m'}$ in round
$m''$ (and so hears in round $m''$ that $j^*$ crashed in round $m'$);
and there is no shorter chain like this from $j^*$ to $i$.
Note that $i$ can tell by looking at its history at time $m$ whether
it is possible that (i) or (ii) occurrred.  Specifically, (i) can
occur only if there is an agent $j^*$ that $i$ does not hear from for
the first time in round $m$, and (ii) can occur only if there is a
chain $j_1, \ldots, j_{m-m'}$ such that, for $h' < m-m'$, $i$ does not
hear from $j_h$ for the first time in round $m'+h$, either does
not hear from $j_{m-m'}$ in round $m$  or hears from $j_{m-m'}$ that
$j^*$ crashed in round $m'$, and $i$ does not hear that $j^*$
crashed in round $m'$ before round $m$.
Also note that in case (ii), $i$'s history must 
be such that none of the rounds between $m'+1$ and 
$m''-1$ (inclusive)
can seem clean to $i$
(or the other nonfaulty agents).

Agent $i$'s expected
utility with $(\sigma_i,\vsigma^{\osc}_{-i})$ 
conditional on $\R(I_i) \cap \R(\F_4)$ depends on
whether $i$'s history (and hence $I_i$) is such that (i) or (ii) could
have occurred.  If (i) or (ii) could not have occurred, then we must
have $m^* > m$.
  To compute $i$'s expected utility with $(\sigma_i,\vsigma_{-i}^{\osc})$,
  we can apply Lemma~\ref{lemma:tree},
  but now we must include $i$ among the faulty agents (since in the
  first seemingly clean round in runs of
  $\R(\sigma_i,\vsigma_{-i}^{\osc}) \cap \R(I_i) \cap \R(\F_4)$, 
  $i$ will be viewed as faulty by the nonfaulty agents).
  Let $F$ be the failure pattern 
$\{(i,m,A)\}$, where $A$ is the set of agents to which $i$ sends a
message in round $m$ according to $\sigma_i$.  Since
$m^* > m$, we have $\R(I_i) \cap \R(\F_4) \cap \R(F) =
\R(I_i) \cap \R(\F_4) \cap \R(F) \cap \R(D_{\ge m+1})$.  Since
$I_i$, $\F_4$, $F$, and $m+1$ are compatible, by Lemma~\ref{lemma:tree},
$i$'s expected utility with $(\sigma_i, \vsigma_{-i}^{\osc})$
  conditional on
  $\R(I_i) \cap \R(F_4)$ is
  $\frac{(a-1)\beta_{0i}}{M-1} +
  \frac{(M-a)\beta_{1i}}{M-1}$.
  Since $\beta_{1i} > \beta_{2i}$,
conditional on
$\R(I_i) \cap \R(\F_4)$, $i$'s utility is  higher
with $(\sigma_i',\vsigma_{-i}^{\osc})$ than with 
$(\sigma_i,\vsigma_{-i}^{\osc})$.

Now if $I_i$ is such that (i) or (ii) could happen, we use the
reachability assumption to provide upper bounds on the probability
that $m^* < m$.  Note that if (i) holds, $m^* < m$ only if no
nonfaulty agent other than $i$ hears that $j^*$ crashed in round
$m'$.  By part 3 
of the reachability assumption, this happens with probabilty at
most $1/2M$.  If (ii) holds, $m^* < m$ only if there is an appropriate
chain. 
If $m'' = m$,
then agent
$j_{m-m'}$ 
in the chain is not known to be faulty in
$I_i$, so by part 1 of the reachability assumption, the probability
that no nonfaulty agent other than $i$ hears from $j_{m-m'+1}$ that $j^*$
crashed in round $m'$, conditional  on $\R(I_i) \cap \R(\F_4)$ is again
at most $1/2M$.  
Similarly, if $m'' > m$, then $j_{m-m' +1}$ is not known to be faulty in $I_i$,
so by part 2 of the reachability assumption, the probability
that no nonfaulty agent other than $i$ hears from $j_{m-m'+1}$ that $j^*$
crashed in round $m'$, conditional  on $\R(I_i) \cap \R(\F_4)$ is again
at most $1/2M$.
Thus, the probability that $m^* < m$ conditional  on
$\R(I_i) \cap \R(\F_4)$ is at most
$1/M$, even if both (i) and (ii) can occur.
In the runs of $\R(\sigma_i,\vsigma_{-i}^{\osc }) \cap \R(I_i) \cap
\R(\F_4) \cap \R(F)$ where
the first seemingly clean round is $m^* < m$, $i$'s
utility is at most $\beta_{0i}$. If (i) or (ii) could happen and the
first clean round in not before $m$, then it must occur strictly after
$m$, as noted above.  If it does occur after time $m$, then by the
argument above, $i$'s expected utility is   $\frac{(a-1)\beta_{0i}}{M-1} +
  \frac{(M-a)\beta_{1i}}{M-1}$.  Thus, if $I_i$ is such that (i) or
  (ii) could happen, then $i$'s expected utility conditional on
$\R(I_i) \cap \R(\F_4)$
  is
  at most
$$\left(\frac{1}{M} + \frac{M-1}{M} \cdot \frac{a-1}{M-1}\right) \beta_{0i} + \frac{M-1}{M} \cdot \frac{M-a}{M-1} \beta_{1i} = \frac{a}{M} \beta_{0i} + \frac{M-a}{M} \beta_{1i}.$$ 
    In either case,
    conditional on
$\R(I_i) \cap \R(\F_4)$, $i$'s utility is at least as high
with $(\sigma_i',\vsigma_{-i}^{\osc})$ as with 
$(\sigma_i,\vsigma_{-i}^{\osc})$.
  \end{enumerate}

  \commentout{
  we can apply Lemma~\ref{lemma:tree},
  but now we must include $i$ among the faulty agents (since in the
  first seemingly clean round in runs of
  $\R(\sigma_i,\vsigma_{-i}^{\osc}) \cap \R(I_i) \cap \R(\F_4)$, 
  $i$ will be viewed as faulty by the nonfaulty agents).
Let $F'$ be the failure pattern
containing the
tuple $(i,m,A)$, where $A$ is the set of agents to which $i$ sends a
message in round $m$ according to $\sigma_i$.  Since
  $i$ appears to fail in round $m$ in $F'$, $\F_4$, $F'$, $I_i$, and
$m$ are not compatible.  To 
  apply Lemma~\ref{lemma:tree} here, we must consider
  time-$(m+1)$ information sets $I_i'$ that follow $I_i$; for such an
  information set $I_i'$, $\F_4$, $F'$, $I_i'$, and $m+1$ are
  compatible.  Moreover, 
by Lemma~\ref{lemma:tree}, $i$'s 
utility with $(\sigma_i, \vsigma_{-i}^{\osc})$
conditional on $\R(I_i') \cap \R(F_4) \cap \R(F')$ is 
$\frac{(a-1)\beta_{0i}}{M-1} +
  \frac{(M-a)\beta_{1i}}{M-1}$.  Since this is the case for all time-$(m+1)$
  information sets $I_i'$ that follow $I_i$, 
$i$'s expected utility with $(\sigma_i, \vsigma_{-i}^{\osc})$
  conditional on
  $\R(I_i) \cap \R(F_4)$ is
  $\frac{(a-1)\beta_{0i}}{M-1} +
  \frac{(M-a)\beta_{1i}}{M-1}$.
  Since $\beta_{1i} > \beta_{2i}$,
conditional on
$\R(I_i) \cap \R(\F_4)$, $i$'s utility is  higher
with $(\sigma_i',\vsigma_{-i}^{\osc})$ than with 
$(\sigma_i,\vsigma_{-i}^{\osc})$.

\commentout{
\item[(d)] Let $\F_1$ consist of the failure patterns in $\F$
  such that no agent that decides
  in runs of $\R(\sigma_i,\vsigma^\osc_{-i}) \cap \R(I_i) \cap
  \R(\F_1) \cap \R(F)$
  learns that $i$ did not send messages in round $m$.
This happens if $i$ sends messages to a set $J$ in round $m$, and all
agents crash in such a way that 
no nonfaulty agent is reachable from some agent $J$ between rounds
$m+1$ and $f+1$.
  For each run $r \in \R(I_i) \cap
  \R(\F_1) \cap \R(F)$, let $J_r$ be the set of agents to whom $i$
  sends a message in round $m$ of $r$ according to $\sigma_i'$.
  Notice that all the agents in 
  $J_r$ must   fail in run $r$ (for otherwise $r \notin \F_1$).  Let
  $F_r$ be a failure pattern that consists of the failures in $r$
  together with the failures in $F$.
  It is not hard to check
  that $\F_1$ satisfies the permutation assumption with respect to
  $I_i$ and $F_{r}$.  
Let $\sigma_i''$ be a strategy identical to $\sigma_i$
except at $I_i$ agent $i$ sends all messages according to
$\sigma_i^\osc$ information sets corresponding to 
to $g(h_i)$ for some permutation $g$ (and, in particular, at $I_i$),
but then does not send any messages at all subsequent information sets.
Since $i$ pretends to crash at fewer histories with $\sigma_i''$ than
with $\sigma_i'$, by the induction hypothesis, we have
$$u_i((\sigma_i'',\vsigma^{\osc}_{-i}) \mid \R(I_i) \cap
\R(\F_1) \cap \R(F_r))  
\ge u_i((\sigma_i,\vsigma^{\osc}_{-i}) \mid \R(I_i) \cap \R(\F_1) \cap \R(F_r)).$$
Since all nonfaulty agents make the same decision in corresponding runs of
$(\sigma_i'',\vsigma^{\osc}_{-i}) \mid \R(I_i) \cap
\R(\F_1) \cap \R(F_r)$ and
$(\sigma_i',\vsigma^{\osc}_{-i}) \mid \R(I_i) \cap
\R(\F_1) \cap \R(F_r)$, we have that 
$$
u_i((\sigma_i',\vsigma^{\osc}_{-i}) \mid \R(I_i) \cap
\R(\F_1) \cap \R(F_r))   =
u_i((\sigma_i'',\vsigma^{\osc}_{-i}) \mid \R(I_i) \cap
\R(\F_1) \cap \R(F_r))  
\ge u_i((\sigma_i,\vsigma^{\osc}_{-i}) \mid \R(I_i) \cap \R(\F_1) \cap \R(F_r)).$$
Since this is true for all runs $r \in \R(I_i) \cap
  \R(\F_1) \cap \R(F)$, we have 
$$ u_i((\sigma_i',\vsigma^{\osc}_{-i}) \mid \R(I_i) \cap
\R(\F_1) \cap \R(F))  
\ge u_i((\sigma_i,\vsigma^{\osc}_{-i}) \mid \R(I_i) \cap \R(\F_1) \cap \R(F)),$$
as desired.
\item[(b)]
Let $\F_2$ consist of the failure patterns 
$F' \in \F$ such that in all runs in $\R((\sigma_i',
\vsigma^\osc_{-i})) \cap \R(I_i) \cap \R(F) \cap \R(F'))$, the first
clean round occurs at some round $m_1 < m$.  It is easy to check that
in the run $r'$ of $\R((\sigma_i', \vsigma^\osc_{-i}))$ corresponding to
$r$, the first clean round also occurs at $m_1$, so that
(if no inconsistency is detected in $r'$ due to $f$ agents crashing)
all agents get the same utility at $r$ and $r'$.
\commentout{
the first clean round occurring 
at or before round $m$ in both cases (i.e., with both
$(\sigma_{i},\vsigma_{-i}^{\osc})$ and $\vsigma^{\osc})$), with the same set of agents viewed
as nonfaulty.  (E.g., if $m=3$ and no agent failed at rounds 1 and 2,
then the fact that $i$ 
pretends to fail at
round 3 has no impact on the clean round. Note that it could be that
round $m$ itself is clean even if agent $i$ uses $\sigma_i$, if, for
example, $i$ sends a message in round $m$ to all nonfaulty agents.)

Clearly, conditional on $\R(I_i) \cap \R(\F_2 \cap \R(F)$,
$i$'s utility is at least as high
$(\sigma_i',\vsigma^{\osc}_{-i})$
as with $(\sigma_i,\vsigma_{-i}^{\osc})$.
}

\commentout{
\item[(b)] Let $\F_2^*$ consist of the failure patterns 
where with $\sigma_i$, an 
inconsistency is detected (because 2 agents seem to fail).
Clearly, conditional on
$\F_2$, $i^*$'s utility is at least as high with $\sigma_i^{\osc}$ as with
$\sigma_i$.  (It may be that with some failure patterns in
$\F_2$, no inconsistency is detected if $i$ uses $\sigma_i'$;
note that if the failure pattern is such that an inconsistency is
detected with $\sigma_i'$, then an inconsistency is certainly
detected with $\sigma_i$.)
}
}
\commentout{
\item[(e)] 
$\F_5$ 
consists of the failure patterns
 where with  $(\sigma_i',\vsigma^\osc_{-i})$, 
the first seemingly clean round happens at or
 after round $m$ and with $(\sigma_i,\vsigma^\osc_{-i})$, 
 the first seemingly clean round 
 may happen
 before round $m$.   happen if (and only if) $i$ first learns
 in round 
 $m''-1 \ge m-1$
  that some agent $j^*$ crashed in round $m' \le
 m-1$, no agent nonfaulty agent $j'$ (other than $i$) will learn
 that $j^*$ crashed in round $m'$ if $i$ pretends to crash,
 and, as a result, round 
$m'$ will seem clean to $j'$. This, in turn can happen if (and only
if) either (i) $m' = m-1$, 
some agent $j$
does not hear from $j^*$ for
the first time in round $m-1$, and no nonfaulty agent is reachable from
$j$ without $i$ between rounds $m$ and $f+1$
(so the nonfaulty agents other than $i$ do not
learn that $j^*$ crashed in round $m-1$, and thus may view
round $m-1$ as clean)
or (ii) $m' < m$,
$i$ did not hear from $j^*$ for the first time in round
$m'+1$, and there is a 
chain $j_1, \ldots, j_{m''-m'}$ of agents
that ``hides'' the fact that $j^*$ actually crashed
in round $m'$ from $i$ (and all other nonfaulty agents) until round $m''$: 
$j_1$ does not hear from $j^*$ in round $m'$;
for $h < m''-m$, $i$ does not hear from $j_h$ in round $m'+h$;
but $j_{h+1}$ hears from $j_h$ in round $m'+1$
(thus, $j_2$ hears that $j_0$ crashed in round
$m'$ from $j_1$ in round $m'+1$, $j_3$ hears about this from $j_2$
in round $m'+2$, and so on), $i$ hears from $j_{m''-m'}$ in round
$m''$ (and so hears in round $m''$ that $j^*$ crashed in round $m'$);
and there is no shorter chain like this from $j^*$ to $i$.
Note that $i$ can tell by looking at its history at time $m$ whether
it is possible that (i) or (ii) occurrred.  Specifically, (i) can
occur only if there is an agent $j^*$ that $i$ does not hear from for
the first time in round $m$, and (ii) can occur only if there is a
chain $j_1, \ldots, j_{m-m'}$ such that, for $h' < m-m'$, $i$ does not
hear from $j_h$ for the first time in round $m'+h$, either does
not hear from $j_{m-m'}$ in round $m$  or hears from $j_{m-m'}$ that
$j^*$ crashed in round $m'$, and $i$ does not hear that $j^*$
crashed in round $m'$ before round $m$.
Also note that in case (ii), $i$'s history must 
be such that none of the rounds between $m'+1$ and 
$m''-1$ (inclusive)
can seem clean to $i$
(or the other nonfaulty agents).
}

\commentout{
Agent $i$'s expected
utility with $(\sigma_i,\vsigma^{\osc}_{-i})$ 
conditional on $\R(I_i) \cap \R(\F_4)$ depends on
whether round $m'$ is viewed as seemingly clean by the agents that
decide.
This in turn depends on whether a chain of crashes occurs before round $m-1$.
We further partition $\F_4$ according to the failure pattern $F$
specifying the failures that happen before $m-1$.
Given $F$, case (ii) may happen with at most one round $m^1 \le m-1$;
this requires that, for some agent $j$
that seems nonfaulty in $I_i$,
no nonfaulty agent $l \ne i$ is reachable from $j$.
By part 2 of the reachability
assumption, this happens with probability at most $1/2M$.
Conditioning on $\R(I_i) \cap \R(\F_4) \cap \R(F)$ and case (ii) not happening,
case (i) may still happen if $m^1 <  m-1$: even though
no round prior to $m-1$ seems clean, $m-1$ may still seem clean
if there is a set $J$ of agents that omit messages in round $m-1$ to $i$ for the first time
such that none of those agents could be part of the chain determined by $F$, and 
no nonfaulty agent $l \ne i$
is reachable from an agent that did not get a message from an agent in
$J$ in round $m-1$.
}

Agent $i$'s expected
utility with $(\sigma_i,\vsigma^{\osc}_{-i})$ 
conditional on $\R(I_i) \cap \R(\F_4)$ depends on
whether $i$'s history (and hence $I_i$) is such that (i) or (ii) could
have occurred.  If (i) or (ii) could not have occurred, then we must
have $m^* > m$.
  To compute $i$'s expected utility with $(\sigma_i,\vsigma_{-i}^{\osc})$,
  we can apply Lemma~\ref{lemma:tree},
  but now we must include $i$ among the faulty agents (since in the
  first seemingly clean round in runs of
  $\R(\sigma_i,\vsigma_{-i}^{\osc}) \cap \R(I_i) \cap \R(\F_4)$, 
  $i$ will be viewed as faulty by the nonfaulty agents).
  Let $F$ be the failure pattern 
$\{(i,m,A)\}$, where $A$ is the set of agents to which $i$ sends a
message in round $m$ according to $\sigma_i$.  Since
$i$ appears to fail in round $m$ in $F$, it follows that $\F_4$, $F$, $I_i$, and
$m$ are not compatible.  To 
  apply Lemma~\ref{lemma:tree} here, we must consider
  time-$(m+1)$ information sets $I_i'$ that follow $I_i$; for such an
  information set $I_i'$, $\F_4$, $F$, $I_i'$, and $m+1$ are
  compatible.  Moreover, 
by Lemma~\ref{lemma:tree}, $i$'s 
utility with $(\sigma_i, \vsigma_{-i}^{\osc})$
conditional on $\R(I_i') \cap \R(F_4) \cap \R(F)$ is 
$\frac{(a-1)\beta_{0i}}{M-1} +
  \frac{(M-a)\beta_{1i}}{M-1}$.  Since this is the case for all time-$(m+1)$
  information sets $I_i'$ that follow $I_i$, 
$i$'s expected utility with $(\sigma_i, \vsigma_{-i}^{\osc})$
  conditional on
  $\R(I_i) \cap \R(F_4)$ is
  $\frac{(a-1)\beta_{0i}}{M-1} +
  \frac{(M-a)\beta_{1i}}{M-1}$.
  Since $\beta_{1i} > \beta_{2i}$,
conditional on
$\R(I_i) \cap \R(\F_4)$, $i$'s utility is  higher
with $(\sigma_i',\vsigma_{-i}^{\osc})$ than with 
$(\sigma_i,\vsigma_{-i}^{\osc})$.

Now if $I_i$ is such that (i) or (ii) could happen, we use the
reachability assumption to provide upper bounds on the probability
that $m^* < m$.  Note that if (i) holds, $m^* < m$ only if no
nonfaulty agent other than $i$ hears that $j^*$ crashed in round
$m'-1$.  By part 3 
of the reachability assumption, this happens with probability at
most $1/2M$.  If (ii) holds, $m^* < m$ only if there is an appropriate
chain. 
If $m'' = m$, then
agent
$j_{m-m'}$ 
in the chain is not known to be faulty in
$I_i$, so by part 1 
of the reachability assumption, the probability
that no nonfaulty agent other than $i$ hears from $j_{m-m'}$
that $j^*$
crashed in round $m'$, conditional  on $\R(I_i) \cap \R(F)$ is again
at most $1/2M$.
Similarly, if $m'' > m$,
then agent $j_{m-m'+1}$ is not known to be faulty in $I_i$,
so by part 2 of the reachability assumption, 
the probability
that no nonfaulty agent other than $i$ hears from $j_{m-m' + 1}$ that $j^*$
crashed in round $m'$, conditional  on $\R(I_i) \cap \R(F)$ is again
at most $1/2M$.
Thus, the probability that $m^* < m$ is at most
$1/M$, even if both (i) and (ii) can occur.
In the runs of $\R(\sigma_i,\vsigma_{-i}^{\osc }) \cap \R(I_i) \cap
\R(\F_4) \cap \R(F)$ where
the first seemingly clean round is $m^* < m$, $i$'s
utility is at most $\beta_{0i}$. If (i) or (ii) could happen and the
first clean round is not before $m$, then it must occur strictly after
$m$, as noted above.  If it does occur after time $m$, then by the
argument above, $i$'s expeted utility is   $\frac{(a-1)\beta_{0i}}{M-1} +
  \frac{(M-a)\beta_{1i}}{M-1}$.  Thus, if $I_i$ is such that (i) or
  (ii) could happen, then $i$'s expected utility conditional on
$\R(I_i) \cap \R(\F_4)$
    at most
$$\left(\frac{2}{2M} + \frac{M-1}{M} \cdot \frac{a-1}{M-1}\right) \beta_{0i} + \frac{M-1}{M} \cdot \frac{M-a}{M-1} \beta_{1i} = \frac{a}{M} \beta_{0i} + \frac{M-a}{M} \beta_{1i}.$$ 
    In either case,
    conditional on
$\R(I_i) \cap \R(\F_4)$, $i$'s utility is at least as high
with $(\sigma_i',\vsigma_{-i}^{\osc})$ as with 
$(\sigma_i,\vsigma_{-i}^{\osc})$.

\commentout{
 so by the same arguments as in part
(d), $i$'s expected utility is at most 
$\frac{(a-1)\beta_{0i}}{M-1} +
  \frac{(M-a)\beta_{1i}}{M-1}$.
  Thus, $i$'s expected utility with $(\sigma_i, \vsigma_{-i}^{\osc})$
  conditional on $\R(I_i) \cap \R(\F_5)$ is at most
}

To compute $i$'s expected utility with $(\sigma_i,\vsigma_{-i}^{\osc})$
in this case, we can apply Lemma~\ref{lemma:tree},
  but now we must include $i$ among the faulty agents (since in the
  first seemingly clean round in runs of
  $\R(\sigma_i,\vsigma_{-i}^{\osc}) \cap \R(I_i) \cap \R(\F_4) \cap \R(F)$, 
  $i$ will be viewed as faulty by the nonfaulty agents).
Let $F'$ be the failure pattern
that extends $F$ by including the
tuple $(i,m,A)$, where $A$ is the set of agents to which $i$ sends a
message in round $m$ according to $\sigma_i$.  Since
  $i$ appears to fail in round $m$ in $F'$, $\F_4$, $F'$, $I_i$, and
$m$ are not compatible. To apply Lemma~\ref{lemma:tree} here, we must consider
  time-$(m+1)$ information sets $I_i'$ that follow $I_i$; for such an
  information set $I_i'$, $\F_4$, $F'$, $I_i'$, and $m+1$ are
  compatible.  Moreover, 
by Lemma~\ref{lemma:tree}, $i$'s 
utility with $(\sigma_i, \vsigma_{-i}^{\osc})$
conditional on $\R(I_i') \cap \R(F_4) \cap \R(F')$ is 
$\frac{(a-1)\beta_{0i}}{M-1} +
  \frac{(M-a)\beta_{1i}}{M-1}$.  Since this is the case for all time-$(m+1)$
  information sets $I_i'$ that follow $I_i$ and all failure patterns $F$, 
$i$'s expected utility with $(\sigma_i, \vsigma_{-i}^{\osc})$
  conditional on
  $\R(I_i) \cap \R(F_4)$ is
  at most
$$\left(\frac{2}{2M} + \frac{M-1}{M} \cdot \frac{a-1}{M-1}\right) \beta_{0i} + \frac{M-1}{M} \cdot \frac{M-a}{M-1} \beta_{1i} = \frac{a}{M} \beta_{0i} + \frac{M-a}{M} \beta_{1i},$$ 
so conditional on
$\R(I_i) \cap \R(\F_4)$, $i$'s utility is at least as high
with $(\sigma_i',\vsigma_{-i}^{\osc})$ as with 
$(\sigma_i,\vsigma_{-i}^{\osc})$.
  \end{enumerate}
}

The sets $\F_1, \ldots, \F_4$ form a partition of $\F$: they are
clearly disjoint, and it is not
possible for the first clean round with $(\sigma_i',\vsigma_{-i}^{\osc})$
to be strictly after $m$ while the first clean round with
$(\sigma_i,\vsigma_{-i}^{\osc})$ is $m$.
Thus, we have proved (\ref{eq:better1}) in the case that $I_i' = I_i$,
as desired. This completes the argument for deviations of type 1.
}

\fullv{
\commentout{
$NC_{m'-1} = NC_{m'-1} \setminus J$ in runs of $(\sigma_i',\vsigma^{\osc}_{-i})$
($m'$ may be seen as clean if the crash of agents in $J$ is not known),
and agents in $J''$ seem nonfaulty in $I_i$ but may still crash.
In this case, we always have $m^2 > m$
and $i$'s value is excluded from the decision
with $(\sigma_i,\vsigma^{\osc}_{-i})$.
 \commentout{
  the first clean round with $\sigma_i^{\osc}$ is round $m$,
  while the first clean round with $\sigma_i$ occurs after round $m$. }
Let $M$ be the total number of 
  agents that are not known to be faulty in $I_i$, and let $a$ be the
  number of them that share $i$'s initial preference. It is easy to
  see that, 
 by Lemma~\ref{lemma:tree}
 (taking $\F$ in the
  lemma to be $\F_3$),
  $i$'s 
  expected utility with 
$(\sigma_i',\vsigma^{\osc}_{-i})$ 
 is
  $\frac{a\beta_{0i}}{M} + 
  \frac{(M-a)\beta_{1i}}{M}$. 
\commentout{ 
 To compute $i$'s expected utility with
  $(\sigma_i$, we partition the runs of
  $(\sigma_i,\vsigma_{-i}^{\osc})$ in $\R(I_i)$ according to the
  failure patterns of the agents in $K$ that start pretending to fail
  in round $m$.  In each run $r \in \R(I_i)$, $i$'s failure pattern at
  round $m$ is the same (unless $i$ actually fails at round $m$ in run
  $r$); but it may be different for the other agents in $K$, since how an
  agent $j$ pretends to fail is determined by $j$'s information set,
  and this information set may be different in different runs of $\R(I_i)$.
There are only finitely many possible failure patterns for the agents
in $K$ that pretend to fail at time $m$ in some run of $\R(I_i)$.
For each such failure 
  pattern $F$, 
}
  To compute $i$'s expected utility with $(\sigma_i,\vsigma_{-i}^{\osc})$,
  we can apply Lemma~\ref{lemma:tree}
  with $\F_3$. 
  Observe that $\F_3$ satisfies the permutation
  assumption with respect to $I_i$ and $F$,
 so the lemma applies.
  (While this follows easily, it depends on the fact that both
  $(\sigma_i,\vsigma^{\osc}_{-i})$ and
 $(\sigma_i',\vsigma^{\osc}_{-i})$ are permutation-invariant.) 
  It follows that
  $i$'s expected utility is at
  most $\frac{(a-1)\beta_{0i}}{M-1} +
  \frac{(M-a)\beta_{1i}}{M-1}$.
  Since $\beta_{0i} >
  \beta_{1i}$ by assumption, and $\frac{a-1}{M-1} <
  \frac{a}{M}$, $i$ does
  better with $(\sigma_i',\vsigma^{\osc}_{-i})$ than with
  $(\sigma_i,\vsigma_{-i}^{\osc})$ conditional on 
  the failure pattern being in $\F_3$.

\item[(d)] 
Let 
$F'$ be a failure pattern containing $F$
that fixes the failures of agents from two sets $J$ and $J'$ such that there exist round $m'$
and set $J''$ as described in (b),
and let $\F(F')$ be the failure patterns in $\F$ compatible with $I_i$ and $F'$.
First, suppose that $m' < m-1$.
 With these failure patterns,
$m^1 \ge m$.
It is not hard to check,
  using the fact that 
$(\sigma_i,\vsigma^{\osc}_{-i})$ and $(\sigma_i',\vsigma^{\osc}_{-i})$
  are permutation-invariant, 
  that 
  $\F(F')$
  satisfies the
  permutation assumption with respect to $I_i$ and $F'$.
 Thus, by
  Lemma~\ref{lemma:tree}, $i$'s expected utility with 
$(\sigma_i',\vsigma^{\osc}_{-i})$ 
 conditional
  on 
  $\R(I_i) \cap \R(\F(F')) \cap \R(F')$ is
$\frac{a\beta_{0i}}{M} + \frac{(M - a)\beta_{1i}}{M}$. 

With $(\sigma_i,\vsigma^{\osc}_{-i})$, 
it is possible that
$m^2 < m$.
This can happen 
in two ways:
(1) no nonfaulty agent $l \ne i$
is reachable from some agent in $J''$ between $m$ and $f+1$;
and (2) there are sets $J_m,J_m'$ and failure patterns where 
agents in $J_m$ omit messages only to agents in $J_m' \cup \{i\}$ and no nonfaulty agent $l \ne i$ is
reachable from some agent in $J_m'$ between $m$ and $f+1$,
such that $m^2 = m-1$.
\commentout{
(i) $i$ does not hear from some
  set $J$ of agents for the first time in round $m-1$, but otherwise hears
  from all agents that it heard from in round $m-2$, so that if the
  agents in $J$ sent messages to
  all other agents and no other agent failed in round $m-1$, 
  then round $m-1$ would be viewed as the first
  clean round with $\sigma_i$ (since $i$ does not pass along the fact
  that the agents in $J$ failed in round $m-1$), but not with $\sigma_i'$; 
  or (ii) $i$ hears from some agent $j$ that some agent $j'$ that $j$
  thought first failed in round $m'+1$ actually failed in round $m'$,
  and $i$ is the only one to hear this (since $j$ fails after telling
  $i$), so that if $i$ pretends to fail, $m'$ will be the first clean
  round. This can happen, for example, if there exist agents $j_1, 
  \ldots, j_{m-m'-1}$ such that $j_{m-m'-1} = j$, $j'$ sends   a
  message to all agents but 
  $j_1$ in round $m'$, $j_1$ sends only to $j_2$ in round $m'+1$ and
  then fails, $j_2$ sends only to $j_3$ in round $m'+2$, \ldots, and
  $j_{m-m'-1}$ sends only to $i$ in round $m-1$. 
  }
  Whether or not 
(1) or (2)
  can occur depends in part on $I_i$.
  If $\sigma_i$ is used, 
  it follows from the reachability assumption that (conditional on
$\R(I_i) \cap \R(\F(F')) \cap \R(F')$)  (1) and (2) can each occur
  with probability at most $1/2M$ (where $M$ is as above). 
Thus, as long as no inconsistencies are
  detected due to the additional failure of $i$, 
  in case (2), $i$'s utility is $\frac{(a+c)\beta_{0i}}{M+c} +
\frac{(M-a)\beta_{1i}}{M+c}$, where $c = |J_m|$;
  in case (1),
  the utility is
  certainly at most $\beta_{0i}$. If neither 
  (1) nor (2) occur,
  then
$m^2 > m$ and  
   by Lemma~\ref{lemma:tree},
$i$'s expected utility is at most
$\frac{(a-1)\beta_{0i}}{M-1} +
  \frac{(M-a)\beta_{1i}}{M-1}$.
Thus, $i$'s expected utility with $(\sigma_i, \vsigma_{-i}^{\osc})$ conditional on 
$\R(I_i)
\cap \R(\F(F')) \cap \R(F')$ is at most
$$\beta_{0i}\left(\frac{1}{2M} + \frac{1}{2M}\frac{a+c}{M+c} +
\left(1 - \frac{1}{M}\right)\frac{a-1}{M-1}\right)
+ \beta_{1i^*}\left(\frac{1}{2M} \frac{M-a}{M+c}
  + (1 - \frac{1}{M})\frac{M-a}{M-1}\right).$$
It is straightforward to show that the latter
expression is at most   $\frac{a\beta_{0i}}{M} + 
 \frac{(M-a)\beta_{1i}}{M}$.  First note that both utility expressions
  have the form $\beta_{01}x + \beta_{1i}(1-x)$, so it suffices to
  prove that $\frac{a}{M}$ is greater than the coefficient of
  $\beta_{0i}$ in the expression for $i$'s utility with $(\sigma_i',\vsigma^{\osc}_{-i})$.
Then observe that since $\frac{a+c}{
M+c} \le 1$ and 
$(1 - \frac{1}{M})\frac{a-1}{M-1} = \frac{M-1}{M}\frac{a-1}{M-1} =
\frac{a-1}{M}$, the result easily follows.
Thus, $i$ does not gain by deviating conditional on 
$\F(F')$.
The case with $m' = m-1$ is equivalent to only (2) happening,
so the same arguments show that $i$ does not gain from pretending to fail.


This completes the proof for deviations of type 1.
}

Now, consider a deviation of type 2. If $\sigma_i$ is a strategy with
deviations of only types 1 and 2, 
let $\sigma'_i$ be the strategy identical to $\sigma_i$ except that
$i$ does not lie about its initial value
and behaves as if it had not deviated from $\sigma_i$ afterwards.
There is a bijection between runs
of $(\sigma_i, \vsigma_{-i}^{\osc})$ and runs of
$(\sigma_i',\vsigma_{-i}^{\osc})$,
so that two corresponding runs $r$ and $r'$ are identical
except 
that in run $r$ agent $i$ may lie about its initial value and in $r'$ agent $i$
does not. (So, among other things, the random choices made in $r$ and
$r'$ are the same.)  Again, the lie does not affect which round (if
any) will be considered clean nor which agents will be viewed as
nonfaulty in that round. If $i$ is not one of the agents considered
nonfaulty in the clean round, or if $i$ is considered nonfaulty but
$i$ is not the agent whose preference is chosen, then the outcome is
the same in $r$ and $r'$.  If $i$ is the agent whose value is chosen,
then $i$ is worse off if it lies than if it doesn't.   
Thus, $i$ does not gain if it lies about its initial value.  
Again, (\ref{eq:better1}) holds.
Thus, (\ref{eq:better0}) holds for deviations of types 1 and 2.

Finally, we show that (\ref{eq:better0}) holds if we allow deviations of types
\ref{dev-message}--\ref{dev-statuslie}. 
To deal with these,
we proceed by induction on the number of deviations of types
\ref{dev-message}--\ref{dev-statuslie}
in $\sigma_i$,
removing deviations starting
from the earliest deviation.
That is, we consider the information set $I_i$ where the first
deviation of type 3--9 occurs, so that the only deviations prior to
$I_i$ are of type 1 or 2, and show
that we can do better by removing the deviation at $I_i$.
Before getting into the details, we need to state carefully what
counts as a deviation of type 1 or 2 prior to $I_i$.
We try to ``explain'' as much as possible by $i$
pretending to fail, so as to delay the first deviation not of types
1 or 2 as late possible. Thus, if $i$ pretends to fail at
information set $I_i'$
(i.e., sends message according to $\sigma_i^{\osc}$ up to $I_i'$,
sends messages, again according to $\sigma_i^{\osc}$, to some agents
at $I_i'$ and does not send messages to
some agents it does not know to be faulty), and
then sends a message to some agent at some information set $I_i''$
after $I_i'$,
then we say that the first deviation not of types 1 and 2 occurs at $I_i''$
(it is a deviation of type~\ref{dev-crash-nocrash}).

\commentout{
Given a round $m$,
let $\sigma_i^m$ be a strategy identical to $\sigma_i$
except, at
if $i$ did not pretend to crash in $I_i$, then 
$\sigma_i^m$ agrees with
$\sigma_i^{\osc}$ at or after every information set
permutation-equivalent to $I_i$, 
and if $i$ did pretend to crash in $I_i$,
then $i$ does not send any messages at
at or after every information set
permutation-equivalent to $I_i$. 
It is easy to see that $(\sigma_i^m,\vsigma^{\osc}_{-i})$ is
permutation-invariant. 
We show by 
induction on the first round $m$
where $i$ has such a deviation, with a subinduction
on the number of deviations of types \ref{dev-message}--\ref{dev-statuslie},
in round $m$, that
if $I_i$ is a time-$m$ information set,
$F$ is a failure pattern,
$\F$ is a set of failure patterns that satisfies the
permutation assumption relative to $I_i$ and $F$,
and $I_i^*$ is an information set that precedes $I_i$ consistent with 
$\vsigma^{\osc}$ ($\pi_{\vsigma^{\osc}}(\R(I_i^*) \cap \R(\F) \cap \R(F)) > 0$),
then
\begin{equation}
\label{eq:better2}
u_i(\vsigma^{\osc} \mid \R(I_i^*) \cap \R(\F) \cap \R(F)) \ge 
u_i((\sigma_i,\vsigma^{\osc}_{-i}) \mid \R(I_i^*) \cap \R(\F) \cap \R(F)).
\end{equation}
(\ref{eq:one-shot}) clearly follows from this
if we take $I_i^*$ to be the initial information set, $F= \emptyset$, and $\F$ the set
of all failure patterns compatible with $I_i^*$.

Clearly if there are  no deviations of type 3--9 with $\sigma_i$,
then the result follows by the arguments above.  So suppose that there
is at least one deviation of type 3--9.   
}
In the base case, $\sigma_i$ contains no deviations of type
\ref{dev-message}--\ref{dev-statuslie};
we have already shown that (\ref{eq:better0}) holds in this case.
For the inductive step, let $I_i$ be an information set at which
$\sigma_i$ has a deviation of type
\ref{dev-message}--\ref{dev-statuslie} and there are no deviations of
type \ref{dev-message}--\ref{dev-statuslie} prior to $I_i$.
We consider each deviation of type
\ref{dev-message}--\ref{dev-statuslie} in turn. 

\begin{enumerate}
  \item[\ref{dev-message}.] If $i$ sends an incorrectly formatted message
  to $j$, then 
  either $j$ receives this message and decides $\bot$ or 
$j$ crashes before sending any messages to an agent $j' \neq i$ (or
before deciding, if $m=f$).
Let $\sigma_i'$ be the strategy
that is identical to $\sigma_i$ except $i$
sends a correctly formatted message to $j$.
In all cases, $i$ does at least as well if 
$i$ 
uses the strategy $\sigma_i'$
as it does 
using $\sigma_i$.
Thus, (\ref{eq:better0}) follows
from the induction hypothesis.

 \item[\ref{dev-pol}.] If $m = 1$ and $i$ sends values $y_{ij}^t$ to 
an agent $j$ such that there is no polynomial $q_i^t$ of
degree $1$ that 
   interpolates them then either an inconsistency is detected or $i$
   would have done at least as well by choosing these values according to
   some polynomial. (Here and in the remainder of the proof, when we
   say ``an inconsistency is detected'', we mean ``an inconsistency is
   detected by a nonfaulty agent different from $i$''.) 
If $i$ does not choose $q_i^t$ at random, since
$f + 1 < n$, there exists a nonfaulty agent $j \neq i$ that sends
   values based on truly random polynomials. 
Thus, the agent whose preference determines the consensus value is
chosen at random, even if $q_i^t$ is not chosen at random.  So
choosing $q_i^t$ at random does not affect
the 
expected
outcome.  
Again, (\ref{eq:better0}) follows from the induction hypothesis.

\item[\ref{dev:random}.] Suppose that $i$ does not choose $z_{ij}^m$ according to protocol.
  From the perspective of an agent $j' \neq i$
  following the protocol $\sigma_{j'}^{\osc}$, it does not affect the
  outcome
  if these values are not chosen randomly.  So, yet again,
  $i$ does just as well if $i$ chooses the numbers randomly, and
  (\ref{eq:better0}) holds.

\item[\ref{dev-decide}.] Clearly there is no benefit to $i$ deciding on a value 
other than $\bot$ 
early (it can
  decide the same value at round $f+1$) and no benefit in deciding an
  incorrect value 
  (since this guarantees that there is no consensus).
  Thus, yet again, (\ref{eq:better0}) holds.  

\item[\ref{dev-pol2}.] Suppose that $m = f+1$ and $i$ lies about $y_{ji}^t$ to some $l \neq i$ for $j \neq i$.
If it turns out that there are not $n-t$ agents that seem to be nonfaulty
in the first clean round, then the value of 
$y_{ji}^t$ is
irrelevant; it is not used in the calculation. If there are $n-t$
seemingly nonfaulty agents in the clean round, then either an
inconsistency is detected due to the lie 
(if $y_{ji}^t$ is sent to
some nonfaulty agent, who then cannot interpolate a
polynomial through it and the other values received), in which case 
$i$ is clearly worse off, or the sum $S$ computed will be a random
element of $\{0, \ldots, n-t-1\}$, so the initial preference of each
of the seeming nonfaulty agents is equally likely to be chosen whether
or not $i$ lies. Thus, $i$ does not gain by 
lying about 
$y_{ji}^t$, so (\ref{eq:better0}) holds.

\item[\ref{dev-crash-nocrash}.] Suppose that $i$ does not send a
  message in round $m' < m$ 
  to an 
  agent $j$  that $i$ does not know (at round $m$) to have been
  faulty at round $m'$ 
  and then $i$ sends a
  message to 
$j' \ne i$ in round $m$.
If $m' < m-1$, since $m$ is the first round that a deviation of types
\ref{dev-message}--\ref{dev-statuslie} occurs, 
and since $i$ does not know at any round $m'' < m$ that $j$ was faulty
at round $m'$ (since $i$ does not know it at round $m$), 
$i$ does not send messages between rounds $m'$ and $m$.
Thus, sending 
a round $m$ message to $j'$ either leads
to an inconsistency being detected 
  or does not affect the outcome
(which can be the case if $j$ fails before 
deciding $\bot$).
This means that $i$ does at least as well if $i$ does not
send a message to $j'$ at round $m$, so
(\ref{eq:better0}) holds.
So we can assume without loss of generality 
that
$m' = m-1$, and that $m'$
is the first round that $i$ did not send a message to an agent $j$.
Similarly, we can assume that $i$ gets a
message from $j'$ in round $m-1$;
otherwise we can consider the strategy $\sigma_i'$ where $i$
does send a message to $j'$ in round $m-1$, and otherwise agrees with
$\sigma_i$, and again the result follows from the induction hypothesis.

The rest of the proof proceeds much in the spirit of the proof for
deviations of type 1.  We partition $\F$ into subsets $\F_1, \ldots,
\F_4$, and show that, for $j = 1, \ldots, 4$, 
$i$ does at least as well with $\vsigma^\osc$ as with $(\sigma_i,
\vsigma_{-i}^\osc)$ 
conditional on $\R(I_i) \cap \R(F_j)$;
(\ref{eq:better0}) then follows.
As in the case of type 1 failures, $\F_1$ consists of the failure
patterns in $\F$ 
where, with $(\sigma_i,\vsigma^\osc_{-i})$, $f+1$ failures are detected.
Clearly, conditional on
$\R(I_i) \cap \R(\F_1)$, $i$'s utility is 
higher
with $(\vsigma^{\osc})$ than with 
$(\sigma_i,\vsigma_{-i}^{\osc})$.  

Let $\F_2$ be the
set of failure patterns 
in $\F - \F_1$
such that
in runs from $\R(I_i) \cap \R(\F_2)$, 
the agents
that decide do not hear about $i$'s round $m$ 
message to $j'$.
Let $\sigma_i'$ be identical
to $\sigma_i$ except that
at $I_i$
agent $i$
does not send a message to $j'$.
It is not hard to
check that
$\F_2$
satisfies the permutation assumption with respect to $I_i$.
Clearly, $i$ gets
the same utility with 
$\sigma_i$ as with $\sigma_i'$ conditional on 
$\R(I_i) \cap \R(\F_2)$.  
Since, with $\sigma_i'$, $i$ has fewer deviations of types 3--9 than
with $\sigma_i$, by the induction hypothesis, 
(\ref{eq:better0}) holds conditional on
$\R(I_i) \cap \R(\F_2)$.

Now let 
$\F_3$
consist of all failure patterns 
in $\F - \F_1$ such that, with $\sigma_i$, the agents that decide
hear both that  $i$ sent a message to $j'$ in round $m$ and 
that $i$ did not send a message to
some agents in round $m-1$.
Thus, with $\sigma_i$, an inconsistency will be detected, 
so $i$ does at least as well with
$\sigma_i'$ as with $\sigma_i$ conditional on 
$\R(I_i) \cap \R(\F_3)$.  
$\F_3$ also 
satisfies the permutation assumption with respect to $I_i$,
so (\ref{eq:better0}) holds conditional on $R(I_i) \cap \R(\F_3)$ by the induction hypothesis.

Finally, let $\F_4$ be the remaining failure patterns in $\F - \F_1$, the ones 
where agents that decide hear about the message sent by $i$ to $j'$
but not about the omissions of $i$ in round $m-1$.
Let $I_i'$ be the round-$(m-1)$ information set preceding $I_i$,
and let $\sigma_i''$ be a strategy identical to $\sigma_i$,
except that at at $I_i'$
$i$ does not deviate from $\sigma_i^{\osc}$.
Conditional on
$\R(I_i) \cap \R(\F_4)$,
$i$ clearly gets the
same utility with $(\sigma_i,\vsigma_{-i}^{\osc})$ as with
$(\sigma_i'',\vsigma_{-i}^{\osc})$.  
It is not hard to show that $\F_4$ also 
satisfies the permutation assumption with respect to $I_i$.
With $\sigma_i''$, $i$ does not deviate at $I_i$, so $i$ has fewer
deviations of types 3--9 than 
with $\sigma_i$.  Thus, by the induction hypothesis, 
(\ref{eq:better0}) holds conditional on $R(I_i) \cap \R(\F_4)$.
This completes the argument for deviations of type 8.

\commentout{
By the reasonableness assumption, if $M$ is the number of agents that
are not known to be faulty in $I_i$, they will hear that $i$ did not
send a message to $j$ in round $m-1$ with probability at least 
$\frac{M-1}{M}$, so $i$'s expected utility with $(\sigma_i,\vsigma_{-i}^{\osc})$
conditional on $\R(I_i) \cap \R(\F_2)$ is
at most $\frac{\beta_{0i}}{M} +   \frac{(M-1)\beta_{2i}}{M}$.  On
the other hand, 
$i$'s expected utility with $(\sigma_i',\vsigma_{-i}^{\osc})$
conditional on 
$\R(I_i) \cap \R(\F_3)$ is
at least 
$\frac{\beta_{0i}}{M} + \frac{(M-1)\beta_{1i}}{M}$.
Thus, (\ref{eq:better2}) holds.
}

\item[\ref{dev-statuslie}.] Suppose that $i$ lies about $j$'s status to 
an agent $j' \neq i$.
That is, either (a) $i$ says that $j$ did
not crash before round $m'$ although $i$ knows that $j$ did crash in
round $m'-1$; (b) $i$ says that $j$ crashed at or before round $m'$ although
$i$ received a message from $j$ in round $m'$ and either $m' = m-1$ or
$m' < m$ and $i$ did not receive a message from any agent saying that 
$j$ crashed in round $m'$; or
(c) $i$ lies about  
the numbers $z_{ji}^{m-1}$ 
sent by $j$ or about which agent reported that
$j$ crashed. Again we consider each of these cases in turn.
We can assume
without loss of generality
that $i$ did not pretend to crash in $I_i$,
since otherwise the arguments for
deviations of
type 8 would apply.
\begin{enumerate}

  \item Suppose that $i$ lies by saying that $j$ did not crash before $m'$ even
    though $i$ knows that $j$ did in fact crash earlier.
    This means that $i$ is claiming to have received a 
    message from $j$ in round $m'$.
    Clearly, it cannot be the case
    that $i$ knows that $j$ crashed before $m'-1$, because then $i$
    would know that no agent would
    get a message 
from $j$
in round $m'-1$, and an inconsistency would be
    detected by $j'$ if the deviation had any impact on the outcome.
    Thus, we can assume that $j$ in fact crashed in round
    $m'-1$. Since we are assuming that $i$ first
deviates in round $m$, 
$i$ must have learned in round $m-1$ about $j$'s crash in round $m'-1$.	    
    That means that either 
    (i) $m'=m$
    and $i$ did not receive a
    message from $j$ in round $m-1$
    or (ii) $m' < m$ and $i$ must 
    have received a message from some agent $j''$ with this
    information in round $m-1$.    
  We can assume without loss of generality that
  $i$ gets a message from $j'$ in round $m-1$, 
  for otherwise $i$ would do at least as well by not lying to $j$, and
    (\ref{eq:better0}) would hold by the induction hypothesis. 

Consider case (i).  If $m=2$, then $i$ pretending that $j$ did not 
crash in round 1 can help only if this leads to round $1$ being viewed as
clean.  But this is the case only if $j'$ received a message from $j$
in round 1 (although $i$ did not). 
According to $\sigma_i^{\osc}$, $i$'s round $m$ message includes the status
report $SR_i^m$. Agent $i$ must send such a status report even with
$\sigma_i$, otherwise an inconsistency is detected and clearly $i$ is
worse off.  Since $i$ claims to have received a
message from $j$ in round 
1,
$SR_i^m[j]$ has the form 
$(\infty,z_{ji}^1)$,
where 
$z_{ji}^{m-1}[i]$ is the random number sent in round $1$ to all agents.
Given that we have assumed that $j$ also sent a round 1 message to
$j'$, $j'$ also received 
$z_{ji}^1[i] = z_{jj'}^1[j']$.
Thus, $j'$ will detect an
inconsistency and decide $\bot$ unless $i$ correctly guesses 
$z_{ji}^1[i]$.
The
probability of $i$ guessing $z_{ji}^1[i]$ correctly is at most $\frac{1}{n}$.

We now partition $\F$ into three sets of failure patterns $\F_1$,
$\F_2$, and $\F_3$, and show that
for $j = 1, 2, 3$, 
$i$ does at least as well with $\vsigma^\osc$ as with $(\sigma_i,
\vsigma_{-i}^\osc)$.  Again, $\F_1$ consists of the failure
patterns in $\F$ where with $(\sigma_i,\vsigma^\osc_{-i})$, $f+1$
failures are detected. Clearly the claim holds in this case.
$\F_2$ consists of the failure patterns $F'$ in $\F - \F_1$ where the 
message that $i$ sent in $I_i$ has no impact on the outcome; that
is, either $i$ crashes before sending the message to $j'$ or no
nonfaulty agent is reachable from $j'$ without $i$ between round $m+1$ and $f+1$.
Let $\sigma_i'$ be identical to $\sigma_i$
except that, at 
$I_i$, 
$i$ replaces the reports relative to $j$ with $SR_i$ (the correct
report) in messages sent to $j'$, while sending the same messages to
other agents. Thus, $i$ has fewer deviations with $\sigma_i'$ than
with $\sigma_i$.  Clearly, conditional on $\R(I_i) \cap \R(\F_2)$, $i$ gets the same expected utility with
$(\sigma_i,\vsigma_{-i}^\osc)$ as with 
$(\sigma_i',\vsigma_{-i}^\osc)$.  It is easy to check that $\F_2$
satisfies the permutation assumption with respect to $I_i$,
so by the induction hypothesis, (\ref{eq:better0}) holds conditional
on $\R(I_i) \cap \R(\F_2)$.  

Let $\F_3$ consist of the remaining failure patterns in 
$\F$.
In runs of $\R(\sigma_i,\vsigma_{-i}^\osc) \cap \R(I_i) \cap \R(\F_3)$, 
$j'$ detects an inconsistency and decides $\bot$ unless $i$ guesses the
random number correctly.
Again, it is not hard to check
that 
$\F_3$ satisfies the
permutation assumption with respect to $I_i$.
\commentout{
Since $i$'s utility is the same with $\sigma_i'$
as with $\sigma_i$
conditional on 
$\R(I_i) \cap \R(\F_2) \cap \R(F')$,
we have
\commentout{
$$
u_{i^*}((\sigma_i',\vsigma_{K-\{i\}},\vsigma_{-K}^{\osc}) \mid \R(I_i)
\cap \R(F \cup F')\cap \R(\F \cap \F_2) ) =
u_{i^*}((\vsigma_{K},\vsigma_{-K}^{\osc}) 
\mid   \R(I_i) \cap \R(F \cup F')\cap \R(\F \cap \F_2) ).$$
Thus, using the induction hypothesis, we have  
$$u_{i^*}(\sigma^m_K,\vsigma^{\osc}_{-K}) \mid \R(I_i') \cap \R(F \cup F')\cap
\R(\F \cap \F_2) ) \ge 
u_{i^*}((\vsigma_{K},\vsigma_{-K}^{\osc})
\mid   \R(I_i) \cap \R(F \cup F')\cap \R(\F \cap \F_2)).$$
}
$$
u_{i}((\sigma_i',\vsigma_{-i}^{\osc}) \mid \R(I_i) \cap \R(\F_2) \cap \R(F') )=
u_{i}((\sigma_{i},\vsigma_{-i}^{\osc}) \mid \R(I_i) \cap  \R(\F_2) \cap \R(F'))
$$
By the induction hypothesis,
$$
u_{i}(\vsigma^{\osc} \mid \R(I_i^*) \cap \R(\F_2) \cap \R(F')) \ge
u_{i}((\sigma_{i},\vsigma_{-i}^{\osc}) \mid \R(I_i^*) \cap \R(\F_2) \cap \R(F'))
$$
Again, (\ref{eq:better2}) follows by the hypothesis.

Now consider what happens on runs in 
$\R(I_i) \cap \R(\F_3) \cap \R(F'')$.
}
Since the largest 
utility that $i$ can get if no inconsistency is detected
is $\beta_{0i}$,  
$$u_{i}((\sigma_i,\vsigma_{-i}^{\osc}) \mid \R(I_i) \cap \R(\F_3))
\leq \frac{1}{n} 
\beta_{0i} + \frac{n-1}{n} \beta_{2i}.$$

On the other hand, by Lemma~\ref{lemma:tree}, 
$$u_{i}(\vsigma^{\osc} \mid \R(I_i) \cap \R(\F_3))
\geq \frac{1}{n} \beta_{0i} + 
\frac{n-1}{n} \beta_{1i}.$$ 
Since $\beta_{1i} > \beta_{2i}$, we have
$$u_{i}(\vsigma^{\osc} \mid \R(I_i) \cap \R(\F_3)) \geq
u_{i}((\sigma_i,\vsigma_{-i}^{\osc}) \mid \R(I_i) \cap \R(\F_3)).
$$
Therefore, (\ref{eq:better0}) holds if $m=2$.

Continuing with case (i), suppose that $m > 2$.  Now it is possible
that $i$ pretending that $j$ did not crash can help even if $j$ did
not send a message to $j'$. Nevertheless, essentially the same
argument will work.  This is because now $SR_i$ would have to include
$z_{ji}^{m-1}$.  Moreover, $z_{ji}^{m-1}[j'] = z_{j'j}^{m-2}[j']$,
the random number in $\{0,\ldots, n-1\}$ sent by $j'$ to $j$ in round
$m-2$. Clearly, $j'$ knows this number, so $i$ would have to guess it
correctly. The argument now proceeds as above.

\commentout{
If $m>2$ and $i$ received $j$'s value in round 1, then (by the
induction assumption), $i$ included $j$'s value in $ST_i^2$, so not
sending this value in later rounds has no impact on the outcomes; that
is, the utility of $i^*$ is the same with $\sigma_i'$ and $\sigma_i$.
If $i$ did not receive $j$'s value in round 1, then $i$ already told
all agents that $j$ was faulty (by the induction assumption), so
including $j$'s value in $ST_i^m$ has no impact on the outcome; $j$'s
value will not be considered in the decision.  
Finally, suppose that $m=2$.  
In that case, the argument is essentially the same as for failures of
type 9.  For all failure patterns $F$ such that no nonfaulty agent is
reachable from $j$ without $i$ by round $f+1$ given $F$, $i$'s message to $j$
makes no difference---$j$'s value will not be included in the decision.
Let $\F$ consist of all failure patterns where an
inconsistency is detected unless $j$ and $j'$ fail (or, if $j \in
K$, $j$ pretends to fail) in such a way that $i$'s message to $j$
makes no difference.  If $i$'s message to $j$ makes no difference,
then the result follows by the induction hypothesis; if $i$ message to
$j$ makes a difference, we can apply the reasonableness assumption
just as we did for failure of type 9. Thus, again 
(\ref{eq:better2}) holds.
}

Now consider case (ii).     
There are two ways in which $i$ can ignore the information that $j''$
sent about $j$ in round $m-1$. The first is to pretend that $j''$ crashed
in round $m-1$; the second is for $i$ to lie about the message that it
received from $j''$ (but to say that it did get a message from $j''$).
In the first case,
as with deviations of type 8,
we can assume without loss of generality that
$i$ does not know that $j''$ is faulty at the beginning round $m$.
We partition $\F$ into three sets much as in the argument for
case (i): 
$\F_1$, the failure patterns in which more than $f+1$  failures are
detected with $\sigma_i$; $\F_2$, the failure patterns where $i$'s lie
has no impact on the outcome; and $\F_3$, the remaining failure patterns.
Again, it is easy to see that (\ref{eq:better0}) holds conditional on
$\R(I_i) \cap \R(\F_1)$ and
$\R(I_i) \cap \R(\F_2)$.  To see that
(\ref{eq:better0}) holds conditional on
$\R(I_i) \cap \R(\F_3)$,
we use the reachability
assumption, much as we did for
as in (d)
of the argument for deviations of type 1.
By part 1 of the reachability assumption, 
if $i$ pretends that $j''$
crashed in round $m-1$,
an inconsistency will be
detected with probability at least $(2M-1)/2M$.  Thus, the same
argument as that used 
in part (e) of the argument for deviations of type 1
shows that
(\ref{eq:better0}) holds conditional on 
$\R(I_i) \cap \R(\F_3)$.

The analysis is essentially
the same if $i$ lies about the message it received from $j''$,
except that, conditional on 
$\R(I_i) \cap \R(\F_3)$,
by the reachability assumption,
$j'$ receives the round $m-1$ message from $j''$
with probability at least $(2M-1)/2M$,
so $j'$ receives inconsistent reports about $j$'s status in round
$m-1$, and decides $\bot$.

\item 
Suppose that $i$ lies to some $j'$ in round $m$ by saying that
$j$ crashed at or before round $m'$ although $i$ received a message from $j$
in round $m'$ and either $m' = m-1$ or $m' < m-1$ and $i$ did not
receive a message from any agent saying that 
$j$ crashed in round $m'$.  If $m' = m-1$, 
then we can proceed as in part (a). Specifically, 
we can use the reachability
assumption to show that $i$ is better off if $i$ does not lie.

The analysis is similar if $i$ pretends to have received
a message in round $m-1$ from some agent $j''$ saying that $j$ crashed
in an earlier round. 
If $i$ did not receive
a message from $j''$ in round $m-1$ 
saying that $j$ crashed before $m'$
but is claiming to have done so, then we can
again use the same arguments as in part (a)
where either $i$ must guess the
random number $z_{j'j''}^{m-2}[j']$ known by $j'$
(if $j''$ did not send a round $m-1$ message to $i$)
or $i$ has to lie about the round $m-1$ report of $j''$.

\item It is easy to see that $i$ does not gain if $i$ lies about
  which agent told him that $j$ crashed or about the 
values $z_{ji}^{m-1}$ sent by $j$ to $i$ in round $m-1$ (and may be
worse off, if an inconsistency is detected).
\end{enumerate}

\end{enumerate}
This completes the proof of the inductive step and, with it, the proof of
the theorem.
\end{proof}
}

\subsection{A $\pi$-Sequential Equilibrium for Fair Consensus}
Our $\pi$-Nash equilibrium requires an agent $i$ to decide on $\bot$
whenever $i$ detects a problem.  While this punishes the agent that
causes the problem, it also punishes $i$.  Would a rational agent
actually 
play such a punishment strategy?  Note that the need to
punish occurs only off the equilibrium path; if all agents follow
$\vsigma^{\osc}$, agents never decide $\bot$.  But to get agents to
play according to $\vsigma^{\osc}$ requires the threat of playing
$\bot$.  There might be a concern that this is an empty threat; a
rational agent might not be willing to play $\bot$ if it detects a deviation.

The solution concept of \emph{sequential equilibrium} \cite{KW82} is a
refinement of Nash equilibrium that, roughly speaking, requires that
agents also make best responses not only on the equilibrium path, but
off the equilibrium path as well.  We now define \emph{$\pi$-sequential
  equilibrium}, a generalization of 
sequential equilibrium
  that allows for faulty agents
(where, as before, $\pi$ is a distribution on failure contexts).  
We then show that $\vsigma^{\osc}$ is essentially a $\pi$-sequential
equilibrium.  

\fullv{\subsubsection{Defining $\pi$-sequential equilibrium}}
\shortv{\paragraph{\rm {\bf Defining $\pi$-sequential equilibrium}}}
Roughly speaking, a strategy profile $\vsigma$ is a sequential
equilibrium if, for each agent $i$ and information set $I_i$ for agent
$i$, 
$\sigma_i$ is a best response to $\vsigma_{-i}$ conditional on
reaching $I_i$ (i.e. conditional on $\R(I_i)$).  The problem is that
the probability of $\R(I_i)$ is 0 if $I_i$ is not on the equilibrium
path, so we cannot condition on $\R(I_i)$. 

Define a \emph{belief system $\mu$} to be a function that associates
with each agent $i$ and information set $I_i$ for agent $i$ a
probability $\mu_{I_i}$ on histories in $I_i$.  
Say that a belief system $\mu$ is \emph{consistent with $\vsigma$
  and $\pi$} if there exists 
a sequence of \emph{completely mixed} strategy profiles
$\vsigma^1,\vsigma^2,\ldots$ 
(where a strategy profile is completely mixed if it gives positive
positive probability to every action at every information set) 
converging to $\vsigma$ such that
$$\mu_{I_i}(h) = \lim_{M \to \infty} \frac{\pi_{\vsigma^M}(h)}{\pi_{\vsigma^M}(I_i)}.$$
Note that $\mu_{I_i}$, $\pi$, and $\vsigma$ together define a probability 
distribution over runs in $\R(I_i)$. 
Let $\mu_{I_i,\pi,\vsigma}$ denote this probability distribution.

A pair $(\vsigma,\mu)$ is a \emph{$\pi$-sequential equilibrium}
if $\mu$ is a belief system consistent with $\vsigma$ and $\pi$
such that, for every agent $i$,
information set $I_i$, and strategy $\sigma_i'$,
$u_i((\sigma_i,\vsigma_{-i}) \mid \R(I_i)) \ge u_i((\sigma_i',\vsigma_{-i}) \mid \R(I_i))$,
where now the expected utility is taken with respect to
$\mu_{I_i,\pi,\vsigma}$. (Kreps and Wilson's \citeyear{KW82} definition
of sequential equilibrium is 
identical, except that there is no distribution $\pi$ on failure contexts.)

\fullv{\subsubsection{Extending $\vsigma^{\osc}$ to a $\pi$-sequential
    equilibrium}}
\shortv{\eject}
\shortv{\paragraph{\rm {\bf Extending $\vsigma^{\osc}$ to a $\pi$-sequential equilibrium}}}
We now show that the protocol $\vsigma^{\osc}$ can be extended to 
a $\pi$-sequential equilibrium with minimal changes.
In the proof of Theorem~\ref{theo:osc-ne}, we showed that $i$ could
not gain by deviating at an information set $I_i$ where there were 
no deviations 
of type 1--9
prior to $I_i$.
We did not show that $i$ does not gain from deviating at $I_i$
if an inconsistency is detected at $I_i$, so that $i$ is expected to decide $\bot$.
In fact, if $i$ believes that the inconsistency may go unnoticed by
other agents due to crashes and consensus may still be reached on some value in $\{0,1\}$,
then $i$ always gains by not deciding $\bot$.
However, suppose that $\mu^{\se}$ is a  belief system
such that 
at an information set $I_i$ for $i$ that is off the equilibrium path due
to a deviation (or multiple deviations) from 
$\vsigma^{\osc}$ by agents other
than $i$, $i$ believes that these agents decided $\bot$ when
they deviated.  (Intuitively, $i$ believes that if the agents were
crazy enough to deviate in the first place, then they were also crazy
enough to decide $\bot$.)   In that case, deciding $\bot$ is also a
best response for $i$.

%

The belief system $\mu^{\se}$ is not enough to deal with information
sets $I_i$ off the equilibrium path due to $i$ himself having
deviated.  Agent $i$ cannot believe that it played $\bot$ when
it in fact did not.  To get a sequential equilibrium,
we modify $\sigma_i^{\osc}$ at information sets off the equilibrium
path
that are reached due only to agent $i$'s deviations.
Define the strategy  $\sigma^{\se}_i$ so that it agrees with $\sigma_i^{\osc}$
at every information set $I_i$ where agent $i$ has not
deviated in the past.
Thus, in particular, $i$ decides $\bot$ with 
$\sigma_i^{\se}$ if $i$
detects an inconsistency at one 
of these information sets.  More generally, say that an information
set $I_i$ is
\emph{unsalvageable} if $i$ knows at $I_i$
that another agent $j$ 
deviated or detected an inconsistency at a point when
$j$ had not crashed,
and thus decided $\bot$.
$I_i$ is certainly unsalvageable if reaching $I_i$
requires deviations by agents other than $i$ (for then the agent
that performed that deviation 
decided $\bot$).
But even if $i$ is the only agent who deviates at $I_i$, $I_i$ may be
unsalvageable.  For example,  $i$ does not send a message to
$j$ in round $m_1$, $i$ sends a message to $j$ in round $m_2>m_1$, and then
$j$ sent a message to $i$ in round $m_2+1$, 
the round-$(m_2+2)$
information set where $i$ receives $j$'s message is also
unsalvageable.  If $I_i$ is unsalvageable, $i$ decides $\bot$.
Finally, if $I_i$ is salvageable, then at $I_i$ agent $i$ acts in
a way that is most likely to have the other agents think that there
has been no inconsistency.  In general, there may be more than one
failure pattern that will prevent a nonfaulty agent from realizing
that there is an inconsistency.  For example, if $f=1$, $n=3$, and
agent $1$ did not send a message to agent $2$ in round $m$, but did
send a message to agent 3, then $i$ can either not send a message to
any agent in round $m+1$, or it can send a message to agent 3.
If it is more likely that neither 2 nor 3
failed in round $m$ than agent 2 failed before telling agent 3
that it did not hear from 1, then it would be better for $i$ 
not to send a message to 2 or 3 in round $m+1$.  
If there is more than one best response, then $i$ chooses
a fixed one according to some ordering on actions.
(Note that this means that, unlike $\vsigma^{\osc}$, the behavior of
$\vsigma^{\se}$ may depend on $\pi$.)

Having defined $\vsigma^{\se}$, we can now define $\mu^{\se}$
formally.  We assume that there are only 
finitely many actions that $i$ can play at each 
of its information sets $I_i$: it can send one of $K_{I_i}$ possible
messages and/or decide one of 
$\bot$, 0, or 1 if it has not yet made a decision, or do nothing.
Given an integer $M>0$, let $\vsigma^M$ 
be the strategy profile where at each information set $I_i$,
agent $i$ 
plays 
$\sigma_i^{\se}(I_i)$ 
with probability $1-1/M$, and
divides the remaining probability $1/M$ over all the actions
that can be played at $I_i$ as follows: if $i$ has already decided 
before, then $i$ sends each of the 
$K_{I_i}$
possible messages with equal
probability 
$\frac{1}{M(K_{I_i}+1)}$ 
and does nothing with probability
$\frac{1}{M(K_{I_i}+1)}$;
if $i$ has not yet decided at $I_i$, then for 
each of the 
$K_{I_i}$
messages {\bf m} that it can send, it decides $\bot$
and sends {\bf m} with probability 
$\frac{1}{M(K_{I_i}+1)} - \frac{1}{M^2(K_{I_i}+1)}$,
decides $\bot$ and sends no message with
probability 
$\frac{1}{M(K_{I_i}+1)} - \frac{1}{M^2(K_{I_i}+1)}$, 
and performs
each of the remaining
$3(K_{I_i}+1)$ 
possible actions with equal
probability
$\frac{1}{3M^2(K_{I_i}+1)}$.  
Clearly $\vsigma^M$ is completely mixed
and the sequence $\vsigma^M$ converges to $\vsigma^{\se}$.  
Given a round-$m$ information set $I_i$ and global history $h \in I_i$, let
$$\mu_{I_i}^{\se}(h) = \lim_{M \to \infty}
\frac{\pi_{\vsigma^{M}}(h)}{\pi_{\vsigma^{M}}(I_i)}.$$ 
The effect of this definition of $\mu_{I_i}^{se}$  beliefs is that if
$I_i$ is off the equilibrium path as a result of
some other agent $j$'s deviation, then $i$ believes that $j$ played
$\bot$.  Moreover, $i$ believes that other agents $j$ have similar beliefs.

\fullv{
Theorem~\ref{theo:osc-se} shows that $\vsigma^{\se}$ is a 
$\pi$-sequential equilibrium for a reasonable and uniform $\pi$. 
}

\begin{theorem}
\label{theo:osc-se}
If $f + 1 < n$, $\pi$ is a distribution that supports reachability, is
uniform, and allows
up to $f$ failures,  
and agents care only about consensus, then $(\vsigma^{\se},\mu^{\se})$ is a
$\pi$-sequential equilibrium. 
\end{theorem}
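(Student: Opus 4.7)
The plan has two components: establishing consistency of the belief system and verifying sequential rationality at every information set. Consistency is immediate from the construction preceding the theorem statement: the sequence $\vsigma^M$ is completely mixed, converges to $\vsigma^{\se}$, and $\mu^{\se}$ is defined precisely as the pointwise limit of the conditional distributions $\pi_{\vsigma^M}(h)/\pi_{\vsigma^M}(I_i)$. So the work is in verifying that at every $I_i$ and every alternative strategy $\sigma_i'$,
\[
u_i((\sigma_i^{\se},\vsigma_{-i}^{\se}) \mid \R(I_i)) \ge u_i((\sigma_i',\vsigma_{-i}^{\se}) \mid \R(I_i)),
\]
with expectations taken under $\mu_{I_i,\pi,\vsigma^{\se}}^{\se}$. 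I would split this into three cases depending on why $I_i$ is reached.

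Case (a): $I_i$ is on the equilibrium path of $\vsigma^{\se}$. Then $\vsigma^{\se}$ agrees with $\vsigma^{\osc}$ at and after $I_i$, and $\mu^{\se}_{I_i,\pi,\vsigma^{\se}}$ reduces to $\pi_{\vsigma^{\osc}}(\cdot \mid \R(I_i))$, so the claim follows from the conditional best-response statement that the proof of Theorem~\ref{theo:osc-ne} actually establishes information-set by information-set on the equilibrium path. Case (b): $I_i$ is off path because some agent $j \ne i$ has already deviated from $\vsigma^{\se}$. Here the asymmetry built into $\vsigma^M$ is crucial: deviations combined with deciding $\bot$ carry probability $\Theta(1/M)$ whereas deviations that do not decide $\bot$ carry probability $\Theta(1/M^2)$, so in the limit $\mu^{\se}_{I_i}$ concentrates on histories in which every nonfaulty agent that deviated has already decided $\bot$. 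Under this belief, agreement on any value in $\{0,1\}$ is impossible, so every terminal history below $I_i$ yields $i$ utility $\beta_{2i}$, and $\sigma_i^{\se}(I_i) = \bot$ trivially ties the best achievable payoff.

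Case (c): $I_i$ is off path because $i$ itself has deviated. If $I_i$ is unsalvageable, then $i$ knows at $I_i$ that some nonfaulty agent $j$ detected an inconsistency before crashing, and by the construction of $\mu^{\se}$ the same limit-concentration argument as in case (b) makes $i$ believe $j$ decided $\bot$; so again no consensus on $\{0,1\}$ is possible and $\sigma_i^{\se}(I_i) = \bot$ is a best response. If $I_i$ is salvageable, then the only deviator in the support of $\mu^{\se}_{I_i}$ is $i$ itself, and the other agents continue to play $\vsigma^{\osc}$. The proof here reuses the induction from Theorem~\ref{theo:osc-ne}: any further deviation of types \ref{dev-crash}--\ref{dev-statuslie} that $i$ could take from $I_i$ either increases the probability that some nonfaulty agent detects the inconsistency (yielding $\beta_{2i} < \beta_{1i}$) or, by a case-by-case analysis using Lemma~\ref{lemma:tree} together with the reachability and uniformity assumptions, fails to raise the probability of $i$'s preferred value being chosen. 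Since $\sigma_i^{\se}(I_i)$ is defined to maximize the probability of non-detection (with a fixed deterministic tie-breaking), it is a best response among all continuations.

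The main obstacle is case (c) when $I_i$ is salvageable. Unlike the setting of Theorem~\ref{theo:osc-ne}, where $i$ is contemplating a first deviation from $\vsigma^{\osc}$, here $i$ has already deviated and is choosing a continuation, so the baseline against which a further deviation is compared is $\sigma_i^{\se}$ rather than $\sigma_i^{\osc}$. Verifying that ``maximizing the probability of remaining undetected'' really coincides with ``maximizing expected utility'' requires rerunning, at each salvageable $I_i$, the partition-into-$\F_1,\ldots,\F_4$ argument from the proof of Theorem~\ref{theo:osc-ne}. The key reusable ingredient is Lemma~\ref{lemma:tree}, which still applies because the failure-pattern families arising at salvageable $I_i$ continue to satisfy the permutation assumption relative to $I_i$.
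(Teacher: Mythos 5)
Your proposal follows essentially the same route as the paper's proof: consistency of $\mu^{\se}$ is immediate from the construction of the perturbed profiles $\vsigma^M$, and sequential rationality is verified by the same three-way case split (on-path, reducing to the conditional best-response argument of Theorem~\ref{theo:osc-ne}; unsalvageable off-path, where the belief concentration on other agents having decided $\bot$ makes $\bot$ a best response; salvageable off-path, where the induction and partition argument of Theorem~\ref{theo:osc-ne} is rerun against the continuation). Your explicit observation that the $\Theta(1/M)$ versus $\Theta(1/M^2)$ asymmetry in $\vsigma^M$ is what forces the limiting beliefs to concentrate on $\bot$-deciders, and your flagging that ``maximizing non-detection probability'' must be reconciled with ``maximizing expected utility'' in the salvageable case, are points the paper treats more tersely but not differently.
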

\fullv{
\begin{proof}
Fix an agent $i$, a round-$m$ information set $I_i$, and strategy $\sigma_i$.
It is easy to see that $\mu^{\se}$ is consistent. Thus, it suffices to show that
\begin{equation}
\label{eq:osc-se}
u_{i}((\sigma_i^{\se},\vsigma^{\se}_{-i}) \mid \R(I_{i})) \ge u_{i}((\sigma_i,\vsigma_{-i}^{\se}) \mid \R(I_{i})).
\end{equation}
%
We need to consider the cases where (a) $I_i$ is consistent with $\vsigma^{\se}$,
(b) $I_i$ is inconsistent with $\vsigma^{\se}$ and unsalvageable, 
and (c) $I_i$ is inconsistent with
$\vsigma^{\se}$ and salvageable. 
In case (a), $\sigma^{\se}_i$ agrees with $\sigma_i^{\osc}$; the
argument of the proof of Theorem~\ref{theo:osc-ne} shows that it is a
best response.  In case (b), the definition of $\mu^{\se}$ guarantees
that $i$ ascribes probability 1 to whichever agent has deviated or
detected a deviation playing $\bot$, so it is a best response for $i$
to play $\bot$.   Finally, in case (c), for failure patterns where
some other agent $j$ detects $i$'s deviation, $i$ ascribes probability 1
to $j$ playing $\bot$, so it does not matter what $i$ does.  On the
other hand, for failure patterns where all the nonfaulty agents will
consider it possible that there are no deviations, the proof of
Theorem~\ref{theo:osc-ne} shows that $i$ should continue to play in a
way consistent with $\sigma_i^{\osc}$.  If there are several choices
of how to play that might be consistent with $\sigma_i^{\osc}$, then
$i$ should clearly play one that is best.
\commentout{
Specifically, we can show that if $m' \ge m$ is the first round at which $i$ deviates according
to deviations of type \ref{dev-crash}--\ref{statuslie} in runs of $\R(I_i)$
and $I_i'$ is a round-$m'$ information set at which $i$ deviates, then 
we can define $\sigma_i'$ identical to $\sigma_i$, except that $i$ follows $\sigma_i^{\se}$
at $I_i'$ and at every subsequent information set, such that
\begin{equation}
\label{eq:se-better}
u_i((\sigma_i',\vsigma_{-i}^{\se}) \mid \R(I_i)) \ge
u_i((\sigma_i,\vsigma_{-i}^{\se}) \mid \R(I_i)). 
\end{equation}

Since the round-$m'$ information sets partition $\R(I_i)$, to prove (\ref{eq:se-better})
it suffices to show that for every round-$m'$ information set $I_i''$ succeeding $I_i$, we have 
\begin{equation}
\label{eq:se-better2}
u_i((\sigma_i',\vsigma_{-i}^{\se}) \mid \R(I_i) \cap \R(I_i'')) \ge u_i((\sigma_i,\vsigma_{-i}^{\se}) \mid \R(I_i) \cap \R(I_i'')).
\end{equation}

This is clearly true if $I_i'' \ne I_i'$. To see that (\ref{eq:se-better2}) is also true for $I_i'$,
notice that $\R(I_i) \cap \R(I_i') = \R(I_i')$, since $\R(I_i') \subseteq \R(I_i)$.
Moreover, $i$ does not expect other agents to deviate from $\vsigma^{\se}$,
so $i$ believes that other agents will decide $\bot$ if an inconsistency is detected
due to a deviation of $i$. Thus, the same arguments used to 
prove (\ref{eq:better0}) in Theorem~\ref{theo:osc-ne} can be used to
show (\ref{eq:se-better2}) 
and consequently (\ref{eq:se-better}).
As before, we can use an induction on the number of information sets at which $i$ deviates
to show that (\ref{eq:osc-se}) follows from (\ref{eq:se-better}).

Now, consider case (ii). By the definition of $\mu^{\se}$, $i$ believes with probability $1$
that some agent $j\neq i$ deviated from $\sigma_j^{\se}$ by deciding $\bot$ 
and following a random action (with probability $\epsilon$). 
Thus, the expected utility of $i$ is $\beta_{2i}$, regardless
of what $i$ does, and $i$ does not gain from not deciding $\bot$, 
such that (\ref{eq:osc-se}) holds.

Finally, in case (iii), $\sigma_i^{\se}$ is by construction a best response strategy, so it
must satisfy (\ref{eq:osc-se}). This concludes the proof.
}
\end{proof}
}

\section{Discussion}
\label{sec:discussion}
We have provided a strategy for consensus that is a $\pi$-Nash equilibrium
and can be extended to a $\pi$-sequential equilibrium, where $\pi$ is
a distribution on contexts that allows up to $f$ failures and
satisfies minimal conditions, as long as $n > f+1$.  
Although our argument is surprisingly complicated, we have considered
only the simplest possible case: synchronous systems, crash failures,
and only one player deviating (i.e., no coalitions).
A small variant of
our strategy also gives a Nash and sequential equilibrium even if
coalitions are allowed, but proving this seems significantly more
complicated.
We are currently writing up the details carefully.
Of course, things will get even worse once we allow more
general types of failures, such as omission failures and Byzantine
failures.  But such failure types, combined with rational agents, are
certainly of interest if we want to apply consensus in, for example,
financial settings of the type considered by Mazi{\`e}res
\citeyear{Maz15}.  Consensus is known to be impossible in an
asynchronous setting, even with just one failure \cite{FLP}, but
algorithms that attain consensus with high probability are well known
(e.g., \cite{Aspnes03}).  We may thus hope to get an
$\epsilon$--$\pi$-Nash equilibrium in the asynchronous setting if we
also allow rational agents.  We believe that the techniques developed in
this paper will be applicable to these more difficult problems.

It is also worth examining our assumptions regarding distributions in
more detail.  The uniformity assumption implies that no agent is more
likely to fail than any other.  If all agents can be identified with
identical computers, then this seems quite reasonable.  But if one
agent can be identified with a computer that is known to be more prone
to failure, then the uniformity assumption no long holds.    Note
that the uniformity assumption does allow for correlated failures,
just as long as the permutation  of a correlated failure is just as
likely as the unpermuted version.

Now consider the assumption that $\pi$ supports reachability.   If we are
considering Nash equilibrium (where there is only one deviating
agent), the 
assumption says that the probability, conditional on an information
set $I_i$ (and some assumptions about failures), that
some information (about a 
message sent by an agent that crashes or about the fact that an agent
crashed in a particular round) is quite high, where ``quite high'' is
a function of the number of agents $M$ that are nonfaulty according to
$I_i$.  Since the more nonfaulty agents there are, the more likely it
is that an agent $l \ne i$ is reachable from $j$ without $i$.  

\fullv{
However, once we allow coalitions $K$ of agents, it becomes less
likely that $l \notin K$ is reachable from $j$ without $K$ with
 probability $1/2M$, not taking $K$ into account. To take an
extreme example, suppose that $|K| = k$, $f=1$, and $n =
k+2$.  Now suppose that $i$ receives  a message from agent $j''$ in
round 3 that some other agent $j$, from whom $i$ got a round 1
message, crashed in round 1.  Further suppose that round 1 would be
considered clean if $j$ did not crash in round 1 and that $i$'s utility
would be higher if round 1 is considered clean rather than a later
round.  Thus, it may be to $i$'s benefit not to forward $j$'s
message; if $j$ in fact crashes without any nonfaulty agent hearing 
$j$'s message, $i$ will be better off.  Since all the agents in $K$
can coordinate in not forwarding $j$'s message, $j$'s message will
reach a nonfaulty agent only if either $j$ is nonfaulty, or $j$
crashes either after round 2 or crashes at round 2, but still sends a
message to the nonfaulty agent that is not in $K$ before crashing.
Since $M=n$ in this case, this means that $j$'s message must reach a
nonfaulty agent with probability at least $\frac{2n-1}{2n} =
\frac{4k+3}{4k+4}$, independent of $k$.  For small $k$, this seems
quite reasonable; for large $k$, it does not.  This suggests that this
assumption is appropriate 
if $k+f$ is not 
a large
fraction of $n$.
}

Our final comment concerns the fairness assumption.  While this
assumption distinguishes our work from some of the other related work
(e.g., \cite{AGLS14,BCZ12}), since, as we observed above, a consensus
protocol must essentially implement a randomized dictatorship,
achieving fairness  once we get consensus in the
presence of rational and faulty agents is not that difficult; we must
simply ensure that the 
rational agents cannot affect the probability of a particular agent
being selected as dictator.  We enforce this using appropriate
randomization in our protocol.  The requirement in \cite{BCZ12} that
consensus must be achieved no matter what the deviating agents do
turns out to have far more impact on the technical results than the
fairness requirement.  

In any case, we believe that the need for dealing with both rational
and faulty agents in consensus protocols is compelling. There is
clearly much more to be done on this problem.  
%

\shortv{\bigskip \noindent {\bf Acknowledgments:} Joe Halpern's work
  supported in 
part by NSF grants  IIS-0911036 and CCF-1214844, and by AFOSR grant
FA9550-12-1-0040,
and ARO grant W911NF-14-1-0017. 
Xavier Vila\c{c}a has been partially supported by Funda\c c\~{a}o para
a Ci\^{e}ncia e Tecnologia (FCT)  
through projects with references PTDC/EEI-SCR/1741/2014 (Abyss), UID/CEC/50021/\-20\-13,
and ERC-2012-StG-307732, and through the PhD grant SF\-RH\-/BD\-/79822/2011.
\vspace{.1in}}


\fullv{\bibliographystyle{chicago}}
\shortv{\bibliographystyle{abbrv}}
\bibliography{z,joe}

\end{document}